\documentclass[a4paper,fleqn]{cas-dc}

\usepackage{amsmath,amssymb,amsfonts,amsthm}
\usepackage{mathtools}
\usepackage{bm}
\usepackage{float}
\usepackage{wrapfig}
\usepackage{graphicx}
\usepackage{booktabs}
\usepackage{multirow}
\usepackage{array}
\usepackage{adjustbox} % adjustbox for fit-to-column tables/figures % prevents floats crossing sections
\usepackage{enumitem}
\usepackage[section]{placeins}
\usepackage[numbers]{natbib}

\usepackage{algorithm}
\usepackage{algpseudocode}

\usepackage{tikz}
\usetikzlibrary{arrows.meta,positioning,fit,calc,backgrounds}
\pgfdeclarelayer{background}
\pgfsetlayers{background,main}

\newcommand{\R}{\mathbb{R}}

\newcommand{\diag}{\mathrm{diag}}
\newcommand{\sat}{\mathrm{sat}}
\newcommand{\Lcal}{\mathcal{L}}
\newcommand{\Hcal}{\mathcal{H}}

\newtheorem{assumption}{Assumption}
\newtheorem{lemma}{Lemma}
\newtheorem{theorem}{Theorem}
\newtheorem{proposition}{Proposition}
\newtheorem{corollary}{Corollary}
\newtheorem{definition}{Definition}
\newtheorem{remark}{Remark}
\newtheorem{problem}{Problem}

\begin{document}
%==============================================================
% Front matter
%==============================================================
\let\WriteBookmarks\relax
\def\floatpagepagefraction{1}
\def\textpagefraction{.001}

\shorttitle{Observer-Assisted Relative-Velocity Compensation with LPV-$\Hcal_\infty$ Correction for AUVs}
% \shortauthors{Anonymized}

\title[mode = title]{Observer-Assisted Relative-Velocity Compensation
with LPV-$\mathcal{H}_\infty$ Robust Correction for 3D Trajectory
Tracking of Underactuated Non-Minimum-Phase AUVs under Ocean Currents}

\author[1]{Mohammad Sabouri}[orcid=0000-0000-0000-0000]
\cormark[1]
\ead{S5659227@studenti.unige.it}
\credit{Conceptualization, Methodology, Software, Formal analysis,
Investigation, Validation, Visualization, Writing -- original draft,
Writing -- review \& editing}

\affiliation[1]{organization={Department of Informatics, Bioengineering,
                Robotics and Systems Engineering (DIBRIS), University of
                Genoa},
                addressline={Via all'Opera Pia 13},
                city={Genoa},
                postcode={16145},
                country={Italy}}

\cortext[1]{Corresponding author.}
\shortauthors{M. Sabouri}

\begin{abstract}
This paper develops an observer-assisted control architecture for
3D trajectory tracking of torpedo-type underactuated AUVs with
non-minimum-phase sway/heave dynamics under unknown ocean currents.
A three-stage state--current observer provides relative-velocity
estimates to a nonlinear feedforward term for dominant current
rejection and to an LMI-certified LPV-$\Hcal_\infty$ correction layer.
Feedback-linearising cancellation yields a constant input matrix,
enabling convex synthesis without pairwise cross terms. A
residual-level break-even law shows that the effective surge
disturbance depends on current-estimation error, while a
singular-perturbation analysis proves local practical UUB on the
embedded LPV model. REMUS simulations over three trajectories and
four current scenarios show $89$--$96\%$ current-estimation reduction,
about $99\%$ translational residual reduction, and RMS error reduction
from $4.04$~m to $0.24$~m.
\end{abstract}

\begin{keywords}
Autonomous underwater vehicles \sep Underactuated control \sep
Non-minimum phase \sep LPV-$\mathcal{H}_\infty$ control \sep
Ocean current estimation \sep Trajectory tracking
\end{keywords}

\maketitle

%==============================================================
\section{Introduction}\label{sec:introduction}
%==============================================================

\subsection{Motivation}\label{subsec:motivation}

Autonomous underwater vehicles (AUVs) are central platforms in
oceanographic survey, subsea inspection, and environmental monitoring
\citep{fossen2011handbook,wynn2014autonomous}. Torpedo-type AUVs are
the dominant class, but they are inherently underactuated and exhibit
a non-minimum-phase structure: rudder and stern-plane actuators
generate yaw and pitch moments while simultaneously inducing sway and
heave accelerations, manifested as non-zero off-diagonal entries in
the input matrix \citep{prestero2001verification,godhavn1996nonlinear,
li2023trajectory}. This actuator coupling complicates recursive
control design and rules out a direct extension of horizontal-plane
results to three-dimensional trajectory tracking
\citep{do2002underactuated,jiang2002global,ashrafiuon2008sliding}.

A central practical difficulty is that the ocean current --- the
dominant exogenous disturbance --- enters the dynamics through the
\emph{relative} velocity $\bm\nu_r=\bm\nu-\bm\nu_c$ and therefore acts
multiplicatively on the damping, lift, and Coriolis terms
\citep{fossen2011handbook,ahmed2023survey}. Treating it as an additive
lumped disturbance --- as in the majority of robust and adaptive
designs --- discards this structural information.

\subsection{Present study}\label{subsec:present_study}

Three-dimensional path-following and trajectory-tracking control for
torpedo-type AUVs has been pursued via robust adaptive backstepping
\citep{wang2019command,li2023trajectory}, neural-network feedback
\citep{li2020neural}, finite-time control \citep{yu2019globally},
prescribed-performance disturbance-observer control
\citep{sun2024prescribed}, and recently reinforcement-learning-based
backstepping with prescribed performance \citep{chen2024prescribed}.
The work of \citet{li2023trajectory} is closest to the present line
of investigation: it introduces a spherical-coordinate transformation
that reduces the 3D kinematics to a three-input three-output feedback
form together with an exponential modification of orientation (EMO)
that removes the singularity when the position error vanishes. We
adopt that geometric backbone and develop a controller for it.

Ocean disturbances have been addressed through disturbance-observer-based
control \citep{guerrero2020adaptive,he2024nonlinear,lei2025constrained},
indirect adaptive disturbance observers in the line-of-sight framework
\citep{du2023improved}, extended-state-observer line-of-sight guidance
for marine craft \citep{song2024cascaded,du2025improved,yuan2025geometric},
tube-based model predictive control under saturation
\citep{jimoh2024tube}, and event-triggered disturbance-observer control
for surface vehicles \citep{ning2024event}. Direct ocean-current
estimation has been studied in
\citep{kim2018current,kim2020current,refsnes2007model}. In a
kinematic setting, a 3D path-following controller for an underactuated
AUV subject to a constant unknown current was proposed in
\citep{sabouri2026kinematic}, employing a current estimator based on
the relative position between the vehicle and a (possibly virtual)
moving target together with a Lyapunov-based switching guidance law;
the present paper develops the dynamic-level, observer-assisted
relative-velocity counterpart with a certified LPV-$\Hcal_\infty$
correction.

Linear parameter-varying control bridges linear $\Hcal_\infty$
synthesis and nonlinear dynamics through gain scheduling on measurable
parameters \citep{shamma1990analysis,apkarian1995self,wu1995induced,
scherer2000linear}, with marine applications including path following
\citep{rober2020lpv}, depth control \citep{silvestre2007depth}, and
course keeping \citep{borkowski2012lpv}.

Two limitations cut across these lines. \emph{First}, fixed-gain
robust and adaptive designs do not exploit the structured variation
of the vehicle dynamics across speed, attitude, and depth; the
gain-scheduling line that does \citep{rober2020lpv,silvestre2007depth,
borkowski2012lpv} predominantly addresses fully actuated or planar
cases. \emph{Second} and more fundamentally, what is missing is not
the use of current observers or gain scheduling per se but an explicit
\emph{residual-level break-even law} that quantifies \emph{when}
estimated-relative-velocity scheduling improves over absolute-velocity
compensation, together with the observer-side excitation condition
that determines its applicability, in the 3D underactuated
non-minimum-phase setting.

\subsection{Objectives and contributions}\label{subsec:contributions}

This paper develops an observer-assisted control architecture for
3D trajectory tracking of underactuated non-minimum-phase AUVs under
unknown ocean currents. Building on the geometric formulation of
\citet{li2023trajectory}, the proposed method combines a nonlinear
relative-velocity feedforward term, which provides the dominant
hydrodynamic current compensation, with an LMI-certified scheduled
LPV-$\mathcal H_\infty$ correction layer, which provides the robust
performance margin. The main contributions are as follows.

First, a three-stage state–current observer is developed to estimate
the unmeasured sway/heave velocities and the body-frame ocean current.
The observer exploits both actuator-coupled dynamic residuals and
kinematic innovations, and its sufficient excitation condition is
expressed through a two-dimensional directional Gramian.

Second, a residual-level break-even law is derived for the
relative-velocity feedforward mechanism. The result shows that, after
compensation at the estimated relative velocity, the effective
surge-channel disturbance scales with the current-estimation error
rather than with the current itself.

Third, a polytopic LPV-$\mathcal H_\infty$ correction layer is
formulated on the pre-stabilised tracking model. Because the
feedback-linearising cancellation yields a constant input matrix, the
scheduled synthesis remains convex and avoids the pairwise cross-term
LMIs that would arise with a parameter-dependent input matrix.

Fourth, a singular-perturbation stability analysis establishes local
practical uniform ultimate boundedness of the embedded
observer–controller interconnection. The result is obtained under
the stated excitation, embedding-margin, unsaturated-cone, and
small-gain conditions.

Finally, a structured REMUS simulation study separates the roles of
observer, feedforward, scheduling, and LPV correction. The results
identify the observer-assisted relative-velocity feedforward as the
dominant performance mechanism, while the LPV-$\mathcal H_\infty$
layer provides the certified robustness mechanism and a conditional
scheduling margin under reduced feedback authority.

% \subsection{Paper organisation}\label{subsec:organisation}

% Section~\ref{sec:problem} introduces foundational concepts, the
% layered analytical roadmap, the AUV system description, the
% spherical-coordinate reduction, and the problem statement.
% Section~\ref{sec:main} develops the three-stage observer, the
% scheduling map, the LPV embedding, the LMI synthesis, and the
% closed-loop stability theorem; all proofs appear inline with the
% corresponding theorem. Section~\ref{sec:simulation} reports the
% numerical study, including conditional-benefit validation, robustness
% verification under noise and parameter uncertainty, and comparison
% with \citet{li2023trajectory}. Sections~\ref{sec:discussion} and
% \ref{sec:conclusion} close with discussion and conclusion.
\subsection{Paper organisation}\label{subsec:organisation}
Section~\ref{sec:problem} presents the foundational concepts,
analytical roadmap, AUV description, spherical-coordinate reduction,
and problem statement. Section~\ref{sec:main} develops the
three-stage observer, scheduling map, LPV embedding, LMI synthesis,
and closed-loop stability theorem, with proofs inline.
Section~\ref{sec:simulation} reports the numerical study
(conditional-benefit validation, noise/parameter robustness, and
comparison with \citet{li2023trajectory}).
Section~\ref{sec:conclusion} concludes the paper.

%==============================================================
\section{System Description and Problem Formulation}\label{sec:problem}
\subsection{Layered analytical roadmap of the proposed design}
\label{subsec:framework}

The design is developed through a layered analytical roadmap that
clarifies the derivation sequence and the role of each mathematical
object; it is not the real-time closed-loop signal flow, which is
given later in Section~\ref{subsec:closed_loop_architecture}. The
roadmap consists of five layers:
$\mathcal P_0$ is the full 5-DOF underactuated AUV model with
input $\bm\tau\in\R^3$, state $(\bm\eta,\bm\nu)\in\R^{10}$, and
exogenous current $\bm V_c^n$; $\mathcal P_1$ is the reduced
spherical-coordinate tracking model, whose feedback-linearised
pre-stabilisation yields the LPV form
$\dot{\bm x}_e=A_{\rm pre}(\bm\rho)\bm x_e+
B_u\delta\bm v_{\Hcal_\infty}+E(\bm\rho)\bm w$ with constant
input matrix $B_u$; $\mathcal O$ is the joint state--current
observer producing $(\hat{\bm\nu}_r,\hat{\bm\nu}_c)$; $\mathcal S$
is the estimated scheduling map
$\hat{\bm\rho}=\Phi(\hat{\bm\nu}_r,\theta,p_e)$; and $\mathcal C$
is the scheduled LPV-$\mathcal H_\infty$ correction layer with
gain $K(\bm\rho)=(\sum_i\lambda_i(\bm\rho)W_i)Y^{-1}$.
The implemented closed-loop block diagram is shown in
Fig.~\ref{fig:arch}.

\subsection{5-DOF underactuated AUV system description}
\label{subsec:dof5}

Consider a torpedo-type AUV in five degrees of freedom (roll
stabilised by design \citep{fossen2011handbook}). With
$\bm\eta=[x,y,z,\theta,\psi]^{\!\top}$,
$\bm\nu=[u,v,w,q,r]^{\!\top}$, and control input
$\bm\tau=[\tau_u,\tau_q,\tau_r]^{\!\top}\in\R^3$, the kinematics and
dynamics read
\begin{equation}\label{eq:kindyn}
\dot{\bm\eta}=J(\theta,\psi)\bm\nu,\quad
\dot{\bm\nu}=\bm f(\bm\nu_r,\theta)+G\bm\tau+\bm d,
\end{equation}
with $\bm d\in\R^5$ bounded residual disturbance, and
\begin{equation}\label{eq:Gmat}
G=\begin{bmatrix}
g_u & 0   & 0\\
0   & 0   & \epsilon_r\\
0   & \epsilon_q & 0\\
0   & g_q & 0\\
0   & 0   & g_r
\end{bmatrix}.
\end{equation}
The positive entries $g_u,g_q,g_r$ are the direct surge, pitch, and
yaw actuator coefficients (acceleration-normalised), and the
coefficients $\epsilon_q,\epsilon_r\neq 0$ are the non-minimum-phase
coupling terms characteristic of torpedo-type AUVs
\citep{prestero2001verification}, using the same convention. In
particular, the rudder ($\tau_r$) produces a sway acceleration
$\epsilon_r\tau_r$, and the stern plane ($\tau_q$) produces a heave
acceleration $\epsilon_q\tau_q$.

\begin{remark}[Non-minimum-phase coupling]\label{rem:nmp}
The presence of $\epsilon_q,\epsilon_r\neq 0$ alone does not establish
non-minimum-phase behaviour. The output zero dynamics associated with
$\bm y=(x,y,z)$ are obtained after the kinematic chain
$\dot{\bm\eta}=J(\theta,\psi)\bm\nu$ is cascaded with~\eqref{eq:Gmat}
and contain unstable internal modes driven by the sway/heave channels,
as established in \citep[Sec.~II]{li2023trajectory}. We adopt the
terminology ``non-minimum-phase coupling'' for $(\epsilon_q,\epsilon_r)$.
\end{remark}

\textit{Ocean current via relative velocity.}
Let $\bm V_c^n\in\R^3$ be the navigation-frame current and
$R(\theta,\psi)\in\mathrm{SO}(3)$ the body-to-navigation rotation
under the roll-stabilised 5-DOF representation. The body-frame
current is $\bm\nu_c=R^{\!\top}(\theta,\psi)\bm V_c^n=[u_c,v_c,w_c]^{\!\top}$,
and the water-relative velocity is
$\bm\nu_r=[u_r,v_r,w_r,q,r]^{\!\top}$ with $u_r=u-u_c$, $v_r=v-v_c$,
$w_r=w-w_c$.

Even under a constant navigation-frame current, the body-frame
current varies through vehicle rotation. Under the roll-stabilised
5-DOF convention, the body-frame angular velocity vector is
\begin{equation}\label{eq:omega_b}
\bm\omega_b:=[0,q,r]^{\!\top}\in\R^3,
\end{equation}
where $(q,r)$ are body-frame angular-velocity components, not Euler
rates \citep[Sec.~2]{fossen2011handbook}. The body-frame current
evolves as
\begin{equation}\label{eq:nuc_dot}
\dot{\bm\nu}_c=-\bm\omega_b\times\bm\nu_c+R^{\!\top}\dot{\bm V}_c^n.
\end{equation}
A steady navigation-frame current thus generates time-varying
body-frame disturbances proportional to $(q,r)$, while truly
time-varying currents contribute the second term in~\eqref{eq:nuc_dot}.
This decomposition is the analytical backbone of the observer in
Section~\ref{subsec:observer}.

\begin{assumption}[Current]\label{ass:current}
$\bm V_c^n$ satisfies one of:
\textnormal{(C1)} $\dot{\bm V}_c^n=\bm 0$, $\|\bm V_c^n\|\le V_{cM}$;
\textnormal{(C2)} $\|\dot{\bm V}_c^n\|\le\bar c_1$;
\textnormal{(C3)} $\bm V_c^n=\bm V_{c0}^n+\bm A_c\sin(\omega_c t)$
with bounded $\bm V_{c0}^n,\bm A_c,\omega_c$.
\end{assumption}

\begin{assumption}[Motion and measurement]\label{ass:meas}
The vehicle operates with $u>0$ and
$\theta\in(-\pi/2+\theta_M,\pi/2-\theta_M)$ for some $\theta_M\in(0,\pi/2)$.
Measured signals are position $\bm\eta_p=[x,y,z]^{\!\top}$,
attitude $[\theta,\psi]^{\!\top}$, surge speed $u$ (single-axis DVL
bottom-track), and angular rates $[q,r]^{\!\top}$;
sway $v$, heave $w$, and current $\bm\nu_c$ are unmeasured.
\end{assumption}

\begin{assumption}[Reference]\label{ass:ref}
The reference $(x_d,y_d,z_d,\theta_{ld},\psi_{ld})$ is twice
continuously differentiable, with
$|\theta_{ld}|\le\theta_{ldM}$,
$\theta_{ldM}\in(0,\pi/2)$, and bounded first and second derivatives.
There exists $\bar V_d^{\max}>0$ such that
$\|\dot{\bm p}_d(t)\|\le\bar V_d^{\max}$ for all $t\ge 0$, where
$\bm p_d=[x_d,y_d,z_d]^{\!\top}$.
\end{assumption}

\begin{assumption}[Compact operating set and damping
identifiability]\label{ass:operset}
There exist positive constants
$u_{\min}$, $u_{\max}$, $q_{\max}$, $r_{\max}$,
$\theta_{\max}\in(0,\pi/2)$,
$V_{cM}$, $p_{e,\max}$, $\bar\tau$, and $\sigma_d$
such that the closed-loop trajectory remains in the compact set
\begin{equation*}
\mathcal X=\left\{\,\bm x:\;
\begin{aligned}
&u\in[u_{\min},u_{\max}],\;|\theta|\le\theta_{\max},\\
&|q|\le q_{\max},\;|r|\le r_{\max},\\
&\|\bm\nu_c\|\le V_{cM},\;p_e\in[0,p_{e,\max}]
\end{aligned}\,\right\},
\end{equation*}
and the following identifiability and admissibility conditions hold
uniformly on $\mathcal X$:
\begin{itemize}
\item[\textnormal{(A4a)}] \emph{Damping identifiability}:
\begin{equation}\label{eq:m_drag_def}
m_{\rm drag}:=\inf_{|s|\le\bar u_{r,\max}}\!|X_u+2X_{u|u|}|s||\ge\sigma_d>0.
\end{equation}
\item[\textnormal{(A4b)}] \emph{Unsaturated cone for the formal certificate}:
$\|\bm\tau\|_\infty<\bar\tau$, where $\bar\tau$ is the actuator bound.
\end{itemize}
The current-rate bound is inherited from
Assumption~\ref{ass:current}(C2)/(C3) and is denoted $\bar c_1$
throughout.
\end{assumption}

\begin{remark}[Two-region EMO analysis]\label{rem:two_region}
The Lemma~\ref{lem:emo} bound assumes $p_e\ge\zeta>0$ for the EMO
regularisation to be active and the spherical coordinates regular.
For $p_e<\zeta$ the practical-UUB statement of
Theorem~\ref{thm:cl} is immediate, since the position error is
already inside a terminal ball of radius $\zeta$. The compact set
$\mathcal X$ above therefore does \emph{not} exclude $p_e<\zeta$; the
analysis of Theorem~\ref{thm:cl} simply splits the EMO bound across
the two regions.
\end{remark}

\begin{remark}[Status of \textnormal{(A4b)}]\label{rem:A4b}
\textnormal{(A4b)} is the standing assumption \emph{for the formal
certificate of Theorem~\ref{thm:cl}}. The numerical study in
Section~\ref{sec:simulation} enforces actuator saturation explicitly
and reports the empirical surge-saturation duty cycle; the
saturation-stress test of Section~\ref{subsec:saturation_stress}
quantifies the rate at which the practical-UUB guarantee degrades
when \textnormal{(A4b)} is violated. The certificate is therefore a
\emph{local} guarantee in the unsaturated cone, not a global one.
\end{remark}

\subsection{Spherical-coordinate reduction}\label{subsec:reduction}

Let $u_l=\sqrt{u^2+v^2+w^2}$ be the total body speed and define the
body-flow angles $\theta_a=\arctan(-w/u)$,
$\psi_a=\arctan(v/\sqrt{u^2+w^2})$.
The polar/azimuth angles of the velocity vector are
\begin{equation}
\theta_l=\arcsin[c_{\psi_a}s_{\theta-\theta_a}],\;\;
\psi_l=\psi+\arctan[t_{\psi_a}\sec(\theta-\theta_a)].
\end{equation}
With position error $p_e=\|\bm p_d-\bm p\|$, the spherical-coordinate
reduction of \citet{li2023trajectory} yields the reduced three-input
three-output tracking form
\begin{equation}\label{eq:reduced_kin}
\begin{aligned}
\dot p_e &= u_{ld}A_{ld}-u_l A_l,\\
\dot\theta_{le} &= \dot\theta_{ld}-f_{\theta_l}-g_{\theta_l}q,\\
\dot\psi_{le} &= \dot\psi_{ld}-f_{\psi_l}-g_{\psi_l}r,
\end{aligned}
\end{equation}
with $A_l,A_{ld}$ projecting the body and reference velocity vectors
onto the line of sight, and $g_{\theta_l},g_{\psi_l}>0$. The
corresponding dynamics are
\begin{equation}\label{eq:tracking_dyn}
\begin{bmatrix}\dot u_l\\ \dot q\\ \dot r\end{bmatrix}=
\begin{bmatrix}f_{u_l}^c\\ f_q\\ f_r\end{bmatrix}+
G_t(\bm\rho)\bm\tau+\bm d_{\rm eff}+
\begin{bmatrix}d_c^{u_l}\\ 0\\ 0\end{bmatrix},
\end{equation}
with
\begin{equation}\label{eq:Gt_def}
G_t(\bm\rho)=
\begin{bmatrix}
g_{au} & \epsilon_{aq} & \epsilon_{ar}\\
0      & g_q           & 0\\
0      & 0             & g_r
\end{bmatrix},
\end{equation}
$g_{au}=g_u c_{\theta_a}c_{\psi_a}>0$,
$\epsilon_{aq}=\epsilon_q s_{\theta_a}c_{\psi_a}$,
$\epsilon_{ar}=\epsilon_r s_{\psi_a}$, and the current-induced surge
disturbance $d_c^{u_l}=f_{u_l}(\bm\nu)-f_{u_l}(\bm\nu_r)$.
Under Assumption~\ref{ass:meas}, $\det G_t=g_{au}g_qg_r>0$, so $G_t$
is invertible \emph{regardless} of $\epsilon_q,\epsilon_r$.

\subsection{Exponential modification of orientation (EMO)}
\label{subsec:emo}

The position-error dynamics $\dot p_e=u_{ld}A_{ld}-u_lA_l$ admit a
vanishing $A_l$ as $p_e\to 0$, which breaks the strict-feedback
structure required by direct backstepping. The EMO of
\citet{li2023trajectory} replaces the desired triple by
\begin{equation}\label{eq:emo}
\begin{aligned}
u_{ld}^m &= u_{ld}+c_u p_e A_{ld}^m,\\
\theta_{ld}^m &= \theta_b+(\theta_{ld}-\theta_b)e^{-c_\theta p_e},\\
\psi_{ld}^m &= \psi_b+(\psi_{ld}-\psi_b)e^{-c_\psi p_e},
\end{aligned}
\end{equation}
with $(\theta_b,\psi_b)$ the polar/azimuth angles of the
position-error vector and $A_{ld}^m$ evaluated at
$(u_{ld}^m,\theta_{ld}^m,\psi_{ld}^m)$.

\begin{lemma}[EMO Lyapunov bound, {\citep[Lemma~1]{li2023trajectory}}]
\label{lem:emo}
For sufficiently small $c_\theta,c_\psi>0$, $A_{ld}^m\ge\varepsilon>0$
on $\{p_e\ge\zeta\}$, and along
$u_l=u_{ld}^m,\theta_l=\theta_{ld}^m,\psi_l=\psi_{ld}^m$,
\begin{equation}\label{eq:Vdot_outer}
\dot V_1\le -c_u\varepsilon^2 p_e^2,
\end{equation}
with $V_1=\tfrac12(p_e^2+\gamma_\theta(\theta_{ld}^m-\theta_l)^2
+\gamma_\psi(\psi_{ld}^m-\psi_l)^2)$.
\end{lemma}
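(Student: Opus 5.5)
The plan is to reproduce the kinematic argument behind \citep[Lemma~1]{li2023trajectory} in the present notation, on the region $\{p_e\ge\zeta\}$ where the bearing angles $(\theta_b,\psi_b)$ of the position-error vector are well defined and smooth. First I would write $\bm p_d-\bm p=p_e\,\bm e_b$, with $\bm e_b$ the unit vector of polar/azimuth angles $(\theta_b,\psi_b)$. Then I would express the line-of-sight projections as inner products, $A_l=\langle \bm e_l,\bm e_b\rangle$ and $A_{ld}=\langle \bm e_{ld},\bm e_b\rangle$, where $\bm e_l,\bm e_{ld}$ are the unit directions of the body and reference velocities. This gives $\dot p_e=\langle \dot{\bm p}_d,\bm e_b\rangle-u_l\langle\bm e_l,\bm e_b\rangle$, which is the first line of \eqref{eq:reduced_kin}.

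\textbf{Step 1 (positivity of $A_{ld}^m$).} By \eqref{eq:emo}, the modified direction $(\theta_{ld}^m,\psi_{ld}^m)$ is an exponential interpolation between the bearing $(\theta_b,\psi_b)$ and the reference direction. At $p_e\ge\zeta$ its angular distance from $\bm e_b$ is at most $\max(|\theta_{ld}-\theta_b|,|\psi_{ld}-\psi_b|)\cdot e^{-c\zeta}$ in the angle coordinates. This is only a genuine contraction toward $\bm e_b$ if $c_\theta,c_\psi$ are tuned against $\zeta$. I would therefore read ``sufficiently small'' jointly with $\zeta$, or equivalently rely on the reference-direction bound from Assumption~\ref{ass:ref} with $\theta_{ldM}<\pi/2$. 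Either way, $\cos$ of the angle between $\bm e_{ld}^m$ and $\bm e_b$ is bounded below by some $\varepsilon>0$, which gives $A_{ld}^m\ge\varepsilon$. The small-gain restriction on $c_\theta,c_\psi$ is also needed so that the terms $\dot\theta_{ld}^m,\dot\psi_{ld}^m$ stay bounded. These terms involve $\dot\theta_b,\dot\psi_b\sim\|\dot{\bm p}_e\|/p_e$, which is bounded by $(\bar V_d^{\max}+u_{\max})/\zeta$ on $p_e\ge\zeta$.

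\textbf{Step 2 (Lyapunov derivative).} Along $u_l=u_{ld}^m$, $\theta_l=\theta_{ld}^m$, $\psi_l=\psi_{ld}^m$, the angular terms in $V_1$ vanish identically, so $\dot V_1=p_e\dot p_e$. Substituting $\bm e_l=\bm e_{ld}^m$ and $u_l=u_{ld}+c_up_eA_{ld}^m$ gives
\[
\dot p_e = u_{ld}A_{ld}-u_{ld}A_{ld}^m - c_u p_e (A_{ld}^m)^2 .
\]
The EMO construction in \citep{li2023trajectory} is designed so that the reference component is matched. This holds either by defining $u_{ld}$ as the projected desired speed so that $u_{ld}A_{ld}$ and $u_{ld}A_{ld}^m$ cancel, or by absorbing their difference through the choice of $u_{ld}^m$. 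With that matching, $\dot V_1=-c_up_e^2(A_{ld}^m)^2\le -c_u\varepsilon^2p_e^2$ by Step 1, which is \eqref{eq:Vdot_outer}.

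\textbf{Main obstacle.} The hard step is the matching term $u_{ld}(A_{ld}-A_{ld}^m)$. If it does not cancel exactly, I would bound it by $\bar V_d^{\max}\cdot|A_{ld}-A_{ld}^m|\le \bar V_d^{\max}(1-e^{-c\,p_e})$-type quantities. I would then absorb it into $-c_u\varepsilon^2p_e^2$ using $p_e\ge\zeta$ and a large enough $c_u$, conceding a factor such as $\tfrac12$ in the constant. Failing that, I would state the bound exactly as in \citep[Lemma~1]{li2023trajectory} and defer to that source.
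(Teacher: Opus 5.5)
The paper gives no proof of this lemma; it imports it from \citep[Lemma~1]{li2023trajectory}. Your final fallback of deferring to that source is therefore what the paper does. The self-contained argument you sketch, however, has two genuine gaps.

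\textbf{Step~2: the mismatch term.} With the EMO law \eqref{eq:emo} as written, $u_{ld}^m=u_{ld}+c_up_eA_{ld}^m$, you correctly get $\dot p_e=u_{ld}(A_{ld}-A_{ld}^m)-c_up_e(A_{ld}^m)^2$. Nothing in \eqref{eq:emo} makes the first term vanish. Saying the construction is ``designed so that the reference component is matched'' would require $A_{ld}=A_{ld}^m$ or a different $u_{ld}^m$, so it assumes what must be shown.

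Your absorption fallback proves a different statement. Absorbing a term of size $\bar V_d^{\max}p_e$ into $-c_u\varepsilon^2p_e^2$ on $\{p_e\ge\zeta\}$ needs $c_u\gtrsim\bar V_d^{\max}/(\varepsilon^2\zeta)$, which the lemma does not assume. It also never uses the smallness of $c_\theta,c_\psi$, around which the lemma is phrased.

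The missing observation is that the EMO direction deviates from the reference direction only by a factor $1-e^{-cp_e}\le cp_e$. Indeed $\theta_{ld}^m-\theta_{ld}=(\theta_b-\theta_{ld})(1-e^{-c_\theta p_e})$, and likewise for $\psi$. The spherical unit-vector map is $1$-Lipschitz in each angle, so
\[
|A_{ld}-A_{ld}^m|\le\|\bm e_{ld}-\bm e_{ld}^m\|\le 2\pi(c_\theta+c_\psi)\,p_e .
\]
With $u_{ld}=\|\dot{\bm p}_d\|\le\bar V_d^{\max}$, the mismatch therefore contributes at most $2\pi(c_\theta+c_\psi)\bar V_d^{\max}p_e^2$ to $\dot V_1=p_e\dot p_e$. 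If $A_{ld}^m\ge\varepsilon_0$, then
\[
\dot V_1\le-\bigl(c_u\varepsilon_0^2-2\pi(c_\theta+c_\psi)\bar V_d^{\max}\bigr)p_e^2 .
\]
For $c_\theta+c_\psi\le c_u\varepsilon_0^2/(4\pi\bar V_d^{\max})$ this is at most $-c_u\varepsilon^2p_e^2$ with $\varepsilon=\varepsilon_0/\sqrt2$, and $A_{ld}^m\ge\varepsilon$ still holds. This is where ``sufficiently small $c_\theta,c_\psi$'' belongs. No condition on $c_u$ is needed, and $p_e\ge\zeta$ is used only for regularity of $(\theta_b,\psi_b)$.

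\textbf{Step~1: positivity of $A_{ld}^m$.} Step~1 does not establish $A_{ld}^m\ge\varepsilon$. Small $c_\theta,c_\psi$ make $e^{-cp_e}\approx 1$ on any bounded range of $p_e$, so $\bm e_{ld}^m\approx\bm e_{ld}$ and there is no contraction toward $\bm e_b$. The bound $|\theta_{ld}|\le\theta_{ldM}$ constrains the reference pitch, not the angle between $\bm e_{ld}$ and the line of sight. For example, if the vehicle is ahead of the reference point, $A_{ld}<0$, and then $A_{ld}^m<0$ for small $c_\theta,c_\psi$. Positivity would need $c\,\zeta$ large, which conflicts with the smallness used above.

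So $A_{ld}^m\ge\varepsilon_0$ on $\{p_e\ge\zeta\}$ must be read as a standing hypothesis of the lemma, not a conclusion. This reading is consistent with failure mode~(F4) of Proposition~\ref{prop:failure} and with $c_\psi=0.001$ in Table~\ref{tab:params}. Your discussion of bounded $\dot\theta_b$ is also unnecessary: on the matching manifold the angle terms of $V_1$ vanish identically.
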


Lemma~\ref{lem:emo} is inherited material from
\citet{li2023trajectory} and is used in Section~\ref{subsec:cl} to
absorb the outer-layer position-error states into the closed-loop
boundedness statement.

\subsection{Pre-stabilised error dynamics}\label{subsec:prestab}

Define the error vector
$\bm e=[u_{le}^m,e_q,e_r]^{\!\top}$, with
$u_{le}^m=u_{ld}^m-u_l$, $e_q=\alpha_q-q$, $e_r=\alpha_r-r$, and
decompose the control input as
\begin{equation}\label{eq:tau_decomp}
\bm\tau=\bm\tau_{\rm ff}+\bm\tau_{\rm stab}+\bm\tau_{\Hcal_\infty}.
\end{equation}
\emph{Feedforward.} Cancellation of the modelled dynamics at
$\bm e=0$:
\begin{equation}\label{eq:tau_ff}
\bm\tau_{\rm ff}=G_t(\hat{\bm\rho})^{-1}\bm\Lambda_{\rm ff}(\hat{\bm\nu}_r,\hat{\bm\rho}),
\quad
\bm\Lambda_{\rm ff}=\begin{bmatrix}
\dot u_{ld}^m-f_{u_l}^c\\
\dot\alpha_q-f_q\\
\dot\alpha_r-f_r
\end{bmatrix}_{\hat{\bm\nu}_r}.
\end{equation}
\emph{Cascade stabiliser.} Constant linear feedback:
\begin{equation}\label{eq:tau_stab}
\bm\tau_{\rm stab}=G_t(\hat{\bm\rho})^{-1}G_\nu^{-1}\diag(k_u,k_q,k_r)\bm e,
\end{equation}
with $G_\nu=\diag(\gamma_u,\gamma_q,\gamma_r)\succ 0$.

\textit{Pre-stabilised Jacobian: derivation.}
Differentiating $u_{le}^m=u_{ld}^m-u_l$ and substituting
\eqref{eq:tracking_dyn} yields
\[
\dot u_{le}^m=\dot u_{ld}^m-f_{u_l}^c-g_{au}\tau_u
-\epsilon_{aq}\tau_q-\epsilon_{ar}\tau_r-d_c^{u_l}.
\]
After $\bm\tau_{\rm ff}$ cancels the modeled feedforward and
linearising about $\bm e=0$, the surge-channel Jacobian contains the
damping derivative
\begin{equation}\label{eq:a11_deriv}
a_{11}^{\rm raw}=-\frac{X_u+2X_{u|u|}|u_r|}{m_u}\,c_{\theta_a}c_{\psi_a},
\end{equation}
and EMO-induced cross-couplings through the chain rule
$\dot\theta_{le}^m=-g_{\theta_l}e_q$, $\dot\psi_{le}^m=-g_{\psi_l}e_r$.
The angular channels $\dot e_q=\dot\alpha_q-\dot q$,
$\dot e_r=\dot\alpha_r-\dot r$ yield the pitch and yaw damping
derivatives
\begin{equation}\label{eq:a22a33}
a_{22}^{\rm raw}=-\frac{M_q+2M_{q|q|}|q|}{I_{yq}},\;\;
a_{33}^{\rm raw}=-\frac{N_r+2N_{r|r|}|r|}{I_{zr}},
\end{equation}
with $I_{yq}=I_y-M_{\dot q}$, $I_{zr}=I_z-N_{\dot r}$ the
added-mass-corrected pitch and yaw inertias. The off-diagonal entries
$a_{12},a_{13},a_{21},a_{31}$ collect the EMO couplings and the
dependence of the virtual controls $\alpha_q,\alpha_r$ on $u_l$,
scaled by the virtual-control gains.

\textit{Pre-stabilisation.}
Substitution of $\bm\tau_{\rm stab}$ from~\eqref{eq:tau_stab} into
the linearised dynamics yields, after the cascade $G_t^{-1}$
cancellation, a contribution
$G_\nu^{-1}\diag(k_u,k_q,k_r)\bm e$ to $\dot{\bm\nu}$ on the
dynamic surge/pitch/yaw channels, hence
$-G_\nu^{-1}\diag(k_u,k_q,k_r)\bm e$ on the error rates
$\dot{\bm e}$. Collecting the resulting diagonal contribution
on the error vector $\bm e=[u_{le}^m,e_q,e_r]^{\!\top}$,
define the \emph{effective} cascade decay rates
\begin{equation}\label{eq:kcas_eff}
k_u^{\rm eff}:=\frac{k_u}{\gamma_u},\quad
k_q^{\rm eff}:=\frac{k_q}{\gamma_q},\quad
k_r^{\rm eff}:=\frac{k_r}{\gamma_r},
\end{equation}
where $\gamma_u,\gamma_q,\gamma_r$ are the diagonal entries of
$G_\nu$ in~\eqref{eq:tau_stab}; all six gains are positive design
parameters. In the augmented error coordinates~\eqref{eq:xe_def_main}
the cascade stabilisation contributes only to the dynamic rows
$u_{le}^m,q,r$ (rows 1, 3, and 5 of $\bm x_e$); the kinematic
rows $\theta_{le}^m,\psi_{le}^m$ (rows 2 and 4) do not receive a
direct stabilising contribution. Because the cascade drives
$(q,r)\!\to\!(\alpha_q,\alpha_r)$, its linearised contribution on
rows 3 and 5 of $\bm x_e$ is a stabilising feedback of $q$ and $r$
once the desired-rate cross-couplings via $(\alpha_q,\alpha_r)$ are
expanded in the kinematic-angle channels $\theta_{le}^m,\psi_{le}^m$;
those cross-couplings are absorbed into the LPV-parameter-dependent
off-diagonal entries of $A_{\rm raw}^{\rm aug}(\bm\rho)$ below. The
diagonal cascade contribution on $\bm x_e$ is therefore
\begin{equation}\label{eq:Kcas_aug}
K_{\rm cas}^{\rm aug}=
\begin{bmatrix}
k_u^{\rm eff} & 0 & 0 & 0 & 0\\
0 & 0 & 0 & 0 & 0\\
0 & 0 & k_q^{\rm eff} & 0 & 0\\
0 & 0 & 0 & 0 & 0\\
0 & 0 & 0 & 0 & k_r^{\rm eff}
\end{bmatrix},
\end{equation}
and the $5\times 5$ pre-stabilised LPV matrix used in the LMI is
\begin{equation}\label{eq:Apre_def}
A_{\rm pre}(\bm\rho)=A_{\rm raw}^{\rm aug}(\bm\rho)-K_{\rm cas}^{\rm aug},
\end{equation}
where $A_{\rm raw}^{\rm aug}(\bm\rho)\in\R^{5\times 5}$ is the
linearised open-loop Jacobian on the augmented state $\bm x_e$.
The vertex matrices used in the LMI are
$A_i=A_{\rm pre}(\bm\rho^{(i)})$.
With $\bm\tau$ entering the surge dynamics through the
acceleration-normalised mapping $G_t(\bm\rho)$ in~\eqref{eq:Gt_def},
no additional $1/m_u$ factor appears in $K_{\rm cas}^{\rm aug}$: the
mass normalisation is already absorbed into the coefficient
$g_{au}=g_u c_{\theta_a}c_{\psi_a}>0$ of $G_t$. By contrast, the
hydrodynamic damping derivatives in
\eqref{eq:a11_deriv}--\eqref{eq:a22a33} retain the physical inertia
normalisation $m_u$ (resp.\ $I_{yq},I_{zr}$), because these are
written on the body-frame surge/pitch/yaw accelerations
\emph{before} the spherical projection that produces $g_{au}$;
hence the cascade input contribution after $G_t^{-1}$ does not
re-introduce a $1/m_u$ factor.

\subsubsection*{Augmented LPV state and constant input matrix after feedback linearisation}

The LMI synthesis is performed on the \emph{augmented} error state
that combines the outer-loop kinematic angle errors and the
inner-loop body-axis angular rates,
\begin{equation}\label{eq:xe_def_main}
\bm x_e:=
\bigl[u_{le}^m,\,\theta_{le}^m,\,q,\,\psi_{le}^m,\,r\bigr]^{\!\top}\in\R^{n_e},
\quad n_e=5,
\end{equation}
where $q$ and $r$ are the physical body-axis angular rates, used
directly in the LMI state as is standard in linearisation-based LPV
synthesis around an operating point. The cascade kinematic
controller generates the desired rates
$(\alpha_q,\alpha_r)$ and the cascade stabiliser $\bm\tau_{\rm stab}$
of Section~\ref{subsec:prestab} drives $(q,r)\to(\alpha_q,\alpha_r)$;
the LMI is then synthesised on the locally linearised error system
around the operating manifold, with the additive correction layer
$\delta\bm v_{\Hcal_\infty}=K(\hat{\bm\rho}_{\rm sat})\bm x_e$
providing robust stabilisation margin under the LPV parameter
variations.
The compact backstepping error
$\bm e=[u_{le}^m,e_q,e_r]^{\!\top}\in\R^3$ of the EMO derivation,
with $e_q=\alpha_q-q$, $e_r=\alpha_r-r$, is kept for the inner-loop
cascade analysis (Lemma~\ref{lem:emo}); it is related to $\bm x_e$
by the affine map
$\bm e=[\bm x_e]_{1,3,5}\!\mapsto\![u_{le}^m,\alpha_q-q,\alpha_r-r]^{\!\top}$
once the cascade-generated $(\alpha_q,\alpha_r)$ are substituted, so
the cascade analysis and the LMI analysis live on the same physical
state with a coordinate change. The LMI \emph{correction layer} acts
on $\bm x_e$ through the virtual correction input
$\delta\bm v_{\Hcal_\infty}=K\bm x_e\in\R^3$,
$K\in\R^{3\times n_e}$; the corresponding physical actuator
contribution is
$\bm\tau_{\Hcal_\infty}=G_t(\hat{\bm\rho}_{\rm sat})^{-1}\,\delta\bm v_{\Hcal_\infty}$
(Section~\ref{subsec:lmi}, eq.~\eqref{eq:total_law}). The cascade
stabiliser $\bm\tau_{\rm stab}$ provides the pre-stabilisation seen
by the LMI. Crucially, since $\bm\tau_{\rm ff}+\bm\tau_{\rm stab}+\bm\tau_{\Hcal_\infty}$
share the common $G_t(\hat{\bm\rho}_{\rm sat})^{-1}$ prefactor
(eqs.~\eqref{eq:tau_ff}--\eqref{eq:tau_stab} and
\eqref{eq:total_law}), the residual dynamics seen by the LPV
synthesis have a \emph{constant} input matrix
$B_u\in\R^{n_e\times 3}$ acting on the virtual input
$\delta\bm v_{\Hcal_\infty}$. The virtual input acts on the
acceleration channels $(\dot u_l,\dot q,\dot r)$, which appear as
rows~$1,3,5$ of $\bm x_e$; in the implementation, the
acceleration-normalised constant input matrix is
\begin{equation}\label{eq:Bu_def}
B_u=\begin{bmatrix}
b_u & 0 & 0\\
0 & 0 & 0\\
0 & b_q & 0\\
0 & 0 & 0\\
0 & 0 & b_r
\end{bmatrix},
\qquad b_u,b_q,b_r>0,
\end{equation}
where $b_u,b_q,b_r$ are positive scalars set by the model's input
gain normalisation. Any sign induced by the
desired-minus-actual versus actual-minus-desired convention used in
the cascade is absorbed into the LMI gain $K$, so that the
LMI~\eqref{eq:lmi} and the runtime control law
$\delta\bm v_{\Hcal_\infty}=K(\hat{\bm\rho}_{\rm sat})\bm x_e$ are
mutually consistent and no sign is left implicit.
\emph{No $B(\bm\rho)$ enters the LMI.} This is the key reduction that
keeps the polytopic synthesis convex without pairwise cross-term LMIs,
while the common Lyapunov matrix and the approximate embedding remain
sources of conservativeness (Remark~\ref{rem:common_Y_revised}).

The complete linearised pre-stabilised LPV model used for synthesis
is therefore
\begin{equation}\label{eq:lpv_err}
\dot{\bm x}_e=A_{\rm pre}(\bm\rho)\bm x_e+B_u\,\delta\bm v_{\Hcal_\infty}
+E(\bm\rho)\bm w+\Delta_{\rm nl}(\bm x_e,\bm\rho),
\end{equation}
with $E(\bm\rho)\in\R^{n_e\times n_w}$ routing the disturbance vector
$\bm w=[d_{u_l},d_q,d_r,d_c^{u_l}]^{\!\top}$ into the affected
channels, and
$\|\Delta_{\rm nl}\|=\mathcal O(\|\bm x_e\|^2)$ on the compact
operating set $\mathcal X$ of Assumption~\ref{ass:operset}.

\subsection{Control problem}\label{subsec:problem}

\begin{problem}\label{prob:tracking}
Design the virtual correction-layer feedback
$\delta\bm v_{\Hcal_\infty}=K\bm x_e$, mapped to the physical
actuator command by
$\bm\tau_{\Hcal_\infty}=G_t(\hat{\bm\rho}_{\rm sat})^{-1}\delta\bm v_{\Hcal_\infty}$,
using only measurements satisfying Assumption~\ref{ass:meas} and
observer outputs $(\hat{\bm\nu}_r,\hat{\bm\nu}_c)$, such that the
tracking errors $(p_e,\theta_{le},\psi_{le},\bm x_e)$ are practically
uniformly ultimately bounded with a guaranteed energy-gain bound from
the augmented disturbance $\bm w_{\rm aug}$
(Section~\ref{subsec:scheduling}) to the performance output, while
operating within the compact admissible set of
Assumption~\ref{ass:operset}.
\end{problem}

%==============================================================
\section{Main Results}\label{sec:main}
%==============================================================

This section presents the observer, scheduling map, LPV synthesis,
and closed-loop singular-perturbation stability analysis. The
implementation-side closed-loop diagram is deferred to
Section~\ref{subsec:closed_loop_architecture}, after the controller,
observer, and scheduling map have been formally defined; all proofs
in this section are stated inline with the corresponding theorems.

\subsection{Joint state--current observer design}\label{subsec:observer}

The observer exploits the kinematic--dynamic split: position
differentiation yields the ground-referenced velocity $\bm\nu$, while
the dynamics depend on the water-referenced velocity $\bm\nu_r$, and
the discrepancy reveals the current
\citep{kim2018current,refsnes2007model}.

\subsubsection{Stage 1: kinematic differentiator}
A linear high-gain differentiator
\citep[Sec.~14.5]{khalil2002nonlinear}
\begin{equation}\label{eq:hgd}
\begin{aligned}
\dot{\hat{\bm\eta}}_p^{(0)} &= \hat{\bm\eta}_p^{(1)}
+\tfrac{\alpha_1}{\epsilon}(\bm\eta_p-\hat{\bm\eta}_p^{(0)}),\\
\dot{\hat{\bm\eta}}_p^{(1)} &= \tfrac{\alpha_2}{\epsilon^2}
(\bm\eta_p-\hat{\bm\eta}_p^{(0)}),
\end{aligned}
\end{equation}
with $\alpha_1,\alpha_2>0$ such that $s^2+\alpha_1 s+\alpha_2$ has
roots with negative real parts. After the boundary layer
$\mathcal O(\epsilon)$,
$\|\hat{\bm\eta}_p^{(1)}-\dot{\bm\eta}_p\|=\mathcal O(\epsilon)$. The
kinematic velocity estimate is
\begin{equation}\label{eq:nu_kin}
\hat{\bm\nu}_{\rm kin}=R^{\!\top}(\theta,\psi)\hat{\bm\eta}_p^{(1)}.
\end{equation}

\subsubsection{Stage 2: dynamic relative-velocity estimator}
The body-frame dynamics of $v_r,w_r$ contain the non-minimum-phase
actuator terms inherited from~\eqref{eq:Gmat}. Specifically, row~2
of $G$ contributes $\epsilon_r\tau_r$ to $\dot v$, and row~3
contributes $\epsilon_q\tau_q$ to $\dot w$. The Stage~2 observer is
constructed in exact accordance with the plant input structure:
\begin{equation}\label{eq:obs_vw}
\;
\begin{aligned}
\dot{\hat v}_r &= f_v(\hat{\bm\nu}_r,\theta)+\epsilon_r\tau_r
+k_v(\hat v_{\rm kin}-\hat v_r-\hat v_c),\\
\dot{\hat w}_r &= f_w(\hat{\bm\nu}_r,\theta)+\epsilon_q\tau_q
+k_w(\hat w_{\rm kin}-\hat w_r-\hat w_c),
\end{aligned}\;
\end{equation}
with $f_v,f_w$ collecting the Coriolis, restoring, and damping terms
\citep{prestero2001verification} and $k_v,k_w>0$. The innovation
$\hat v_{\rm kin}-\hat v_r-\hat v_c$ exploits the kinematic identity
$v=v_r+v_c$, and analogously for the heave channel.

\begin{remark}[Consistency with $G$]\label{rem:obs_G}
The forcing in \eqref{eq:obs_vw} uses the same $\epsilon_r,\epsilon_q$
as the plant in \eqref{eq:Gmat}, with no additional scaling by
$g_q,g_r$ or division by $m_v,m_w$. This is consistent with the
convention adopted in Section~\ref{subsec:dof5}, namely that
$\epsilon_q,\epsilon_r$ are already acceleration-normalised.
\end{remark}

\subsubsection{Stage 3: body-frame current filter}
The body-frame current dynamics~\eqref{eq:nuc_dot} contain the
predictable rotational coupling $-\bm\omega_b\times\bm\nu_c$ and an
unpredictable inertial driving $R^{\!\top}\dot{\bm V}_c^n$ absorbed into
the bounded perturbation analysed in Theorem~\ref{thm:obs}. The
Stage~3 filter uses split innovations: a kinematic innovation for
the transverse channels and a dynamic-residual innovation for the
surge channel, since the surge kinematic innovation
$(\hat u_{\rm kin}-\hat u_r-\hat u_c)$ becomes structurally uninformative
once $\hat u_r=u-\hat u_c$ is substituted (the residual collapses
algebraically):
\begin{equation}\label{eq:obs_c}
\;
\dot{\hat{\bm\nu}}_c=-\bm\omega_b\times\hat{\bm\nu}_c+
\begin{bmatrix}
k_{cu}\,r_u\\
k_{cv}(\hat v_{\rm kin}-\hat v_r-\hat v_c)\\
k_{cw}(\hat w_{\rm kin}-\hat w_r-\hat w_c)
\end{bmatrix}\;
\end{equation}
with surge dynamic residual
\begin{equation}\label{eq:rsurge}
r_u=\dot u-g_u\tau_u-\frac{X_u\hat u_r+X_{u|u|}\hat u_r|\hat u_r|}{m_u},
\end{equation}
where $\hat u_r=u-\hat u_c$, and $k_{cu},k_{cv},k_{cw}>0$.

\textit{Implementation.}
The surge acceleration $\dot u$ in \eqref{eq:rsurge} is not measured
directly. The residual is implemented in integral form,
\begin{multline}\label{eq:chi_u}
\chi_u(t)=u(t)-u(0)\\
-\int_0^t\!\Bigl[g_u\tau_u+\frac{X_u\hat u_r+X_{u|u|}\hat u_r|\hat u_r|}{m_u}\Bigr]\,d\tau,
\end{multline}
with $r_u\propto\dot\chi_u$ obtained from a causal second-order
low-pass filter co-located with the Stage-1 high-gain time scale
$1/\epsilon$; the algebraic identity between~\eqref{eq:rsurge} and
the filtered $\dot\chi_u$ is exact in the noise-free case.

\begin{remark}[$g_u$ vs $g_{au}$ in the surge residual]
\label{rem:gu_vs_gau}
The residual $r_u$ in~\eqref{eq:rsurge} is formed on the
\emph{body-axis} surge equation $\dot u$ before the spherical
$u_l$-projection; the direct actuator coefficient is therefore $g_u$
of~\eqref{eq:Gmat}. The coefficient
$g_{au}=g_u c_{\theta_a}c_{\psi_a}$ of~\eqref{eq:Gt_def} appears
only in the reduced $u_l$-dynamics
\eqref{eq:tracking_dyn}--\eqref{eq:Gt_def}, where the kinematic
projection from $\dot u$ to $\dot u_l$ has been applied. The two
are consistent under Assumption~\ref{ass:meas}, since
$c_{\theta_a},c_{\psi_a}>0$ on $\mathcal X$, and the body-axis
residual is more informative about the surge-channel current
$u_c$ than the projected one.
\end{remark}

The complete observer pipeline is summarised in
Algorithm~\ref{alg:observer}.

\begin{algorithm}[!htbp]
\caption{Joint state--current observer and scheduling update
(per control cycle).}
\label{alg:observer}
\begin{algorithmic}[1]
\Require Measurements $\bm\eta_p,\theta,\psi,u,q,r$ and input $\bm\tau$.
\State \textbf{Stage 1:} Update HGD~\eqref{eq:hgd}; obtain $\hat{\bm\eta}_p^{(1)}$.
\State Compute $\hat{\bm\nu}_{\rm kin}=R^{\!\top}(\theta,\psi)\hat{\bm\eta}_p^{(1)}$.
\State \textbf{Stage 2:} Update $(\hat v_r,\hat w_r)$ from~\eqref{eq:obs_vw}.
\State \textbf{Stage 3:} Update $\hat{\bm\nu}_c$ from~\eqref{eq:obs_c}--\eqref{eq:rsurge}.
\State Assemble $\hat u_r=u-\hat u_c$, $\hat{\bm\nu}_r=[\hat u_r,\hat v_r,\hat w_r,q,r]^{\!\top}$.
\State Compute $\hat{\bm\rho}=\Phi(\hat{\bm\nu}_r,\theta,p_e)$ via~\eqref{eq:rho_hat}.
\State Saturate $\hat{\bm\rho}_{\rm sat}=\sat_{[\underline{\bm\rho},\bar{\bm\rho}]}(\hat{\bm\rho})$.
\State \Return $(\hat{\bm\nu}_r,\hat{\bm\nu}_c,\hat{\bm\rho}_{\rm sat})$.
\end{algorithmic}
\end{algorithm}

\subsubsection{Excitation condition for the implemented observer}
The body-frame current cannot be recovered from position
differentiation alone, since the kinematic identity yields only the
ground velocity $\bm\nu=\bm\nu_r+\bm\nu_c$. The implemented observer
recovers $\bm\nu_c$ through the dynamic-residual route in the surge
channel and through the rotational coupling
$-\bm\omega_b\times\hat{\bm\nu}_c$ in the transverse channels. The
following lemma characterises the trajectory excitation required.

\begin{lemma}[Sufficient excitation condition for the implemented
observer]\label{lem:obs_exc}
Consider the observer~\eqref{eq:hgd}--\eqref{eq:obs_c} along a
closed-loop trajectory satisfying
Assumptions~\ref{ass:meas}--\ref{ass:operset} with $u\ge u_{\min}>0$,
and suppose Assumption~\ref{ass:current}~(C1) holds. Define the
augmented observer state
$\bm x_o=[v,w,u_c,v_c,w_c]^{\!\top}\in\R^5$. If there exist
$T,\sigma_d,\alpha>0$ and an interval
$\mathcal I_{\rm PE}\subseteq[0,\infty)$ such that the following two
conditions hold uniformly for all $t\in\mathcal I_{\rm PE}$:
\begin{itemize}
\item[\textnormal{(E1)}] \emph{Active nonlinear damping in surge}:
\begin{equation}\label{eq:E1}
|X_u+2X_{u|u|}|u_r^\circ||\ge\sigma_d;
\end{equation}
\item[\textnormal{(E2)}] \emph{Persistent directional angular-rate excitation}:
\begin{equation}\label{eq:E2}
\int_t^{t+T}\!\!\begin{bmatrix} r^2(\sigma) & -r(\sigma)q(\sigma)\\
-r(\sigma)q(\sigma) & q^2(\sigma)\end{bmatrix}\!d\sigma\succeq\alpha I_2,
\end{equation}
\end{itemize}
then the surge channel of the implemented observer is locally
identifiable (so that $u_c$ is uniquely recoverable from the dynamic
residual $r_u$), and the transverse correction channel for
$(v_c,w_c)$ is \emph{uniformly informative} over the window $[t,t+T]$.
Under the additional assumption that the Stage-2 sub-observer
\eqref{eq:obs_vw} is locally input-to-state stable with respect to
its driving terms, the implemented observer admits the local ISS
estimation bound of Theorem~\ref{thm:obs}.
\end{lemma}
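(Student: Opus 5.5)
The proof splits along the two channels of the implemented observer, treating the surge route and the transverse route separately. In each case we linearise the estimation-error dynamics about the true trajectory under (C1), so that $\dot{\bm V}_c^n=\bm 0$ and \eqref{eq:nuc_dot} reduces to $\dot{\bm\nu}_c=-\bm\omega_b\times\bm\nu_c$. Define the errors $\tilde{\bm\nu}_c=\bm\nu_c-\hat{\bm\nu}_c$, $\tilde v_r=v_r-\hat v_r$ and $\tilde w_r=w_r-\hat w_r$. The Stage-1 error $\hat{\bm\nu}_{\rm kin}-\bm\nu=\mathcal O(\epsilon)$ is treated as a bounded perturbation after the boundary layer.

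\emph{Surge channel.} Substitute $\hat u_r=u-\hat u_c=u_r+\tilde u_c$ into \eqref{eq:rsurge} and use the true surge equation $\dot u=g_u\tau_u+(X_uu_r+X_{u|u|}u_r|u_r|)/m_u+d_u$. This gives
\[
r_u=\frac{1}{m_u}\Bigl[X_u(u_r-\hat u_r)+X_{u|u|}\bigl(u_r|u_r|-\hat u_r|\hat u_r|\bigr)\Bigr]+d_u .
\]
By the mean-value theorem, with $s\mapsto s|s|$ having derivative $2|s|$, this becomes $r_u=-\frac{1}{m_u}\bigl(X_u+2X_{u|u|}|u_r^\circ|\bigr)\tilde u_c+d_u$ for some intermediate $u_r^\circ$. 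Condition (E1) makes the coefficient of $\tilde u_c$ bounded away from zero, with modulus at least $\sigma_d/m_u$. Hence the map $\tilde u_c\mapsto r_u$ is injective locally, which is the local identifiability claim. The update $k_{cu}r_u$ then yields scalar error dynamics $\dot{\tilde u}_c=-(\text{rotational terms})-k_{cu}m_u^{-1}\kappa(t)\tilde u_c+\dots$ with $|\kappa|\ge\sigma_d$. The sign of $k_{cu}$ must be chosen to match the sign of $X_u+2X_{u|u|}|\cdot|$, which is negative for hydrodynamic damping. With that choice the surge error is exponentially stable at rate at least $k_{cu}\sigma_d/m_u$.

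\emph{Transverse channel.} The transverse innovations satisfy $\hat v_{\rm kin}-\hat v_r-\hat v_c=\tilde v_r+\tilde v_c+\mathcal O(\epsilon)$, and similarly for heave. The kinematic route alone therefore cannot separate the current error from the relative-velocity error. The separating information enters through the rotational coupling. With $\bm\omega_b=[0,q,r]^\top$, the surge component of $\bm\omega_b\times\tilde{\bm\nu}_c$ is $q\tilde w_c-r\tilde v_c$, so the surge row of the error dynamics carries $-(q\tilde w_c-r\tilde v_c)$. Thus the scalar regressor seen by the surge residual for the transverse pair is $\phi(t)=[r,-q]^\top$ acting on $(\tilde v_c,\tilde w_c)$. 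Its Gramian $\int_t^{t+T}\phi\phi^\top d\sigma$ is exactly the matrix in (E2). Condition (E2) is therefore the standard persistency-of-excitation condition for the two-dimensional transverse current error to be observed through the surge channel over every window $[t,t+T]$, which is the meaning of ``uniformly informative''. We would formalise this with the classical PE lemma for linear time-varying systems of the form $\dot{\tilde{\bm x}}=-\Gamma\phi\phi^\top\tilde{\bm x}$ (Anderson/Morgan--Narendra). Combining it with the surge gain from (E1) gives a uniform observability Gramian bounded below by a multiple of $\sigma_d^2\alpha/m_u^2$ for the pair $(\tilde u_c,\tilde v_c,\tilde w_c)$.

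\emph{Assembling the result.} Collect the errors in $\tilde{\bm x}_o$, matching $\bm x_o$, and form the block-triangular interconnection of the Stage-2 errors $(\tilde v_r,\tilde w_r)$ with the Stage-3 errors. Under the extra assumption that Stage~2 is locally ISS with respect to its driving terms, the Stage-3 subsystem is uniformly exponentially stable by the PE argument above. The small cross terms are $\mathcal O(\epsilon)$, $d$ and the linearisation remainder. A standard cascade/small-gain ISS argument then gives a local ISS estimate of the form used in Theorem~\ref{thm:obs}.

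The main obstacle is the transverse step. The paper's Stage~3 filter injects $(v_c,w_c)$ corrections through kinematic innovations that mix $\tilde v_r+\tilde v_c$, so the error system is not literally of the clean gradient form $\dot{\tilde{\bm x}}=-\Gamma\phi\phi^\top\tilde{\bm x}$. To handle this, I would first try a Lyapunov function that is quadratic in $\tilde{\bm x}_o$, with a cross weight chosen so that the surge-residual path supplies the $\phi\phi^\top$ dissipation. I would then prove uniform complete observability of the pair (output $r_u$, rotational state matrix) via (E2), and invoke observability-under-output-injection to tolerate the mixed innovations. This is why the lemma claims only ``uniformly informative'' and local ISS rather than global convergence.
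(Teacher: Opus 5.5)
Your proposal follows the paper's proof essentially step for step. (E1) makes the residual sensitivity $\partial r_u/\partial u_c=-(X_u+2X_{u|u|}|u_r^\circ|)/m_u$ nonvanishing, so $u_c$ is locally identifiable; you get injectivity from the mean-value theorem where the paper invokes the implicit-function theorem. (E2) is recognised as the Gramian of the direction $[r,-q]^{\top}$ through which $(v_c,w_c)$ enter $\dot u_c$ under the purely rotational current dynamics, and the assumed Stage-2 ISS closes the argument.

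There is one sign slip in your optional stability aside. Since $-k_{cu}r_u=k_{cu}m_u^{-1}(X_u+2X_{u|u|}|u_r^\circ|)\tilde u_c+\dots$, exponential stability requires $k_{cu}$ to have the \emph{opposite} sign to $X_u+2X_{u|u|}|\cdot|$. That means $k_{cu}>0$, as the paper assumes, not a matching negative sign.
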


% \begin{remark}[Scope of the excitation condition]
% \label{rem:weak_claim}
% Lemma~\ref{lem:obs_exc} is a sufficient excitation condition for
% the implemented observer, not a local weak-observability theorem
% for the full plant. (E1) gives surge identifiability via the
% implicit-function theorem and (E2) gives transverse-channel
% informativity via the directional Gramian; combined with the
% assumed local ISS of the Stage-2 sub-observer they close the
% sufficient excitation requirement for the coupled Stage-2/Stage-3
% dynamics. The operational value of exposing (E2) explicitly is that
% the failure mode (F1) in Proposition~\ref{prop:failure} can be
% monitored online through the condition number of the empirical
% Gramian in~\eqref{eq:E2}.
% \end{remark}

\begin{proof}
Under Assumption~\ref{ass:current}~(C1), $\dot{\bm V}_c^n=\bm 0$ and
$R^{\!\top}\dot{\bm V}_c^n=\bm 0$, so the body-frame current obeys the
purely rotational $\dot{\bm\nu}_c=-\bm\omega_b\times\bm\nu_c$ with
$\bm\omega_b=[0,q,r]^{\!\top}$. Expanding the cross product yields, in
components,
\begin{equation}\label{eq:cdyn}
\dot u_c=rv_c-qw_c,\quad\dot v_c=-ru_c,\quad\dot w_c=qu_c.
\end{equation}
The augmented observable state is
$\bm x_o=[v,w,u_c,v_c,w_c]^{\!\top}\in\R^5$. The observer outputs
through the kinematic identity $v=v_r+v_c$ etc.\ and through the
dynamic residual $r_u$ in~\eqref{eq:rsurge}.

\textit{Step 1: surge identifiability via (E1).}
The surge dynamic residual~\eqref{eq:rsurge} at the operating point
$\bm x_o^\circ$ admits the first-order Taylor expansion in
$\tilde u_c=u_c-\hat u_c$,
\[
r_u(\hat u_c)=r_u(u_c)+\frac{\partial r_u}{\partial u_c}\Big|_{u_c}\tilde u_c
+\mathcal O(\tilde u_c^2).
\]
Substituting~\eqref{eq:rsurge} with $\hat u_r=u-\hat u_c=u_r+\tilde u_c$
and using $\phi'(s)=2|s|$,
\begin{align*}
\frac{\partial r_u}{\partial \hat u_c}
&=-\frac{\partial}{\partial\hat u_c}\Bigl[\frac{X_u(u-\hat u_c)+X_{u|u|}\phi(u-\hat u_c)}{m_u}\Bigr]\\
&=\frac{X_u+2X_{u|u|}|u_r^\circ|}{m_u},
\end{align*}
so equivalently
$\partial r_u/\partial u_c=-(X_u+2X_{u|u|}|u_r^\circ|)/m_u$.
Under~(E1), $|X_u+2X_{u|u|}|u_r^\circ||\ge\sigma_d>0$, so this
partial derivative is bounded away from zero. The implicit-function
theorem applied to $r_u(\bm x_o)=0$ at the operating point thus
yields $u_c$ as a locally unique smooth function of the other state
components, i.e.\ $u_c$ is locally identifiable from $r_u$.

\textit{Step 2: transverse identifiability via (E2).}
With $u_c$ identified at each instant from Step~1, the remaining
unknowns are $(v_c,w_c)$. The Stage-3 filter recovers them through
two coupled channels:
\begin{itemize}
\item[(a)] The rotational coupling in $\dot u_c$: from~\eqref{eq:cdyn}
linearised about $\bm x_o^\circ$,
$\delta\dot u_c=r(\sigma)\delta v_c-q(\sigma)\delta w_c
=[r(\sigma),-q(\sigma)]\,[\delta v_c,\delta w_c]^{\!\top}$,
so at time $\sigma$ the surge-channel measurement reveals only the
projection of $(\delta v_c,\delta w_c)$ onto the direction
$\bm d(\sigma):=[r(\sigma),-q(\sigma)]^{\!\top}\in\R^2$.
\item[(b)] The transverse kinematic innovations
$\hat v_{\rm kin}-\hat v_r-\hat v_c$ and
$\hat w_{\rm kin}-\hat w_r-\hat w_c$: these provide direct
linear-in-$\bm\nu_c$ measurements, but only modulo the sway/heave
ambiguity inherent in the kinematic identity (which is resolved by
the Stage-2 dynamic observer).
\end{itemize}

The directional channel~(a) is recoverable as a two-dimensional
quantity only if $\bm d(\sigma)$ does not lie in a single
one-dimensional subspace over the observation window $[t,t+T]$. To
quantify this, integrate the outer product $\bm d(\sigma)\bm d^{\!\top}(\sigma)$
over $[t,t+T]$:
\begin{equation}\label{eq:Gamma_def}
\Gamma(t,t+T):=\int_t^{t+T}\!\!\bm d(\sigma)\bm d^{\!\top}(\sigma)\,d\sigma
=\int_t^{t+T}\!\!\begin{bmatrix} r^2 & -rq\\ -rq & q^2\end{bmatrix}\!d\sigma.
\end{equation}
The matrix $\Gamma$ is positive semi-definite by construction, and
its kernel is precisely $\{\bm v\in\R^2:\bm d(\sigma)^{\!\top}\bm v=0\;\forall\sigma\in[t,t+T]\}$
(in $L^2$). Hence $\Gamma\succeq\alpha I_2$ for some $\alpha>0$ over
$[t,t+T]$ if and only if no direction in $\R^2$ is orthogonal to
$\bm d(\sigma)$ on a set of positive measure, which is precisely the
condition that the surge-channel measurements span the entire
transverse plane $\R^2$ over the window. This is the standard
persistent-excitation condition adapted to the directional channel
\citep[Sec.~13.4]{khalil2002nonlinear}, and is~\eqref{eq:E2}.

\textit{Step 3: synthesis.}
With (E1) supplying local surge identifiability and (E2) ensuring
that the transverse correction direction $\bm d(\sigma)=[r,-q]^{\!\top}$
spans $\R^2$ over the window $[t,t+T]$, the implemented
correction channel for $(v_c,w_c)$ is uniformly informative over
the window. Together with the assumed local input-to-state
stability of the Stage-2 sub-observer with respect to its driving
terms, this establishes the sufficient excitation condition used in
Theorem~\ref{thm:obs}.
\end{proof}

\subsubsection{Observer convergence}
The observer error coordinates are
\[
\tilde{\bm\eta}_o=[\tilde v_r,\tilde w_r,\tilde u_c,\tilde v_c,
\tilde w_c]^{\!\top},
\]
with $\tilde v_r=v_r-\hat v_r$, $\tilde w_r=w_r-\hat w_r$
(Stage~2 relative-velocity errors), and
$\tilde u_c=u_c-\hat u_c$, $\tilde v_c=v_c-\hat v_c$,
$\tilde w_c=w_c-\hat w_c$ (Stage~3 current errors).

\begin{theorem}[Observer convergence bound]\label{thm:obs}
Let $\mathcal I_{\rm PE}\subseteq[0,\infty)$ denote any time
interval over which the persistent-excitation
condition~\eqref{eq:E2} of Lemma~\ref{lem:obs_exc} is satisfied
uniformly. Under Assumptions~\ref{ass:current}--\ref{ass:operset},
the sufficient-excitation conditions (E1) and (E2) of
Lemma~\ref{lem:obs_exc} on $\mathcal I_{\rm PE}$, and the local
input-to-state stability of the Stage-2 relative-velocity
sub-observer~\eqref{eq:obs_vw} on the compact operating set
$\mathcal X$ with respect to its driving terms, with directional
excitation level on $\mathcal I_{\rm PE}$
\begin{equation}\label{eq:alpha_Gamma}
\alpha_\Gamma:=\inf_{t\in\mathcal I_{\rm PE}}\lambda_{\min}\!\!\int_t^{t+T}\!\!
\begin{bmatrix} r^2 & -rq\\ -rq & q^2\end{bmatrix}\!d\sigma,
\end{equation}
(units inherited from the squared angular-rate integral) and
corresponding normalised quantity
\begin{equation}\label{eq:alpha_PE}
\alpha_{\rm PE}:=\frac{\alpha_\Gamma}{T(q_{\max}^2+r_{\max}^2)}\in(0,1],
\end{equation}
and effective pseudo-measurement noise level
$\sigma_y:=\sup_{t\in\mathcal I_{\rm PE}}\|\hat{\bm\nu}_{\rm kin}(t)-\bm\nu(t)\|$,
the observer error coordinates
$\tilde{\bm\eta}_o=[\tilde v_r,\tilde w_r,\tilde u_c,\tilde v_c,\tilde w_c]^{\!\top}$
admit on $\mathcal I_{\rm PE}$ \emph{two distinct decay rates}:

\textnormal{(O1)} The Stage-1 high-gain differentiator error
satisfies, for $\epsilon\in(0,\epsilon^\star]$,
\begin{equation}\label{eq:hgd_bound}
\|\hat{\bm\eta}_p^{(1)}-\dot{\bm\eta}_p\|
\le \kappa_h\,e^{-\lambda_h t/\epsilon}\|\hat{\bm\eta}_p^{(1)}(0)-\dot{\bm\eta}_p(0)\|
+c_h\,\epsilon,
\end{equation}
with $\lambda_h$ determined by the Hurwitz spectrum of
$s^2+\alpha_1 s+\alpha_2$.

\textnormal{(O2)} The Stage-2/Stage-3 dynamic observer error
$\tilde{\bm\eta}_o$ satisfies
\begin{equation}\label{eq:obs_bound}
\|\tilde{\bm\eta}_o(t)\|\le \kappa_c\,e^{-\lambda_c t}
\|\tilde{\bm\eta}_o(0)\|+\delta_o(\epsilon,\bar c_1,\sigma_y),
\end{equation}
where $\lambda_c>0$ is determined by the Stage-2/Stage-3 gains
$\{k_v,k_w,k_{cu},k_{cv},k_{cw}\}$ together with the surge damping
$m_{\rm drag}$ from~\eqref{eq:m_drag_def} and the directional Gramian
$\alpha_{\rm PE}$ from~\eqref{eq:alpha_PE}, and
\begin{equation}\label{eq:delta_o}
\delta_o(\epsilon,\bar c_1,\sigma_y)
=c_\epsilon\,\epsilon+c_{c}\,\bar c_1+\frac{c_y\,\sigma_y}{\sqrt{\alpha_{\rm PE}}}.
\end{equation}
The constants $\kappa_h,\kappa_c,c_h,c_\epsilon,c_c,c_y$ are
continuous in the gains, the operating set, and $\alpha_{\rm PE}$.
\end{theorem}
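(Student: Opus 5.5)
The proof runs in two layers that mirror (O1) and (O2): the Stage-1 differentiator is analysed on its own fast time scale, and its output then enters Stages~2/3 only as a bounded pseudo-measurement perturbation.

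For (O1), set $\bm e_0=(\bm\eta_p-\hat{\bm\eta}_p^{(0)})/\epsilon$ and $\bm e_1=\dot{\bm\eta}_p-\hat{\bm\eta}_p^{(1)}$. In the fast time $s=t/\epsilon$, \eqref{eq:hgd} gives
\[
\epsilon\dot{\bm e}=(A_0\otimes I_3)\bm e+\epsilon\,[0,\ddot{\bm\eta}_p^{\!\top}]^{\!\top}.
\]
Here $A_0$ is the companion matrix of $s^2+\alpha_1s+\alpha_2$, which is Hurwitz. On $\mathcal X$ the term $\ddot{\bm\eta}_p$ is bounded, because $J$, $\bm\nu$, $\bm f$ and $\bm\tau$ are bounded there by Assumption~\ref{ass:operset} and (A4b). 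The quadratic Lyapunov function $P_0A_0+A_0^{\!\top}P_0=-I$ then gives \eqref{eq:hgd_bound}, with $\lambda_h$ slightly below $|\max\Re\,\mathrm{spec}(A_0)|$ and $c_h\propto\sup\|\ddot{\bm\eta}_p\|$. This is the standard argument of \citep[Sec.~14.5]{khalil2002nonlinear}. Since $R$ is orthogonal, the same bound carries over to $\hat{\bm\nu}_{\rm kin}-\bm\nu$. After the boundary layer, this fixes $\sigma_y=\mathcal O(\epsilon)$ plus any sensor noise.

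For (O2), subtract \eqref{eq:obs_vw}--\eqref{eq:obs_c} from the true dynamics and \eqref{eq:nuc_dot}. The inputs $\epsilon_r\tau_r,\epsilon_q\tau_q$ cancel exactly by Remark~\ref{rem:obs_G}, which gives an error system
\[
\dot{\tilde{\bm\eta}}_o=A_o(t)\tilde{\bm\eta}_o+\bm g(t,\tilde{\bm\eta}_o)+B_y\,\bm n_y+B_c\,R^{\!\top}\dot{\bm V}_c^n,
\]
with $\bm n_y=\hat{\bm\nu}_{\rm kin}-\bm\nu$ and $\bm g=\mathcal O(\|\tilde{\bm\eta}_o\|^2)$ collecting the Taylor remainders of $f_v,f_w$ and $r_u$. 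The surge row of $A_o$ contains $-k_{cu}(X_u+2X_{u|u|}|u_r|)/m_u$, so by (A4a)/(E1) it contributes a decay rate of at least $k_{cu}m_{\rm drag}/m_u$; this requires the sign convention that makes $k_{cu}r_u$ a stabilising correction. The transverse rows are a Stage-2/Stage-3 block with diagonal gains $k_v,k_w,k_{cv},k_{cw}$, coupled to $\tilde u_c$ only through the skew term $-\bm\omega_b\times\tilde{\bm\nu}_c$ and the row $\dot{\tilde u}_c\ni r\tilde v_c-q\tilde w_c$.

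I would first treat the Stage-2 sub-block through its assumed local ISS gain. Then I would view $(\tilde u_c,\tilde v_c,\tilde w_c)$ as a linear time-varying system with a strictly negative surge term and output injection along $\bm d(\sigma)=[r,-q]^{\!\top}$. Lemma~\ref{lem:obs_exc}, via the Gramian bound $\Gamma\succeq\alpha_\Gamma I_2$ in \eqref{eq:Gamma_def}, makes this pair uniformly completely observable. The classical PE result \citep[Sec.~13.4]{khalil2002nonlinear} then yields uniform exponential stability of the nominal part. That result supplies a time-varying Lyapunov function $V_o=\tilde{\bm\eta}_o^{\!\top}P(t)\tilde{\bm\eta}_o$ whose bounds scale like $1/\alpha_{\rm PE}$, which produces $\lambda_c=\lambda_c(\text{gains},m_{\rm drag},\alpha_{\rm PE})>0$ and $\kappa_c$.

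Next, $\dot V_o\le-\lambda_cV_o+c\sqrt{V_o}\,(\|\bm n_y\|+\bar c_1)$ holds on a neighbourhood where $\bm g$ is dominated, so the comparison lemma gives \eqref{eq:obs_bound}. The $\bar c_1$ term is absent under (C1). Because $\lambda_{\max}(P)/\lambda_{\min}(P)$ enters the ultimate bound, a square-root factor $1/\sqrt{\alpha_{\rm PE}}$ multiplies the measurement noise, giving $c_y\sigma_y/\sqrt{\alpha_{\rm PE}}$. The residual Stage-1 error, including the filtered $\dot\chi_u$ approximation error on the $1/\epsilon$ time scale, gives $c_\epsilon\epsilon$. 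Continuity of all constants in the gains, $\mathcal X$ and $\alpha_{\rm PE}$ follows from these explicit constructions.

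The main obstacle is the transverse step. The surge measurement is only the derivative channel $\dot{\tilde u}_c$, not $\tilde u_c$, and the transverse kinematic innovations are ambiguous between $\tilde v_r$ and $\tilde v_c$. I would therefore first try to make the PE argument rigorous with a two-time-scale ordering, in which $k_{cu}$ is large so that $\tilde u_c$ tracks the quasi-steady value $(r\tilde v_c-q\tilde w_c)/(k_{cu}m_{\rm drag}/m_u)$. The reduced slow system $\dot{\tilde{\bm\nu}}_{c,\perp}\approx-\kappa\,\bm d\bm d^{\!\top}\tilde{\bm\nu}_{c,\perp}$ is then exponentially stable under \eqref{eq:E2} by the standard averaging/PE lemma. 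Tracking the exact $\sqrt{\alpha_{\rm PE}}$ scaling through this reduction is where I expect the constants to be delicate.
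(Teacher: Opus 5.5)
Your proposal follows the paper's proof. (O1) comes from the fast-time quadratic Lyapunov argument for the Hurwitz companion matrix of $s^2+\alpha_1 s+\alpha_2$, after which Stage~1 enters Stages~2--3 only through $\sigma_y=\mathcal O(\epsilon)+\mathcal O(\sigma_{\rm noise})$. (O2) comes from a PE-type time-varying Lyapunov function for the linearised Stage-2/Stage-3 error system, with surge decay from (E1) and $m_{\rm drag}$ and transverse decay from the Gramian (E2); Young's inequality and the comparison lemma then handle the noise and current-rate terms. The paper's proof is no more rigorous than your plan at the transverse step, so this is the same argument.

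Two minor points, neither a gap.

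\emph{The large-$k_{cu}$ fallback is unnecessary.} The coupling $\dot{\tilde u}_c=-a\tilde u_c+\bm d^{\!\top}\tilde{\bm\nu}_{c,\perp}$, $\dot{\tilde{\bm\nu}}_{c,\perp}=-\bm d\,\tilde u_c+\cdots$, with $a\ge k_{cu}m_{\rm drag}/m_u$, is the classical skew-symmetric adaptive-identifier error structure. It is uniformly exponentially stable under (E2) for every $a>0$. So the direct appeal to standard PE results, as the paper makes, needs no gain ordering. Your time-scale route would instead add a large-gain hypothesis that the theorem does not have.

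\emph{Your count of the $1/\sqrt{\alpha_{\rm PE}}$ factor is incomplete.} Your condition-number argument omits the $\|P\|$ that multiplies the cross term $2\tilde{\bm\eta}_o^{\!\top}PB_y\bm n_y$. Including it gives a higher power of $\alpha_{\rm PE}^{-1}$. This is harmless only because the statement lets $c_y$ depend continuously on $\alpha_{\rm PE}$.
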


\begin{proof}
The proof is organised in three independent blocks: the Stage-1
high-gain differentiator (fast time scale $\tau=t/\epsilon$), and
the Stage-2/Stage-3 dynamic observer for the relative-velocity and
current variables (normal time scale $t$). These two scales must be
analysed separately to avoid the rate inflation pointed out in
Remark~\ref{rem:slow_rate}.

\textit{Step 1 (Stage-1 high-gain differentiator on fast time scale).}
Differentiating $\tilde{\bm\eta}_p^{(0)}=\bm\eta_p-\hat{\bm\eta}_p^{(0)}$
with the dynamics~\eqref{eq:hgd} and applying the change of variable
$\bm\eta_1=\epsilon^{-1}\tilde{\bm\eta}_p$ gives, in the fast time
$\tau=t/\epsilon$,
\begin{equation}\label{eq:eta1_dyn}
\frac{d\bm\eta_1}{d\tau}=A_1\bm\eta_1+\epsilon\,\bm b_1(\ddot{\bm\eta}_p),
\quad A_1=\begin{bmatrix}-\alpha_1 & I_3\\-\alpha_2 & 0\end{bmatrix},
\end{equation}
which is Hurwitz by the choice $\alpha_1,\alpha_2>0$. A standard
quadratic Lyapunov argument
\citep[Sec.~14.5]{khalil2002nonlinear} gives, in physical time,
\[
\|\bm\eta_1(t)\|\le\kappa_h\,e^{-\lambda_h t/\epsilon}\|\bm\eta_1(0)\|+\mathcal O(\epsilon),
\]
yielding~\eqref{eq:hgd_bound} as the bound (O1).

The output of Stage~1 enters Stages~2 and~3 as a
\emph{pseudo-measurement} of $\dot{\bm\eta}_p$ with bounded error
$\sigma_y$:
\begin{equation}\label{eq:sigma_y_def}
\sigma_y:=\sup_{t\in\mathcal I_{\rm PE}}\|\hat{\bm\nu}_{\rm kin}(t)-\bm\nu(t)\|
=\mathcal O(\epsilon)+\mathcal O(\sigma_{\rm noise}),
\end{equation}
which decouples Stage~1 from the slower dynamics below: only
$\sigma_y$ matters, not the internal Stage-1 transients.

\textit{Step 2 (Stage-2/Stage-3 dynamic observer on normal time scale).}
The Stage-2 error
$\bm\eta_2=[\tilde v_r,\tilde w_r]^{\!\top}$ follows directly
from~\eqref{eq:obs_vw}: subtracting $\dot{\hat v}_r$ from $\dot v$
and using row~2 of $G$ in~\eqref{eq:Gmat} yields
\begin{equation}\label{eq:eta2_dyn}
\frac{d\bm\eta_2}{dt}=-K_2\bm\eta_2+\Delta\bm f_2(\hat{\bm\nu}_r,\bm\nu_r,\theta)
+\bm n_y^{(2)}(t),
\end{equation}
with $K_2=\diag(k_v,k_w)\succ 0$, $\Delta\bm f_2$ a locally Lipschitz
mismatch with constant $L_2$ and $\Delta\bm f_2(\bm 0,\bm\nu_r,\theta)=0$,
and $\bm n_y^{(2)}$ collecting the pseudo-measurement noise from
Stage~1 (bounded by $\sigma_y$). The Stage-3 error
$\bm\eta_3=\tilde{\bm\nu}_c=\bm\nu_c-\hat{\bm\nu}_c$ follows
from~\eqref{eq:obs_c}--\eqref{eq:rsurge} together with the body-frame
current dynamics~\eqref{eq:nuc_dot}; on the \emph{normal} time scale
$t$ (no $\epsilon$ scaling),
\begin{equation}\label{eq:eta3_dyn}
\frac{d\bm\eta_3}{dt}=
-\bm\omega_b\times\bm\eta_3-K_3 M(t)\bm\eta_3
+R^{\!\top}\dot{\bm V}_c^n+\bm h_3(\bm\eta_1,\bm\eta_2,t),
\end{equation}
with $K_3=\diag(k_{cu},k_{cv},k_{cw})$, $M(t)$ the linearisation
matrix collecting the surge dynamic-residual sensitivity and the
transverse innovation chain (containing the direction
$[r(\sigma),-q(\sigma)]^{\!\top}$ from Lemma~\ref{lem:obs_exc}), and
$\bm h_3$ the coupling from upstream errors. Stacking
$\tilde{\bm\eta}_o=[\bm\eta_2^{\!\top},\bm\eta_3^{\!\top}]^{\!\top}$,
\begin{equation}\label{eq:eta_dot}
\dot{\tilde{\bm\eta}}_o=A_o(t)\tilde{\bm\eta}_o
+\Delta_o(\tilde{\bm\eta}_o,t)
+B_y\bm n_y(t)+B_c R^{\!\top}\dot{\bm V}_c^n,
\end{equation}
where $A_o(t)$ is the block-triangular time-varying matrix combining
$-K_2$ (Hurwitz) and the Stage-3 time-varying block, $\Delta_o$
collects the Lipschitz nonlinear mismatch with $\Delta_o(\bm 0,t)=0$,
$\bm n_y$ is the Stage-1 pseudo-measurement noise with
$\|\bm n_y\|\le\sigma_y$, and $B_c$ routes the current-rate
disturbance with $\|B_c R^{\!\top}\dot{\bm V}_c^n\|\le\bar c_1$. The
overall homogeneous system $\dot{\bm\eta}=A_o(t)\bm\eta$ is
\emph{uniformly exponentially stable in the PE sense}: $-K_2$ is
Hurwitz, the Stage-3 surge channel is exponentially stable under
(E1), and the transverse $(v_c,w_c)$ subspace is exponentially
stable under (E2) by standard PE machinery
\citep[Sec.~6.3]{khalil2002nonlinear}; we denote the corresponding
uniform Lyapunov matrix by $\mathcal P_o\succ 0$.

\textit{Step 3 (Lyapunov inequality on the slow time scale).}
The PE-Lyapunov construction
\citep[Sec.~6.3]{khalil2002nonlinear} guarantees the existence of a
uniform symmetric $\mathcal P_o(t)$ with
$\underline p_o I\preceq\mathcal P_o(t)\preceq\overline p_o I$ and
$\dot{\mathcal P}_o+A_o^{\!\top}\mathcal P_o+\mathcal P_o A_o
\preceq-q_o I$, where $q_o>0$ is bounded below by a constant
proportional to $\min\{\lambda_{\min}(K_2),\,\alpha_{\rm PE}^{1/2}\lambda_{\min}(K_3)\}$
(the $\alpha_{\rm PE}^{1/2}$ factor arises because the transverse
$(v_c,w_c)$ subspace inherits exponential decay through the directional
Gramian only after integration over the PE window). With
$V_o=\tilde{\bm\eta}_o^{\!\top}\mathcal P_o\tilde{\bm\eta}_o$, the
local Lipschitz constant $L_o$ of $\Delta_o$ on the compact operating
set, and Young's inequality
$2a^{\!\top}b\le\eta a^{\!\top}a+\eta^{-1}b^{\!\top}b$ applied to the
cross-terms with the noise $\bm n_y$ and the disturbance
$R^{\!\top}\dot{\bm V}_c^n$ with $\eta=q_o/(4\overline p_o^2)$,
\begin{equation}\label{eq:Vo_proof}
\dot V_o\le -\bigl(\tfrac{q_o}{2}-2\overline p_o L_o\bigr)\|\tilde{\bm\eta}_o\|^2
+\frac{4\overline p_o^2\|B_y\|^2}{q_o}\sigma_y^2
+\frac{4\overline p_o^2\|B_c\|^2}{q_o}\bar c_1^2.
\end{equation}
On the compact operating set, $L_o$ is bounded so that the
coefficient of $\|\tilde{\bm\eta}_o\|^2$ remains negative, giving the
slow decay rate
\begin{equation}\label{eq:lambda_c_def}
\lambda_c=\frac{q_o/2-2\overline p_o L_o}{\overline p_o},
\end{equation}
governed by the Stage-2/Stage-3 gains, the directional Gramian
$\alpha_{\rm PE}$ (through $q_o$), and the conditioning of
$\mathcal P_o$ --- \emph{not} by $1/\epsilon$. The recipe
$\alpha_{\rm PE}\downarrow\Rightarrow q_o\downarrow$ identifies the
$\alpha_{\rm PE}$-dependence of the noise coefficient: dividing the
right-hand-side floor of~\eqref{eq:Vo_proof} by $\underline p_o\lambda_c$
gives the steady-state bound on
$V_o/\underline p_o\sim\|\tilde{\bm\eta}_o\|^2$ proportional to
$\sigma_y^2/q_o\sim\sigma_y^2/\alpha_{\rm PE}^{1/2}$. Taking the
square root yields the sharper PE-noise scaling
$\|\tilde{\bm\eta}_o\|\sim\sigma_y/\alpha_{\rm PE}^{1/4}$. Under the
adopted normalisation $\alpha_{\rm PE}\in(0,1]$ on the operating
window, $\alpha_{\rm PE}^{-1/4}\le\alpha_{\rm PE}^{-1/2}$, so the
conservative monotone upper bound
\begin{equation}\label{eq:cy_def}
\frac{c_y\sigma_y}{\sqrt{\alpha_{\rm PE}}}
\ge\frac{c_y\sigma_y}{\alpha_{\rm PE}^{1/4}}
\end{equation}
is reported in~\eqref{eq:delta_o} for notational simplicity and to
keep a single monotone bound consistent with the form of the
Comparison-Lemma estimate below. Applying the Comparison
Lemma~\citep[Lem.~3.4]{khalil2002nonlinear} to~\eqref{eq:Vo_proof} then
yields
\[
\|\tilde{\bm\eta}_o(t)\|\le\kappa_c\,e^{-\lambda_c t}\|\tilde{\bm\eta}_o(0)\|
+c_\epsilon\epsilon+c_c\bar c_1+\frac{c_y\sigma_y}{\sqrt{\alpha_{\rm PE}}},
\]
which is the bound (O2) in~\eqref{eq:obs_bound}, with the
$c_\epsilon\epsilon$ contribution arising from
$\sigma_y=\mathcal O(\epsilon)+\mathcal O(\sigma_{\rm noise})$
of~\eqref{eq:sigma_y_def}.
\end{proof}

The structural meaning of \eqref{eq:delta_o} is: poor directional
excitation ($\alpha_{\rm PE}\to 0$) does not introduce bias in the
noise-free constant-current case but amplifies pseudo-measurement
errors through $c_y\sigma_y/\sqrt{\alpha_{\rm PE}}$.

\subsection{Relative-velocity scheduling and LPV embedding}
\label{subsec:scheduling}

\subsubsection{Estimated scheduling map}
Since the hydrodynamics depend on $\bm\nu_r$, the LPV scheduling
must use the estimated relative velocity. The implemented scheduling
vector is the five-component map of damping, angular-rate, speed,
and attitude entries that govern the parameter-dependent rows of
$A_{\rm pre}(\bm\rho)$:
\begin{equation}\label{eq:rho_hat}
\hat{\bm\rho}=\bigl[\,|\hat u_r|,\;|q|,\;|r|,\;1/\hat u_l,\;\cos\theta\,\bigr]^{\!\top},
\end{equation}
with $\hat u_r=u-\hat u_c$ supplied by the Stage-3 current filter,
$\hat u_l=\sqrt{u^2+v^2+w^2}$ (with $v,w$ replaced by $\hat v_r+\hat v_c,\hat w_r+\hat w_c$
in the implementation) saturated away from zero by the lower
polytope bound, and $(q,r,\theta)$ measured directly. The polytope
is the hyperrectangle
\begin{equation}\label{eq:Pcal}
\mathcal P=\{\bm\rho:\,\underline\rho_i\le\rho_i\le\bar\rho_i,
\,i=1,\dots,5\},
\end{equation}
with $N=2^5=32$ vertices $\{\bm\rho^{(i)}\}_{i=1}^{N}$. The explicit
ranges used in the numerical study are reported in
Table~\ref{tab:params}.

\begin{remark}[Choice of scheduling parameters]\label{rem:5param}
The vector~\eqref{eq:rho_hat} includes the absolute angular rates
$(|q|,|r|)$ since they enter the linearised pitch/yaw damping
derivatives~\eqref{eq:a22a33} multiplicatively. The angle-of-attack
and side-slip dependence on $(\hat v_r,\hat w_r)$, as well as the
$p_e$ dependence of the EMO regularisation, are absorbed into the
embedding residual $\bar\Delta_{\rm emb}$ of
Definition~\ref{def:embedding}; expanding the scheduling vector to
include these explicitly would tighten the embedding error at the
cost of additional polytope vertices.
\end{remark}

\subsubsection{Residual-level break-even law}

The central design choice is to evaluate the feedforward and the
scheduling at the \emph{estimated relative} velocity. The following
lemma quantifies the residual-level benefit and its conditional nature.

\begin{lemma}[Surge-channel break-even law with two-sided
bound]\label{lem:relsched}
Suppose the surge feedforward $\bm\tau_{\rm ff}$ contains the full
surge damping cancellation $-(X_u\xi+X_{u|u|}\xi|\xi|)/m_u$, with
$\xi$ the surge velocity used for compensation. Let $e_{\rm ds}$
denote the residual surge disturbance entering~\eqref{eq:lpv_err}
after this cancellation, and $\tilde u_c:=u_c-\hat u_c$. On the
compact operating interval $\mathcal I_u=\{s:|s|\le\bar u_{r,\max}\}$
with $\bar u_{r,\max}=\bar V_d^{\max}+V_{cM}$, define
\begin{align}
L_{\rm drag}&:=|X_u|+2|X_{u|u|}|\bar u_{r,\max},\label{eq:Ldrag}\\
m_{\rm drag}&:=\inf_{|s|\le\bar u_{r,\max}}\!|X_u+2X_{u|u|}|s||,\label{eq:mdrag}
\end{align}
with $m_{\rm drag}>0$ under Assumption~\ref{ass:operset}~(A4a).
Then:
\begin{enumerate}
\item[\textnormal{(a)}] \emph{Absolute scheduling} ($\xi=u$):
\begin{align}
e_{\rm ds}^{\rm abs}&=\frac{X_u u_c+X_{u|u|}(u|u|-u_r|u_r|)}{m_u},\\
\frac{m_{\rm drag}}{m_u}|u_c|\;\le\;&|e_{\rm ds}^{\rm abs}|\;\le\;
\frac{L_{\rm drag}}{m_u}|u_c|.\label{eq:e_ds_abs}
\end{align}
\item[\textnormal{(b)}] \emph{Estimated relative scheduling}
($\xi=\hat u_r$):
\begin{align}
e_{\rm ds}^{\rm rel}&=\frac{-X_u\tilde u_c+X_{u|u|}(\hat u_r|\hat u_r|-u_r|u_r|)}{m_u},\\
|e_{\rm ds}^{\rm rel}|&\le\frac{L_{\rm drag}}{m_u}|\tilde u_c|.\label{eq:e_ds_rel}
\end{align}
\end{enumerate}
Combining the upper bound in $(b)$ with the lower bound in $(a)$
yields the rigorous residual ratio
\begin{equation}\label{eq:ratio}
\;
\frac{|e_{\rm ds}^{\rm rel}|}{|e_{\rm ds}^{\rm abs}|}
\;\le\;\frac{L_{\rm drag}}{m_{\rm drag}}\cdot\frac{|\tilde u_c|}{|u_c|},\;
\end{equation}
and the strict break-even
$|e_{\rm ds}^{\rm rel}|<|e_{\rm ds}^{\rm abs}|$ is therefore
\emph{guaranteed} whenever
\begin{equation}\label{eq:chi}
\;|\tilde u_c|<\chi\,|u_c|,\qquad
\chi:=\frac{m_{\rm drag}}{L_{\rm drag}}\in(0,1].\;
\end{equation}
The ratio $\chi$ measures the condition number of the surge damping
nonlinearity over $\mathcal I_u$; $\chi=1$ when the damping is purely
linear ($X_{u|u|}=0$), and $\chi\to 0$ as the operating range
$\bar u_{r,\max}$ grows large relative to $|X_u|/|X_{u|u|}|$.
\end{lemma}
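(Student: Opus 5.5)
The plan is to write both residuals as increments of the single scalar damping map $h(s):=X_u s+X_{u|u|}s|s|$. This map is $C^1$ with $h'(s)=X_u+2X_{u|u|}|s|$, using $\phi'(s)=2|s|$ for $\phi(s)=s|s|$ exactly as in the proof of Lemma~\ref{lem:obs_exc}. The true surge dynamics carry the damping $h(u_r)/m_u$, while the feedforward cancels $h(\xi)/m_u$. Hence the residual is $e_{\rm ds}=\pm\,(h(\xi)-h(u_r))/m_u$; the sign convention is fixed by the orientation of $u_{le}^m$ and is irrelevant for the bounds.

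For $\xi=u$ we have $u-u_r=u_c$, which gives $e_{\rm ds}^{\rm abs}=(X_uu_c+X_{u|u|}(u|u|-u_r|u_r|))/m_u$. For $\xi=\hat u_r=u-\hat u_c$ we have $\hat u_r-u_r=u_c-\hat u_c=\tilde u_c$. I will state the displayed formula with the sign on the linear term matching the paper's orientation, noting that only $|\cdot|$ is used below.

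The core step is the mean value theorem: $h(a)-h(b)=h'(s^\ast)(a-b)$ for some $s^\ast$ between $a$ and $b$. This requires both endpoints to lie in $\mathcal I_u$, so that the intermediate point does too. For $u_r$ this is the operating-set bound. For $u=u_r+u_c$ and for $\hat u_r$ I will argue from Assumption~\ref{ass:ref} and Assumption~\ref{ass:operset} that the surge speeds are confined to $|s|\le\bar V_d^{\max}+V_{cM}=\bar u_{r,\max}$. If necessary I will also use the saturation of $\hat u_r$ into the polytope range, or simply define $\mathcal I_u$ to contain all three speeds, since this is the intended meaning of "compact operating interval".

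With all three speeds in $\mathcal I_u$, the bound $|h'(s)|\le |X_u|+2|X_{u|u|}|\bar u_{r,\max}=L_{\rm drag}$ holds by the triangle inequality. Likewise $|h'(s)|\ge m_{\rm drag}$ holds by the definition \eqref{eq:mdrag}, with $m_{\rm drag}>0$ guaranteed by (A4a). These yield the two-sided bound \eqref{eq:e_ds_abs} and the upper bound \eqref{eq:e_ds_rel}.

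The ratio \eqref{eq:ratio} then follows by dividing the upper bound in (b) by the lower bound in (a), for $u_c\neq0$. Requiring its right-hand side to be below $1$ gives \eqref{eq:chi}.

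Finally, $\chi\le1$ because $m_{\rm drag}\le|h'(0)|=|X_u|\le L_{\rm drag}$. When $X_{u|u|}=0$, $h'$ is constant and $\chi=1$. For the asymptotic claim, suppose $X_u$ and $X_{u|u|}$ share sign, as for physical drag. Then $m_{\rm drag}=|X_u|$, while $L_{\rm drag}$ grows linearly in $\bar u_{r,\max}$, so $\chi\to0$. If the signs differed, $h'$ could vanish on $\mathcal I_u$; this case is excluded by (A4a).

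The main obstacle is the domain issue in the mean value step, namely guaranteeing that $u$ and $\hat u_r$, and not only $u_r$, lie in $\mathcal I_u$. I expect to resolve it by an explicit standing convention, possibly enlarging $\bar u_{r,\max}$ slightly or invoking saturation of $\hat u_r$, rather than by any new estimate.
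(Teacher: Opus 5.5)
Your proposal is correct and follows essentially the same route as the paper: write the residual as the increment $(h(\xi)-h(u_r))/m_u$ of the damping map, apply the mean value theorem with $|h'|\le L_{\rm drag}$ for the upper bounds and $|h'|\ge m_{\rm drag}$ for the lower bound in (a), then divide. (The paper bounds the quadratic part via the mean value theorem on $\phi(s)=s|s|$ plus the triangle inequality, which gives the same constant.) Your computation $\hat u_r-u_r=\tilde u_c$ is the right one: the paper's proof writes $-\tilde u_c$, so under the orientation fixed by (a) the linear term in (b) should read $+X_u\tilde u_c$, which is harmless for the bounds. Your explicit convention that $u$ and $\hat u_r$, not only $u_r$, lie in $\mathcal I_u$ is assumed implicitly by the paper as well, so it makes that step more careful rather than different.
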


\begin{proof}
The proof proceeds by (i) deriving the residual form, (ii) bounding
the drag Lipschitz constant, (iii) bounding each case, and (iv)
deriving the strict break-even constant $\chi$ via a first-order
expansion of the residuals.

\textit{Step (i): residual form.}
The surge equation in~\eqref{eq:tracking_dyn} contains the surge
damping term $-(X_u u_r+X_{u|u|}u_r|u_r|)/m_u$ from the
hydrodynamics evaluated at the true relative velocity. The
feedforward $\bm\tau_{\rm ff}$ in~\eqref{eq:tau_ff} cancels the
damping evaluated at $\xi$, namely $-(X_u\xi+X_{u|u|}\xi|\xi|)/m_u$.
Their difference is the residual entering~\eqref{eq:lpv_err}:
\begin{equation}\label{eq:eds_form}
e_{\rm ds}=\frac{X_u(\xi-u_r)+X_{u|u|}(\xi|\xi|-u_r|u_r|)}{m_u}.
\end{equation}

\textit{Step (ii): drag Lipschitz constant.}
Let $\phi(s)=s|s|$. Then $\phi'(s)=2|s|$ wherever differentiable
(and $\phi$ is differentiable on $\R$ with $\phi'(0)=0$, since
$\phi(s)=s^2\,\mathrm{sgn}(s)$). On $\mathcal I_u$,
$|\phi'(s)|\le 2\bar u_{r,\max}$, so the mean-value theorem gives,
for any $a,b\in\mathcal I_u$,
\begin{equation}\label{eq:phi_mvt}
|\phi(a)-\phi(b)|=\Bigl|\!\int_b^{a}\phi'(s)\,ds\Bigr|\le 2\bar u_{r,\max}|a-b|.
\end{equation}
Combining the linear and quadratic damping contributions in
\eqref{eq:eds_form},
\begin{align*}
|e_{\rm ds}|&\le\frac{|X_u||\xi-u_r|+|X_{u|u|}|\,2\bar u_{r,\max}|\xi-u_r|}{m_u}\\
&=\frac{L_{\rm drag}}{m_u}|\xi-u_r|,
\end{align*}
which yields $L_{\rm drag}=|X_u|+2|X_{u|u|}|\bar u_{r,\max}$
in~\eqref{eq:Ldrag}.

\textit{Step (iii.a): absolute scheduling, $\xi=u$.}
Substituting $\xi=u$ into~\eqref{eq:eds_form} and using
$u-u_r=u_c$ from the definition $u_r=u-u_c$:
\begin{align*}
e_{\rm ds}^{\rm abs}&=\frac{X_u(u-u_r)+X_{u|u|}(u|u|-u_r|u_r|)}{m_u}\\
&=\frac{X_u u_c+X_{u|u|}\bigl(u|u|-u_r|u_r|\bigr)}{m_u}.
\end{align*}
The \emph{upper} bound in~\eqref{eq:e_ds_abs} follows from Step~(ii)
with $a=u,b=u_r,|a-b|=|u_c|$, giving
$|e_{\rm ds}^{\rm abs}|\le L_{\rm drag}|u_c|/m_u$. For the
\emph{lower} bound, factor the leading term: applying the
mean-value theorem to the function
$h(\xi)=X_u\xi+X_{u|u|}\phi(\xi)$ between $\xi=u_r$ and $\xi=u$
gives $h(u)-h(u_r)=h'(\bar\xi)\,u_c$ for some $\bar\xi\in[u_r,u]$,
with $h'(\bar\xi)=X_u+2X_{u|u|}|\bar\xi|$. Since
$|\bar\xi|\le\bar u_{r,\max}$ on $\mathcal I_u$, the
identifiability bound \textnormal{(A4a)} of
Assumption~\ref{ass:operset} gives
$|h'(\bar\xi)|\ge m_{\rm drag}>0$, hence
\[
|e_{\rm ds}^{\rm abs}|=\frac{|h'(\bar\xi)|}{m_u}|u_c|\ge\frac{m_{\rm drag}}{m_u}|u_c|.
\]
Combined with the upper bound, this gives the two-sided
bound~\eqref{eq:e_ds_abs}.

\textit{Step (iii.b): estimated relative scheduling, $\xi=\hat u_r$.}
Substituting $\xi=\hat u_r$ into~\eqref{eq:eds_form} and using
$\hat u_r-u_r=-\tilde u_c$:
\begin{align*}
e_{\rm ds}^{\rm rel}&=\frac{X_u(\hat u_r-u_r)+X_{u|u|}(\hat u_r|\hat u_r|-u_r|u_r|)}{m_u}\\
&=\frac{-X_u\tilde u_c+X_{u|u|}\bigl(\hat u_r|\hat u_r|-u_r|u_r|\bigr)}{m_u}.
\end{align*}
The upper bound \eqref{eq:e_ds_rel} follows from Step~(ii) with
$a=\hat u_r,b=u_r,|a-b|=|\tilde u_c|$.

\textit{Step (iv): rigorous ratio from two-sided bounds.}
Dividing the upper bound in~\eqref{eq:e_ds_rel} by the lower bound
in~\eqref{eq:e_ds_abs} yields
\[
\frac{|e_{\rm ds}^{\rm rel}|}{|e_{\rm ds}^{\rm abs}|}
\le\frac{L_{\rm drag}|\tilde u_c|/m_u}{m_{\rm drag}|u_c|/m_u}
=\frac{L_{\rm drag}}{m_{\rm drag}}\cdot\frac{|\tilde u_c|}{|u_c|},
\]
which is~\eqref{eq:ratio}. The strict break-even condition
$|e_{\rm ds}^{\rm rel}|<|e_{\rm ds}^{\rm abs}|$ is therefore
\emph{rigorously guaranteed} whenever
$|\tilde u_c|<(m_{\rm drag}/L_{\rm drag})|u_c|$, which is~\eqref{eq:chi}
with $\chi=m_{\rm drag}/L_{\rm drag}\in(0,1]$. The ratio $\chi$
collapses to $1$ when the damping is purely linear ($X_{u|u|}=0$,
hence $m_{\rm drag}=L_{\rm drag}=|X_u|$) and shrinks as the
quadratic-damping contribution grows relative to the linear one.
\end{proof}

\begin{corollary}[Vector hydrodynamic residual]\label{cor:vec_res}
Let $\Pi_{\rm tr}\in\R^{3\times 5}$ select the translational
hydrodynamic force components and define
$h_{\rm tr}(\bm\nu_r):=\Pi_{\rm tr}[C(\bm\nu_r)\bm\nu_r+D(\bm\nu_r)\bm\nu_r]$.
Under the local Lipschitz assumption on $h_{\rm tr}$ with constant
$L_h>0$ on the compact operating set,
\begin{align}
\|h_{\rm tr}(\bm\nu_r+\bm\nu_c)-h_{\rm tr}(\bm\nu_r)\|&\le L_h\|\bm\nu_c\|,
\label{eq:vec_abs}\\
\|h_{\rm tr}(\hat{\bm\nu}_r)-h_{\rm tr}(\bm\nu_r)\|&\le L_h\|\tilde{\bm\nu}_c\|,
\label{eq:vec_rel}
\end{align}
so the upper-bound ratio of vector residuals is at most
$\|\tilde{\bm\nu}_c\|/\|\bm\nu_c\|$. Formally, this is a
first-order/upper-bound heuristic: a strict ratio inequality
analogous to~\eqref{eq:ratio} requires a directional lower bound on
the Jacobian of $h_{\rm tr}$ (an analogue of $m_{\rm drag}$ for the
transverse channels), which we do not assume here. In the surge
channel, where Assumption~\ref{ass:operset}~(A4a) supplies such a
lower bound, the rigorous form is~\eqref{eq:ratio}.
\end{corollary}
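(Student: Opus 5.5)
The plan is to prove the corollary by repeating Step~(ii) of Lemma~\ref{lem:relsched} in vector form, without the surge-specific lower bound. Both inequalities follow from one Lipschitz estimate on $h_{\rm tr}$ once the arguments are shown to stay in the compact set where $L_h$ is valid. Before using $L_h$, I will check that $\bm\nu_r$, $\bm\nu_r+\bm\nu_c$ and $\hat{\bm\nu}_r$ all lie in a compact set $\mathcal K$ on which $h_{\rm tr}$ is $L_h$-Lipschitz. Assumption~\ref{ass:operset} bounds $u$, $q$, $r$ and gives $\|\bm\nu_c\|\le V_{cM}$. The sway and heave relative velocities are bounded on $\mathcal X$, and by the ultimate bound~\eqref{eq:obs_bound} of Theorem~\ref{thm:obs} so are the estimates $\hat v_r,\hat w_r,\hat{\bm\nu}_c$. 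I will take $\mathcal K$ to be a convex enlargement of these sets (for example, a ball) and define $L_h:=\sup_{\mathcal K}\|\partial h_{\rm tr}/\partial\bm\nu_r\|$. Convexity makes the mean-value/integral form of the Lipschitz inequality valid along segments, exactly as in~\eqref{eq:phi_mvt}. Here $\bm\nu_c$ is padded with zeros in the angular slots, so that $\bm\nu=\bm\nu_r+\bm\nu_c\in\R^5$.

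For~\eqref{eq:vec_abs}, absolute compensation evaluates $h_{\rm tr}$ at the ground velocity $\bm\nu=\bm\nu_r+\bm\nu_c$. Writing
\[
h_{\rm tr}(\bm\nu_r+\bm\nu_c)-h_{\rm tr}(\bm\nu_r)=\int_0^1\frac{\partial h_{\rm tr}}{\partial\bm\nu_r}(\bm\nu_r+s\bm\nu_c)\,\bm\nu_c\,ds
\]
and bounding the Jacobian by $L_h$ gives the claim. For~\eqref{eq:vec_rel}, I first identify the estimation error. Since $\hat u_r=u-\hat u_c$ and the angular rates are measured, the surge and angular components of $\hat{\bm\nu}_r-\bm\nu_r$ are $-\tilde u_c$ and $0$. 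The transverse components are $-(\tilde v_r,\tilde w_r)$.

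This is the step that needs care, because the corollary's right-hand side $L_h\|\tilde{\bm\nu}_c\|$ involves only the current error. I will handle it by adopting the implementation convention of~\eqref{eq:rho_hat}, in which the ground velocity is reconstructed as $\hat v_r+\hat v_c$, $\hat w_r+\hat w_c$. With this convention, the relative velocity used for compensation is $\hat{\bm\nu}_r=\bm\nu-\hat{\bm\nu}_c$ (up to the Stage-1 pseudo-measurement error $\sigma_y$), so $\hat{\bm\nu}_r-\bm\nu_r=-\tilde{\bm\nu}_c$. I will state explicitly that any remaining discrepancy is $\mathcal O(\|(\tilde v_r,\tilde w_r)\|)$ and can be folded into $\tilde{\bm\nu}_c$ or into $\delta_o$ of Theorem~\ref{thm:obs}. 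The same integral argument then gives~\eqref{eq:vec_rel}.

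Finally, dividing the two upper bounds gives the upper-bound ratio $\|\tilde{\bm\nu}_c\|/\|\bm\nu_c\|$. I will emphasise that this is only a ratio of upper bounds, not a bound on the ratio of the residuals themselves. The true absolute residual could be much smaller than $L_h\|\bm\nu_c\|$ if $\bm\nu_c$ lies near the kernel of the Jacobian. A strict inequality would need a lower bound $\|\partial h_{\rm tr}\,\bm\nu_c\|\ge m_h\|\bm\nu_c\|$, the vector analogue of the lower bound in Step~(iii.a), and that is precisely what (A4a) provides only in surge. The main obstacle is therefore the identification $\hat{\bm\nu}_r-\bm\nu_r=-\tilde{\bm\nu}_c$ in the transverse channels, not the Lipschitz step itself.
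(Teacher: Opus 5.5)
Your Lipschitz step is what the paper relies on. The corollary has no separate proof: it is the assumed $L_h$-Lipschitz bound applied to the argument differences $\bm\nu_c$ and $\hat{\bm\nu}_r-\bm\nu_r$. The latter is identified with $\tilde{\bm\nu}_c$ through the convention $\hat{\bm\nu}_r=\bm\nu-\hat{\bm\nu}_c$ in the caption of Fig.~\ref{fig:arch}. Your convex enlargement and Jacobian supremum are unnecessary, since $L_h$ is a hypothesis, and they presuppose that $h_{\rm tr}$ is differentiable, which the Lipschitz hypothesis alone does not give. They do usefully make explicit that $\hat{\bm\nu}_r$ must lie in the Lipschitz domain. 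Your closing point also matches the paper: only a ratio of upper bounds is obtained, and a strict inequality needs a directional lower bound that (A4a) supplies only in surge. You are right that the real content is the identification, which the paper does not address. With the Algorithm~\ref{alg:observer} estimate, $\hat{\bm\nu}_r-\bm\nu_r=(\tilde u_c,-\tilde v_r,-\tilde w_r,0,0)^{\!\top}$; note that $\hat u_r-u_r=u_c-\hat u_c=+\tilde u_c$, although the sign is immaterial for norms.

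Your argument fails where you derive $\hat{\bm\nu}_r-\bm\nu_r=-\tilde{\bm\nu}_c$ from the convention in \eqref{eq:rho_hat}. That convention only assembles a ground-velocity reconstruction $\hat v:=\hat v_r+\hat v_c$ from the same Stage-2 output; it does not change $\hat v_r$. Substituting it back is therefore circular: $\hat v_r-v_r=(\hat v-v)+\tilde v_c$ with $\hat v-v=-(\tilde v_r+\tilde v_c)$, which just returns $-\tilde v_r$.

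Nor is $\hat v-v$ of order $\sigma_y$. The sum $\hat v_r+\hat v_c$ is not the Stage-1 estimate $\hat v_{\rm kin}$, and the two differ by the transverse innovation $\hat v_{\rm kin}-\hat v_r-\hat v_c=(\hat v_{\rm kin}-v)+\tilde v_r+\tilde v_c$. This is small only if you separately prove that $\tilde v_r+\tilde v_c$ is small; the innovation constrains only this sum, which is the ambiguity noted in Lemma~\ref{lem:obs_exc}. ``Folding the discrepancy into $\tilde{\bm\nu}_c$'' would redefine the symbol rather than prove \eqref{eq:vec_rel}.

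Two honest closures are available:
\begin{enumerate}
\item[(i)] State as a definition, as the paper implicitly does, that the compensating relative velocity is $\bm\nu-\hat{\bm\nu}_c$, consistent with $\hat u_r=u-\hat u_c$. Then $\hat{\bm\nu}_r-\bm\nu_r=\tilde{\bm\nu}_c$ identically and \eqref{eq:vec_rel} is immediate. An additive $L_h\sigma_y$ appears if the unmeasured transverse part of $\bm\nu$ is replaced by $\hat{\bm\nu}_{\rm kin}$.
\item[(ii)] Keep the Stage-2 estimate and prove the weaker bound with right-hand side $L_h\|(\tilde u_c,\tilde v_r,\tilde w_r)\|\le L_h\|\tilde{\bm\eta}_o\|$. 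This is the form the paper actually uses later in \eqref{eq:Ve_iss}, and the ratio statement then carries $\|\tilde{\bm\eta}_o\|$ in the numerator.
\end{enumerate}
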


Lemma~\ref{lem:relsched} is deliberately conditional: relative
scheduling is beneficial only if the current estimate is better than
the trivial estimate $\hat{\bm\nu}_c=\bm 0$. If the observer error
approaches the current magnitude --- as can occur in the transverse
channels under poor directional excitation (Lemma~\ref{lem:obs_exc}) ---
the scheme degrades to additive disturbance rejection with no net gain.

\subsubsection{Approximate polytopic embedding}
Because $A_{\rm pre}(\bm\rho)$ and $E(\bm\rho)$ contain trigonometric
and rational functions of $\bm\rho$ (notably $1/u_l$ and
$s_{\theta_a},c_{\theta_a}$), the pair $(A_{\rm pre},E)(\bm\rho)$
is \emph{not} affine in $\bm\rho$ and hence not exactly polytopic.
We introduce an approximate polytopic embedding for this pair.

\begin{definition}[Approximate polytopic embedding]
\label{def:embedding}
Let $\{\lambda_i(\bm\rho)\}_{i=1}^{N}$ be the barycentric weights on
$\mathcal P$, defined as the product of five univariate linear weights:
$\lambda_i\ge 0$, $\sum_i\lambda_i=1$. The embedding residuals are
\begin{equation}\label{eq:emb_residuals}
\begin{aligned}
\Delta_A(\bm\rho)&=A_{\rm pre}(\bm\rho)-\sum_{i=1}^{N}\lambda_i(\bm\rho)A_i,\\
\Delta_E(\bm\rho)&=E(\bm\rho)-\sum_{i=1}^{N}\lambda_i(\bm\rho)E_i,
\end{aligned}
\end{equation}
(no $\Delta_B$ since $B_u$ is constant by~\eqref{eq:Bu_def}).
The \emph{true} uniform embedding-residual bound is
\begin{equation}\label{eq:emb_bound}
\bar\Delta_{\rm emb}:=
\sup_{\bm\rho\in\mathcal P}
\bigl\|\,[\Delta_A(\bm\rho)\;\Delta_E(\bm\rho)]\,\bigr\|,
\end{equation}
which is the quantity appearing in the formal certificate of
Theorems~\ref{thm:lmi} and~\ref{thm:cl}. In practice, $\bar\Delta_{\rm emb}$
is not computed exactly; instead, it is \emph{estimated} on a finite
grid $\mathcal P_{\rm grid}\subset\mathcal P$:
\begin{equation}\label{eq:emb_grid}
\bar\Delta_{\rm emb}^{\rm grid}:=
\max_{\bm\rho\in\mathcal P_{\rm grid}}
\bigl\|\,[\Delta_A(\bm\rho)\;\Delta_E(\bm\rho)]\,\bigr\|.
\end{equation}
A Lipschitz correction relates the two:
\begin{equation}\label{eq:emb_lipschitz}
\bar\Delta_{\rm emb}\le\bar\Delta_{\rm emb}^{\rm grid}+L_\Delta h_\rho,
\end{equation}
where $h_\rho$ is the grid resolution and $L_\Delta$ is the
Lipschitz constant of $\Delta_A,\Delta_E$ on $\mathcal P$, both
finite by smoothness on the compact set
(Assumption~\ref{ass:operset}). The embedding contribution
$\Delta_A\bm x_e+\Delta_E\bm w$ is routed into the disturbance
channel as the bounded perturbation $\bm w_{\rm emb}$, augmenting
$\bm w_{\rm aug}=[\bm w^{\!\top},\bm w_{\rm emb}^{\!\top}]^{\!\top}$.
\end{definition}

\begin{remark}[Grid certificate vs uniform certificate]
\label{rem:emb_correction}
For the numerical study of Section~\ref{sec:simulation}, direct
evaluation of $\Delta_A,\Delta_E$ on the polytopic grid gives
$\bar\Delta_{\rm emb}^{\rm grid}\approx 0$ within numerical
precision (the embedding is exact at the polytope vertices by
construction, and the interior departure from linear interpolation
in the trigonometric and rational entries is small on the tested
grid). The functions $\Delta_A,\Delta_E$ are smooth on $\mathcal P$,
so $L_\Delta$ is bounded; the conservative Lipschitz-corrected
uniform bound used in the analytical margin check is
$\bar\Delta_{\rm emb}\le 0.10$. The LMI is solved with
$\bar\Delta_{\rm emb}^{\rm grid}$ in $\tilde E_i$
(see~\eqref{eq:Etilde_def}); the formal Theorems~\ref{thm:lmi}
and~\ref{thm:cl} are valid under~\eqref{eq:emb_bound}, and the
margin~\eqref{eq:emb_margin} is reported in
Section~\ref{subsec:certified} under both the grid-tight value and
the conservative uniform value, showing that the latter is too
conservative to certify the margin on the present polytope. A
sum-of-squares or branch-and-bound verification of
$\bar\Delta_{\rm emb}$ over the continuous polytope would replace
the Lipschitz overshoot and is the natural strengthening for future
work.
\end{remark}

\begin{remark}[Non-conservative S-procedure upgrade]
\label{rem:emb_conservative}
Strictly speaking, $\Delta_A(\bm\rho)\bm x_e$ is a state-dependent
parametric uncertainty rather than an exogenous disturbance. Treating
it additively is conservative. A non-conservative upgrade via
Petersen's lemma \citep{petersen1987stabilization} writes
$\Delta_{\rm emb}=HFE$ with $F^{\!\top}F\preceq I$ on $\mathcal P$, adds
a multiplier $\lambda_{\rm emb}>0$ to~\eqref{eq:lmi}, and absorbs the
embedding contribution through $\lambda_{\rm emb}HH^{\!\top}$ and
$\lambda_{\rm emb}^{-1}E^{\!\top}E$ blocks. The vertex interpolation
and the closed-loop analysis of Theorem~\ref{thm:cl} proceed
unchanged, with the $c_\Delta\bar\Delta_{\rm emb}^2$ contribution to
the ultimate bound tightened. This upgrade is recommended when
$\bar\Delta_{\rm emb}$ is not negligible relative to the LMI margin.
\end{remark}

\subsection{LPV-$\Hcal_\infty$ correction-layer synthesis}\label{subsec:lmi}

This subsection develops the scheduled correction gain
$K(\bm\rho)=(\sum_i\lambda_i(\bm\rho)W_i)Y^{-1}$ implementing the
virtual correction
$\delta\bm v_{\Hcal_\infty}=K(\bm\rho)\bm x_e$, together with the
$\Lcal_2$ certificate of Theorem~\ref{thm:lmi}.

\subsubsection{Common Lyapunov matrix with vertex-dependent gains: rationale}
\label{subsec:common_Y_rationale}

The pre-stabilised dynamics~\eqref{eq:lpv_err} have a \emph{constant}
input matrix $B_u$ after the feedback-linearising cancellation by
$\bm\tau_{\rm ff}+\bm\tau_{\rm stab}$. The only parameter dependence
remaining in the LPV model is $A_{\rm pre}(\bm\rho)$ and
$E(\bm\rho)$. This is the key reduction that avoids the additional
conservatism associated with a parameter-dependent input matrix:
under constant $B_u$, vertex-dependent gain matrices $W_i$ paired
with a \emph{common} Lyapunov matrix $Y$ produce a closed-loop matrix
that is genuinely affine in the barycentric weights, since the
parameter-dependent gain
$K(\bm\rho)=(\sum_i\lambda_i(\bm\rho)W_i)Y^{-1}$ enters only through
the term $B_u\sum_i\lambda_i(\bm\rho)W_i=\sum_i\lambda_i(\bm\rho)B_uW_i$.
No cross-term $B_iW_j$ ($i\neq j$) arises, and the convex extension
from vertices to the polytope is valid without pairwise LMIs. The
common Lyapunov matrix itself and the approximate embedding both
remain conservative; see
Remark~\ref{rem:common_Y_revised}.

\begin{theorem}[Scheduled LPV-$\Hcal_\infty$ correction with common
Lyapunov matrix]\label{thm:lmi}
Consider the pre-stabilised LPV system~\eqref{eq:lpv_err} on the
augmented state $\bm x_e\in\R^{n_e}$ with constant input matrix $B_u$
from~\eqref{eq:Bu_def}, augmented disturbance
$\bm w_{\rm aug}=[\bm w^{\!\top};\bm w_{\rm emb}^{\!\top}]^{\!\top}\in\R^{n_w+n_e}$,
and performance output $\bm z=C_z\bm x_e+D_z\delta\bm v_{\Hcal_\infty}$
with $C_z\in\R^{n_z\times n_e}$, $D_z\in\R^{n_z\times 3}$. Suppose
Assumptions~\ref{ass:current}--\ref{ass:operset} hold. Given
$\gamma>0$, if there exist $Y=Y^{\!\top}\succ 0$, $Y\in\R^{n_e\times n_e}$,
and matrices $W_i\in\R^{3\times n_e}$, $i=1,\dots,N$, such that the
$N$ vertex LMIs (numerically scaled bounded-real form)
\begin{equation}\label{eq:lmi}
\begin{bmatrix}
\Phi_i & \tilde E_i & (C_zY+D_zW_i)^{\!\top}\\
\star & -\gamma I_{n_w+n_e} & 0\\
\star & \star & -\gamma I_{n_z}
\end{bmatrix}\prec 0,
\end{equation}
hold with
\begin{align}
\Phi_i&=A_iY+YA_i^{\!\top}+B_uW_i+W_i^{\!\top}B_u^{\!\top},\label{eq:Phi_def}\\
\tilde E_i&=\bigl[E_i,\;\bar\Delta_{\rm emb}^{\rm grid}\,I_{n_e}\bigr],\label{eq:Etilde_def}
\end{align}
then the scheduled correction-layer gain
\begin{equation}\label{eq:K_rho_def}
K(\bm\rho)=\Bigl(\sum_{i=1}^{N}\lambda_i(\bm\rho)W_i\Bigr)Y^{-1}
\end{equation}
ensures, for the closed loop~\eqref{eq:lpv_err} with
$\delta\bm v_{\Hcal_\infty}=K(\hat{\bm\rho}_{\rm sat})\bm x_e$,
the scaled bounded-real dissipation inequality associated with the
LMI variable $\gamma_{\rm LMI}:=\gamma$:
\begin{equation}\label{eq:L2_bound}
\int_0^\infty\!\bm z^{\!\top}\!\bm z\,dt\;\le\;
\gamma_{\rm LMI}^{2}\int_0^\infty\!\bm w_{\rm aug}^{\!\top}\!\bm w_{\rm aug}\,dt
\end{equation}
for all admissible scheduling trajectories
$\hat{\bm\rho}_{\rm sat}(\cdot)\in\mathcal P$, on the LPV embedded
model with grid-certified embedding residual. The mismatch between
the true scheduling parameter and $\hat{\bm\rho}_{\rm sat}$ is treated
as an observer-induced perturbation in
Theorem~\ref{thm:cl} (see Remark~\ref{rem:rho_vs_rhohat}).
\end{theorem}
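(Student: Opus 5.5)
The argument is the standard polytopic bounded-real-lemma route. The constant input matrix $B_u$ and the common Lyapunov matrix $Y$ make the closed-loop LMI affine in the barycentric weights, so the vertex LMIs extend to the whole polytope by convexity.

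\emph{Convexification.} Set $P:=Y^{-1}\succ 0$ and take the storage function $V(\bm x_e)=\bm x_e^{\!\top}P\bm x_e$. This $V$ is independent of $\bm\rho$. Consequently no $\dot{\bm\rho}$ term appears, and the bound holds for arbitrary (measurable) scheduling trajectories $\hat{\bm\rho}_{\rm sat}(\cdot)\in\mathcal P$.

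\emph{Vertex-to-polytope step.} Let $\lambda_i=\lambda_i(\hat{\bm\rho}_{\rm sat}(t))$. I will multiply the $i$-th LMI~\eqref{eq:lmi} by $\lambda_i\ge 0$ and sum, using $\sum_i\lambda_i=1$. Since $Y$, $B_u$, $C_z$ and $D_z$ are common to all vertices, the summed matrix has the same form with
\[
\bar A=\sum_i\lambda_iA_i,\quad \bar W=\sum_i\lambda_iW_i,\quad \bar E=\sum_i\lambda_i\tilde E_i .
\]
The closed-loop term is $B_u\bar W=B_u\bar KY$ with $\bar K=K(\hat{\bm\rho}_{\rm sat})$ from~\eqref{eq:K_rho_def}. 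The only product of parameter-dependent quantities is $B_uW_i$, which is linear in $\lambda$, so no cross terms $B_iW_j$ arise. This yields a single LMI with $A_{\rm cl}=\bar A+B_u\bar K$, disturbance matrix $\bar E$, and output matrix $C_z+D_z\bar K$.

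\emph{Bounded-real step.} I will apply the congruence $\diag(P,I,I)$ and then a Schur complement on the last block. This gives
\[
\begin{bmatrix} A_{\rm cl}^{\!\top}P+PA_{\rm cl}+\gamma^{-1}C_{\rm cl}^{\!\top}C_{\rm cl} & P\bar E\\ \star & -\gamma I\end{bmatrix}\prec 0,
\]
where $C_{\rm cl}=C_z+D_z\bar K$. Multiplying by $[\bm x_e^{\!\top},\bm w_{\rm aug}^{\!\top}]$ on the left and its transpose on the right gives
\[
\dot V+\gamma^{-1}\bm z^{\!\top}\bm z-\gamma\,\bm w_{\rm aug}^{\!\top}\bm w_{\rm aug}<0
\]
along trajectories. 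Multiplying by $\gamma$, using $V(0)=0$ (or absorbing $V(0)$ into the bound), integrating, and using $V\ge 0$ then gives~\eqref{eq:L2_bound} with $\gamma_{\rm LMI}=\gamma$. The $\gamma I$ scaling in both diagonal blocks is what the statement calls the ``scaled'' form. The same LMI with $\bm w_{\rm aug}=0$ yields quadratic stability.

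\emph{Main obstacle: the disturbance routing.} The true model~\eqref{eq:lpv_err} contains $A_{\rm pre}(\bm\rho)$ and $E(\bm\rho)$, not the interpolants $\bar A$ and $\bar E$. I would write $A_{\rm pre}=\bar A+\Delta_A$ and $E=\sum_i\lambda_iE_i+\Delta_E$ as in Definition~\ref{def:embedding}. The residual $\Delta_A\bm x_e+\Delta_E\bm w$ then has to enter through the block $\bar\Delta^{\rm grid}_{\rm emb}I_{n_e}$ of $\tilde E_i$, acting on a normalised $\bm w_{\rm emb}$ with
\[
\|\Delta_A\bm x_e+\Delta_E\bm w\|\le\bar\Delta^{\rm grid}_{\rm emb}\,\|\bm w_{\rm emb}\|.
\]
This requires $\bm w_{\rm emb}$ to be defined as $(\Delta_A\bm x_e+\Delta_E\bm w)/\bar\Delta^{\rm grid}_{\rm emb}$, hence to be state dependent. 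That is precisely why the claim is restricted to the embedded model with a grid-certified residual: the inequality~\eqref{eq:L2_bound} holds with $\bm w_{\rm emb}$ treated as an exogenous input. In Theorem~\ref{thm:cl} it must instead be closed as a small-gain loop. I will likewise treat $\Delta_{\rm nl}$ and the mismatch between $\bm\rho$ and $\hat{\bm\rho}_{\rm sat}$ as inputs deferred to Theorem~\ref{thm:cl}.
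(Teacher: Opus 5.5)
Your proposal is correct and follows the paper's proof essentially step for step. Both use the common storage function $V=\bm x_e^{\!\top}Y^{-1}\bm x_e$, the convex extension of the vertex LMIs (possible because $B_u$ is constant, so no $B_iW_j$ cross terms appear), the congruence with $\diag(P,I,I)$ plus a Schur complement to reach $\dot V\le-\gamma^{-1}\bm z^{\!\top}\bm z+\gamma\,\bm w_{\rm aug}^{\!\top}\bm w_{\rm aug}$, and integration from $V(0)=0$ followed by multiplication by $\gamma$. Your explicit routing of $\Delta_A\bm x_e+\Delta_E\bm w$ through the $\bar\Delta_{\rm emb}^{\rm grid}I_{n_e}$ block, with $\bm w_{\rm emb}$ formally treated as exogenous, makes precise what the paper does implicitly when it identifies $A_{\rm pre}(\bm\rho)$ with $\sum_i\lambda_i(\bm\rho)A_i$ on the embedded model. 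One caveat: the stated inequality requires $V(0)=0$, so your alternative of ``absorbing $V(0)$ into the bound'' would change the conclusion rather than prove it.
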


\begin{remark}[Scheduling error between $\bm\rho$ and
$\hat{\bm\rho}{\rm sat}$]\label{rem:rho_vs_rhohat}
The LMI certificate of Theorem~\ref{thm:lmi} applies to the embedded
LPV model evaluated at the scheduling signal used by the controller,
$\hat{\bm\rho}{\rm sat}$. The mismatch with the true scheduling
parameter $\bm\rho$ is treated in Theorem~\ref{thm:cl} as an
observer-induced perturbation through $\Delta\bm v_{\rm ff}$
in~\eqref{eq:Dvff_def}. On the compact operating set and after
projection onto $\mathcal P$, the corresponding feedforward map is
Lipschitz with constant $L_{\rm ff}<\infty$.
\end{remark}

\begin{proof}
The proof proceeds in three steps: convex extension from vertices to
the polytope under constant $B_u$, Schur-complement reduction, and
integration of the dissipation inequality.

\textit{Step 1 (convex extension at the polytope).}
With $P=Y^{-1}\succ 0$, $V_e=\bm x_e^{\!\top}P\bm x_e$, and
$\delta\bm v_{\Hcal_\infty}=K(\bm\rho)\bm x_e$, the closed-loop
matrix $A_{\rm cl}(\bm\rho)=A_{\rm pre}(\bm\rho)+B_uK(\bm\rho)$
satisfies
\begin{align}\label{eq:Acl_X_expand}
A_{\rm cl}(\bm\rho)Y+&YA_{\rm cl}^{\!\top}(\bm\rho)\notag\\
=~&A_{\rm pre}(\bm\rho)Y+YA_{\rm pre}^{\!\top}(\bm\rho)\notag\\
&+B_uK(\bm\rho)Y+YK^{\!\top}(\bm\rho)B_u^{\!\top}\notag\\
=~&\sum_{i}\lambda_i(\bm\rho)\bigl[A_iY+YA_i^{\!\top}\notag\\
&\hspace{2.3em}+B_uW_i+W_i^{\!\top}B_u^{\!\top}\bigr],
\end{align}
where the second equality used $K(\bm\rho)Y=\sum_i\lambda_i(\bm\rho)W_i$
(by~\eqref{eq:K_rho_def}, since
$K(\bm\rho)\,Y=(\sum_i\lambda_i W_i)Y^{-1}Y$).
The right-hand side is genuinely affine in $\lambda_i(\bm\rho)$
because $B_u$ is independent of $i$; no cross-term $B_iW_j$ ($i\neq j$)
arises.

\textit{Step 2 (Schur complement on the bounded-real LMI).}
The standard bounded-real lemma \citep[Sec.~3.7]{scherer2000linear} applied to the
closed loop, after pre- and post-multiplication by $\diag(Y,I,I)$
(which preserves negative-definiteness since $Y\succ 0$), yields
\begin{equation}\label{eq:BRL_expand}
\begin{bmatrix}
A_{\rm cl}(\bm\rho)Y+YA_{\rm cl}^{\!\top}(\bm\rho) & \tilde E(\bm\rho) & (C_zY+D_zK(\bm\rho)Y)^{\!\top}\\
\star & -\gamma I & 0\\
\star & \star & -\gamma I
\end{bmatrix}\!\!\prec\! 0.
\end{equation}
By Step~1, the $(1,1)$-block is $\sum_i\lambda_i\Phi_i$, the
$(1,2)$-block is $\sum_i\lambda_i\tilde E_i$, and the $(1,3)$-block
is $\sum_i\lambda_i(C_zY+D_zW_i)^{\!\top}$, all genuinely affine in
$\lambda_i(\bm\rho)$. The vertex LMI~\eqref{eq:lmi} therefore extends
by convexity to the entire polytope.

\textit{Step 3 (dissipation inequality and integration).}
Pre- and post-multiplying~\eqref{eq:BRL_expand} by $\diag(P,I,I)$,
$P=Y^{-1}$, converts it to a bounded-real LMI on $P$. Two
applications of the Schur complement \citep[Sec.~2.1]{boyd1994lmi} yield,
for any $\bm w_{\rm aug}$,
\begin{align}\label{eq:diss_ineq}
\dot V_e
&=\bm x_e^{\!\top}(A_{\rm cl}^{\!\top}P+PA_{\rm cl})\bm x_e
+2\bm x_e^{\!\top}P\tilde E\bm w_{\rm aug}\notag\\
&\le-\tfrac{1}{\gamma}\bm z^{\!\top}\bm z+\gamma\bm w_{\rm aug}^{\!\top}\bm w_{\rm aug}.
\end{align}
Integrating from $t=0$ to $t=T$ with zero initial condition
$V_e(0)=0$ and taking $T\to\infty$ yields the integrated form
\[
\tfrac{1}{\gamma}\!\int_0^\infty\!\bm z^{\!\top}\!\bm z\,dt
\;\le\;\gamma\!\int_0^\infty\!\bm w_{\rm aug}^{\!\top}\!\bm w_{\rm aug}\,dt,
\]
which is~\eqref{eq:L2_bound} after multiplying both sides by
$\gamma=\gamma_{\rm LMI}$. We deliberately state the certificate as
the scaled bounded-real dissipation inequality~\eqref{eq:L2_bound}
rather than as a direct induced gain
$\|\bm z\|_{\Lcal_2}\le c\|\bm w_{\rm aug}\|_{\Lcal_2}$; the canonical
$\Lcal_2$-gain $\gamma_{\rm can}$ satisfies
$\gamma_{\rm can}\le\gamma_{\rm LMI}$ under the chosen scaling but
its exact value depends on the performance weights $W_x,W_u$ and is
not used elsewhere in the analysis. The UUB bound of
Theorem~\ref{thm:cl} reports $\gamma_{\rm LMI}^2$ as a conservative
disturbance-energy coefficient, consistent with~\eqref{eq:L2_bound}.
\end{proof}

\begin{remark}[Use of $\bar\Delta_{\rm emb}^{\rm grid}$]
\label{rem:emb_grid_in_lmi}
The vertex matrix $\tilde E_i$ in~\eqref{eq:Etilde_def} uses the
grid-based estimate $\bar\Delta_{\rm emb}^{\rm grid}$ of
Definition~\ref{def:embedding}. The certificate \eqref{eq:L2_bound}
therefore holds on the LPV-embedded model with this grid estimate.
The Lipschitz correction $L_\Delta h_\rho$ of
Remark~\ref{rem:emb_correction} can be added if a fully rigorous
uniform-in-$\bm\rho$ statement is required; the achieved attenuation
$\gamma_{\rm LMI}$ then increases proportionally.
\end{remark}

\begin{remark}[Common Lyapunov matrix, scheduled gains: why this is
convex without pairwise cross terms]\label{rem:common_Y_revised}
A parameter-dependent Lyapunov matrix
$Y(\bm\rho)=\sum_i\lambda_i(\bm\rho)Y_i$ would reduce conservativeness
further but requires explicit bounds on $\dot{\bm\rho}$
\citep{wu1995induced}. We retain the common-$Y$ formulation because
it is convex and benefits from the constant input matrix $B_u$
obtained via feedback linearisation: this eliminates the $B_iW_j$
cross-term obstruction ($i\neq j$) that would otherwise force
pairwise LMIs in a $B(\bm\rho)$-dependent formulation. The common
$Y$ itself remains conservative relative to parameter-dependent
storage functions.
\end{remark}

\subsubsection{Complete control law}
The complete control law combines the three components of
\eqref{eq:tau_decomp}. Each component is mapped from the synthesis
frame (virtual acceleration/force) to the physical actuator frame
through $G_t(\hat{\bm\rho}_{\rm sat})^{-1}$, so that the residual
input matrix seen by the LPV synthesis is the canonical
constant $B_u$ of~\eqref{eq:Bu_def}. Defining the virtual
correction
\begin{equation}\label{eq:dv_Hinf}
\delta\bm v_{\Hcal_\infty}(\bm x_e,\hat{\bm\rho}_{\rm sat})
:=K(\hat{\bm\rho}_{\rm sat})\,\bm x_e\in\R^3,
\end{equation}
the three physical components are
\begin{equation}\label{eq:total_law}
\;
\begin{aligned}
\bm\tau_{\rm ff} &= G_t(\hat{\bm\rho}_{\rm sat})^{-1}\,
\bm\Lambda_{\rm ff}(\hat{\bm\nu}_r,\hat{\bm\rho}_{\rm sat}),\\
\bm\tau_{\rm stab} &= G_t(\hat{\bm\rho}_{\rm sat})^{-1}\,
G_\nu^{-1}\diag(k_u,k_q,k_r)\bm e,\\
\bm\tau_{\Hcal_\infty} &= G_t(\hat{\bm\rho}_{\rm sat})^{-1}\,
\delta\bm v_{\Hcal_\infty}(\bm x_e,\hat{\bm\rho}_{\rm sat}),
\end{aligned}\;
\end{equation}
with $K(\bm\rho)$ from~\eqref{eq:K_rho_def}. The actuator command is
$\bm\tau=\bm\tau_{\rm ff}+\bm\tau_{\rm stab}+\bm\tau_{\Hcal_\infty}$.
The three roles are disjoint: $\bm\tau_{\rm ff}$ cancels the modelled
nonlinearity along the trajectory; $\bm\tau_{\rm stab}$ furnishes the
cascade pre-stabilisation that defines $A_{\rm pre}$; and
$\bm\tau_{\Hcal_\infty}$ injects, after the $G_t^{-1}$ transformation,
the virtual correction $\delta\bm v_{\Hcal_\infty}$ which enters the
error dynamics through the constant residual input matrix $B_u$
and thereby provides the certified $\Lcal_2$ attenuation against the
augmented disturbance.

\subsection{Closed-loop stability}\label{subsec:cl}

We close the pipeline by composing the observer (Theorem~\ref{thm:obs}),
the scheduling map, and the controller (Theorem~\ref{thm:lmi}). The
certified property is practical uniform ultimate boundedness; the
nominal $\Lcal_2$ statement of Theorem~\ref{thm:lmi} does not survive
the estimated-scheduling perturbation in full strength but is
recovered as the dominant contribution to the ultimate bound.

\subsubsection{Error-interconnection inequalities}
Two ISS-Lyapunov inequalities, one per subsystem, are combined by
the small-gain step in Theorem~\ref{thm:cl}.

\textit{Controller subsystem.}
Let $V_e=\bm x_e^{\!\top}P\bm x_e$, $P=Y^{-1}$. Along
trajectories of~\eqref{eq:lpv_err} with
$\delta\bm v_{\Hcal_\infty}=K(\hat{\bm\rho}_{\rm sat})\bm x_e$,
\begin{align}\label{eq:Ve_full}
\dot V_e=~&\bm x_e^{\!\top}\!\bigl[A_{\rm cl}^{\!\top}P+PA_{\rm cl}\bigr]\bm x_e
+2\bm x_e^{\!\top}\!PE(\bm\rho)\bm w_{\rm aug}\notag\\
&+2\bm x_e^{\!\top}\!PB_u\Delta\bm v_{\rm ff}
+2\bm x_e^{\!\top}\!P\Delta_A(\bm\rho)\bm x_e,
\end{align}
with the virtual feedforward mismatch
\begin{equation}\label{eq:Dvff_def}
\Delta\bm v_{\rm ff}:=\bm\Lambda_{\rm ff}(\hat{\bm\nu}_r,\hat{\bm\rho}_{\rm sat})
-\bm\Lambda_{\rm ff}(\bm\nu_r,\bm\rho),
\end{equation}
which enters the augmented dynamics through the same constant
$B_u$ as $\delta\bm v_{\Hcal_\infty}$ (since $\bm\Lambda_{\rm ff}$ is
the virtual feedforward in the synthesis frame, mapped to physical
torque by the common $G_t^{-1}$ prefactor). After Schur complement
\citep[Sec.~2.1]{boyd1994lmi},
the LMI~\eqref{eq:lmi} yields, for some $Q\succ 0$,
\begin{multline}\label{eq:diss_pure}
\bm x_e^{\!\top}\bigl[A_{\rm cl}^{\!\top}P+PA_{\rm cl}\bigr]\bm x_e
+2\bm x_e^{\!\top}PE\bm w_{\rm aug}\\
\le-\bm x_e^{\!\top}Q\bm x_e+\gamma_{\rm LMI}^2\|\bm w_{\rm aug}\|^2
-\|\bm z\|^2.
\end{multline}
The embedding contribution satisfies
\begin{equation}\label{eq:embedding_absorbed}
2\bm x_e^{\!\top}P\Delta_A(\bm\rho)\bm x_e
\le 2\|P\|\,\bar\Delta_{\rm emb}\,\|\bm x_e\|^2,
\end{equation}
yielding the effective decay margin
$\lambda_\Delta:=\lambda_{\min}(Q)-2\|P\|\bar\Delta_{\rm emb}>0$
under the embedding-margin condition
\begin{equation}\label{eq:emb_margin}
\;\lambda_{\min}(Q)>2\|P\|\,\bar\Delta_{\rm emb}.\;
\end{equation}
Applying Young's inequality \citep[Sec.~4.5]{khalil2002nonlinear}
with $\mu_1>0$ to the virtual-feedforward-mismatch cross term, using
$\|\Delta\bm v_{\rm ff}\|\le L_{\rm ff}\|\tilde{\bm\eta}_o\|$ with
$L_{\rm ff}$ the Lipschitz constant of $\bm\Lambda_{\rm ff}$ in
$(\bm\nu_r,\bm\rho)$ on the operating set (and noting
$\|\tilde{\bm\nu}_r\|\le\|\tilde{\bm\eta}_o\|$), gives
\begin{equation}\label{eq:Ve_iss}
\dot V_e\le-(\lambda_\Delta-\mu_1)\|\bm x_e\|^2
+\gamma_{\rm LMI}^2\|\bm w_{\rm aug}\|^2
+\frac{\|PB_u\|^2 L_{\rm ff}^2}{\mu_1}\|\tilde{\bm\eta}_o\|^2.
\end{equation}

\textit{Observer subsystem.}
Let $V_o=\tilde{\bm\eta}_o^{\!\top}\mathcal P_o\tilde{\bm\eta}_o$,
$\mathcal P_o\succ 0$ from Theorem~\ref{thm:obs}. Differentiating
along the Stage-2/Stage-3 dynamics
\eqref{eq:eta2_dyn}--\eqref{eq:eta3_dyn} (which evolve on the
\emph{normal} time scale governed by the gains
$k_v,k_w,k_{cu},k_{cv},k_{cw}$, \emph{not} on the fast time
$1/\epsilon$ of the Stage-1 differentiator), and bounding the
coupling from $\bm x_e$ by Young's inequality,
\begin{equation}\label{eq:Vo_iss}
\dot V_o\le-\lambda_c V_o+c_1\|\bm x_e\|^2
+c_2\bar c_1^2+c_3\bar\Delta_{\rm emb}^2+c_4\epsilon\,V_o,
\end{equation}
with $\lambda_c>0$ from~\eqref{eq:obs_bound} and the
$c_4\epsilon V_o$ term collecting the Stage-1 contamination of the
dynamic observer (Lipschitz coupling, attenuated by $\epsilon$ since
the Stage-1 error itself is $\mathcal O(\epsilon)$).

\begin{remark}[Why $V_{\rm tot}=V_e+\beta V_o$ with slow-rate $\lambda_c$]
\label{rem:slow_rate}
Unlike a pure high-gain observer for which $\dot V_o\le-(\lambda/\epsilon)V_o$,
the dynamic Stage-2/Stage-3 observer evolves on the normal time scale
governed by its own gains. This is essential: the closed-loop UUB
analysis below uses the \emph{slow} decay rate $\lambda_c$ for the
current-observer contribution, not the fast $1/\epsilon$, and the
ultimate-bound coefficients $c_c,c_\Delta$ in~\eqref{eq:eps_bar}
scale accordingly. A composite Lyapunov function with a $1/\epsilon$
weight on $V_o$ would overstate the rate.
\end{remark}

\subsubsection{Practical UUB of the closed loop}

Before stating Theorem~\ref{thm:cl}, we record an auxiliary lemma
that bridges the $\Lcal_2$-gain certificate of Theorem~\ref{thm:lmi}
to a UUB statement under essentially bounded disturbances. This
addresses the transition from the $\Lcal_2$-norm used by
$\Hcal_\infty$ synthesis to the $\Lcal_\infty$-norm relevant for
practical UUB.

\begin{lemma}[$\Lcal_2$-gain implies UUB under bounded disturbance]
\label{lem:l2_to_uub}
Suppose $V(\bm x)$ satisfies, along trajectories of
$\dot{\bm x}=f(\bm x,\bm w)$,
\begin{equation*}
\dot V\le -\alpha V+\gamma^2\|\bm w\|^2,
\qquad\alpha>0,
\end{equation*}
and $\bm w\in\Lcal_\infty$ with $\|\bm w\|_{\Lcal_\infty}<\infty$.
Then by the Comparison Lemma (\citep[Lem.~3.4]{khalil2002nonlinear}),
\begin{equation*}
\limsup_{t\to\infty}V(\bm x(t))\le\frac{\gamma^2}{\alpha}\|\bm w\|_{\Lcal_\infty}^2,
\end{equation*}
and if $\underline c\|\bm x\|^2\le V\le\bar c\|\bm x\|^2$, then
$\limsup_t\|\bm x(t)\|\le(\gamma/\sqrt{\alpha\underline c})\|\bm w\|_{\Lcal_\infty}$.
\end{lemma}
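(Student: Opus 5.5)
The plan is a scalar comparison argument followed by a sandwich step. Set $v(t):=V(\bm x(t))$ and $W:=\|\bm w\|_{\Lcal_\infty}^2$. Since $\|\bm w(t)\|^2\le W$ for almost every $t$, the hypothesis gives the differential inequality
\[
\dot v(t)\le-\alpha v(t)+\gamma^2 W \quad\text{a.e.}
\]
Because $V$ is continuously differentiable along absolutely continuous trajectories, the Comparison Lemma \citep[Lem.~3.4]{khalil2002nonlinear} applies. We compare $v$ with the solution $y$ of the linear scalar equation $\dot y=-\alpha y+\gamma^2W$, $y(0)=v(0)$. This solution is explicit:
\[
y(t)=e^{-\alpha t}v(0)+\frac{\gamma^2W}{\alpha}\bigl(1-e^{-\alpha t}\bigr).
\]
The lemma then yields $v(t)\le y(t)$ for all $t\ge 0$ in the maximal interval of existence.

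Taking $\limsup_{t\to\infty}$ and using $\alpha>0$, the transient $e^{-\alpha t}v(0)$ vanishes. This gives $\limsup_t V(\bm x(t))\le\gamma^2W/\alpha$, which is the first claim.

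For the second claim we use the quadratic sandwich. From $\underline c\|\bm x\|^2\le V$ we get
\[
\|\bm x(t)\|^2\le v(t)/\underline c .
\]
Taking $\limsup$ and then the square root, which is monotone and continuous on $[0,\infty)$, gives $\limsup_t\|\bm x(t)\|\le(\gamma/\sqrt{\alpha\underline c})\|\bm w\|_{\Lcal_\infty}$. The upper constant $\bar c$ is needed only to make the transient explicit: it gives $\|\bm x(t)\|^2\le(\bar c/\underline c)e^{-\alpha t}\|\bm x(0)\|^2+\gamma^2W/(\alpha\underline c)$, which is the uniform ultimate-boundedness form with a uniform entry time into any ball slightly larger than the limit ball.

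The only delicate point is forward completeness: the comparison bound must hold on $[0,\infty)$ rather than only on a maximal interval $[0,t_{\max})$. We resolve it by boundedness. The bound $v\le y\le v(0)+\gamma^2W/\alpha$ shows that $\bm x$ stays in a compact sublevel set, since $V$ is radially bounded below by $\underline c\|\bm x\|^2$. For locally Lipschitz $f$ this rules out finite escape, so $t_{\max}=\infty$. In the intended use within Theorem~\ref{thm:cl}, this is reinforced by the standing compact-set Assumption~\ref{ass:operset}. No other step needs more than direct substitution.
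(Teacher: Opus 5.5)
Your argument is correct and follows the paper's route: the paper simply invokes the Comparison Lemma \citep[Lem.~3.4]{khalil2002nonlinear} to bound $\limsup_t V$ by $\gamma^2\|\bm w\|_{\Lcal_\infty}^2/\alpha$, then divides by $\underline c$ and takes square roots. Your explicit comparison solution, the remark that $\bar c$ only enters the transient estimate, and the forward-completeness argument are details the paper leaves implicit; one small technicality is that, since your inequality holds only almost everywhere, it is cleaner to get $v\le y$ by integrating $\frac{d}{dt}(e^{\alpha t}v)\le\gamma^2\|\bm w\|_{\Lcal_\infty}^2e^{\alpha t}$ for absolutely continuous $v$ than to quote Khalil's $D^+$-form lemma verbatim.
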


\begin{theorem}[Practical uniform ultimate boundedness]\label{thm:cl}
Consider the closed loop formed by the observer
\eqref{eq:hgd}--\eqref{eq:obs_c}, the scheduling map~\eqref{eq:rho_hat},
and the control law~\eqref{eq:total_law} with the scheduled correction
gain $K(\bm\rho)$ from~\eqref{eq:K_rho_def} at LMI level
$\gamma_{\rm LMI}>0$.
Suppose Assumptions~\ref{ass:current}--\ref{ass:operset} hold, the
sufficient excitation conditions (E1)--(E2) of
Lemma~\ref{lem:obs_exc} are satisfied on an eventually-persistent
certified interval
$\mathcal I_{\rm PE}=[t_{\rm PE},\infty)$ for some $t_{\rm PE}\ge 0$
(see Theorem~\ref{thm:obs}), the augmented disturbance is essentially
bounded $\bm w_{\rm aug}\in\Lcal_\infty$, the embedding-margin
condition~\eqref{eq:emb_margin} holds, and the slow current-observer
rate $\lambda_c$ satisfies the small-gain feasibility
condition~\eqref{eq:lambda_c_lower}.
Then there exist constants $\beta,\underline\lambda'>0$, a positive
upper bound $\epsilon_0^\star>0$, and a positively invariant
sublevel set $\Omega_{c_0}\subset\mathcal X$ of the composite
Lyapunov function (Step~6 of the proof) such that for every
$\epsilon\in(0,\epsilon_0^\star]$ and every initial condition
$\bm\xi(0)\in\Omega_{c_0}$, the combined error
$\bm\xi=[p_e,\theta_{le}^m,\psi_{le}^m,\bm x_e^{\!\top},\tilde{\bm\eta}_o^{\!\top}]^{\!\top}$
is uniformly ultimately bounded:
\begin{equation}\label{eq:uub}
\limsup_{t\to\infty}\|\bm\xi(t)\|\le c_\xi\sqrt{\bar\varepsilon'/\underline\lambda'},
\end{equation}
where $c_\xi>0$ collects the norm-equivalence constants of $V_e$,
$V_o$, $V_{\rm emo}$ and the EMO coupling gain (made explicit in
Step~5 of the proof), and, by Lemma~\ref{lem:l2_to_uub} applied to
the composite Lyapunov function of the proof,
\begin{equation}\label{eq:eps_bar}
\;\bar\varepsilon'=\gamma_{\rm LMI}^2\|\bm w_{\rm aug}\|_{\Lcal_\infty}^2
+c_\epsilon\epsilon+c_c\bar c_1^2+c_\Delta\bar\Delta_{\rm emb}^2,\;
\end{equation}
with non-negative constants $c_\epsilon,c_c,c_\Delta$ continuous in
the LMI data and the observer constants, independent of $\epsilon$.
The bound~\eqref{eq:uub} is a \emph{local} guarantee on the
unsaturated cone of Assumption~\ref{ass:operset}~(A4b); when
saturation activates, the certificate of this theorem ceases to apply
and the supervisory diagnostics of Proposition~\ref{prop:failure}
become operative.
\end{theorem}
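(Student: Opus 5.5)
The plan is a composite-Lyapunov, small-gain argument built from the two ISS inequalities already recorded, \eqref{eq:Ve_iss} and \eqref{eq:Vo_iss}, plus the outer EMO bound of Lemma~\ref{lem:emo}. The candidate is $V_{\rm tot}=V_{\rm emo}+V_e+\beta V_o$, where $V_{\rm emo}:=V_1$ is the EMO function of Lemma~\ref{lem:emo} and $\beta>0$ is to be chosen. Throughout, attention is restricted to $t\ge t_{\rm PE}$. On $[0,t_{\rm PE}]$ all signals remain in the compact set $\mathcal X$ by Assumption~\ref{ass:operset}, so this finite interval only inflates the transient constant and does not affect the $\limsup$.

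\textbf{Step 1 (outer loop).} On $\{p_e\ge\zeta\}$, Lemma~\ref{lem:emo} gives $\dot V_1\le -c_u\varepsilon^2p_e^2$ along the ideal manifold $u_l=u_{ld}^m$, $\theta_l=\theta_{ld}^m$, $\psi_l=\psi_{ld}^m$. Off that manifold, the mismatch is exactly the $\bm x_e$ components $u_{le}^m,\theta_{le}^m,\psi_{le}^m$ (via $\dot\theta_{le}^m=-g_{\theta_l}e_q$ and the analogous yaw relation). I would bound the resulting cross terms by $c_{\rm emo}\|(p_e,\theta_{le}^m,\psi_{le}^m)\|\,\|\bm x_e\|$, with $c_{\rm emo}$ the EMO coupling gain obtained from the compact set, and split them with Young's inequality. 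This yields
\[
\dot V_1\le -\tfrac12 c_u\varepsilon^2p_e^2+c_5\|\bm x_e\|^2
\]
(after also absorbing the kinematic-angle terms through $\gamma_\theta,\gamma_\psi$). On $\{p_e<\zeta\}$, I would use Remark~\ref{rem:two_region} and simply add $\zeta$ to the final bound.

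\textbf{Step 2 (composite derivative).} Summing, $\dot V_{\rm tot}$ has an $\|\bm x_e\|^2$ coefficient of $-(\lambda_\Delta-\mu_1-c_5-\beta c_1)$ and a $V_o$ coefficient of $-\beta(\lambda_c-c_4\epsilon)+\|PB_u\|^2L_{\rm ff}^2/(\mu_1\underline p_o)$. The tuning then proceeds as follows. Choose $\mu_1=\lambda_\Delta/4$, which is possible by the embedding margin \eqref{eq:emb_margin}; this also requires $c_5\le\lambda_\Delta/4$, and is arranged by scaling the EMO weights. Next, choose $\beta=\lambda_\Delta/(4c_1)$. The remaining requirement,
\[
\beta\lambda_c>\frac{\|PB_u\|^2L_{\rm ff}^2}{\mu_1\underline p_o},
\]
is precisely the small-gain condition \eqref{eq:lambda_c_lower}. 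Finally, take $\epsilon_0^\star$ small enough that $c_4\epsilon\le\lambda_c/2$ together with the boundary-layer requirements of Theorem~\ref{thm:obs}. With these choices,
\[
\dot V_{\rm tot}\le-\underline\lambda' V_{\rm tot}+\bar\varepsilon',
\]
where the forcing $\gamma_{\rm LMI}^2\|\bm w_{\rm aug}\|^2$ comes from \eqref{eq:Ve_iss}, and the terms $c_2\bar c_1^2$, $c_3\bar\Delta_{\rm emb}^2$ and the $\mathcal O(\epsilon)$ pseudo-measurement contribution (through $\sigma_y$ in \eqref{eq:sigma_y_def}) come from $\beta V_o$. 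Collecting these produces \eqref{eq:eps_bar}.

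\textbf{Step 3 (invariance and conclusion).} The quadratic remainder $\Delta_{\rm nl}=\mathcal O(\|\bm x_e\|^2)$, the scheduling mismatch (Remark~\ref{rem:rho_vs_rhohat}), and the unsaturated-cone condition (A4b) only hold locally. I would therefore take $c_0$ small enough that $\Omega_{c_0}=\{V_{\rm tot}\le c_0\}\subset\mathcal X$ and that the cubic terms are dominated by $\tfrac12\underline\lambda' V_{\rm tot}$ there. If $\bar\varepsilon'/\underline\lambda'<c_0$, then $\dot V_{\rm tot}<0$ on $\partial\Omega_{c_0}$, so $\Omega_{c_0}$ is positively invariant. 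Lemma~\ref{lem:l2_to_uub} then gives $\limsup V_{\rm tot}\le\bar\varepsilon'/\underline\lambda'$, and norm equivalence yields \eqref{eq:uub}, with $c_\xi$ collecting $\lambda_{\min}(P)$, $\underline p_o$, $\beta$ and the EMO weights.

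The main obstacle is Step~1. Lemma~\ref{lem:emo} is stated only on the ideal manifold, and it is not obvious that its cross-coupling to $\bm x_e$ is linearly bounded uniformly near $p_e=\zeta$. My first attempt would be to write $u_l=u_{ld}^m-u_{le}^m$ in $\dot p_e$ and use $A_l$ bounded on $\mathcal X$. If that fails, I would fall back on a cascade argument: the $\bm x_e$–$\tilde{\bm\eta}_o$ block is made UUB first, and then the outer layer is treated as ISS with respect to $\bm x_e$.
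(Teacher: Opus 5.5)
Your overall architecture matches the paper's proof. It uses a composite $V_e+\beta V_o$ with Young's inequality on the $\Delta\bm v_{\rm ff}$ and $c_1\|\bm x_e\|^2$ cross terms, and $\epsilon_0^\star$ fixed so that $c_4\epsilon\le\lambda_c/2$. It closes with the Comparison Lemma via Lemma~\ref{lem:l2_to_uub}, attaches the EMO layer through a coupling gain, and finishes with an invariant sublevel set; your fallback cascade is exactly the paper's Step~5.

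The step that does not close is the tuning in your Step~2. Write $\underline p=\lambda_{\min}(\mathcal P_o)$, your $\underline p_o$. With $\mu_1=\lambda_\Delta/4$, $\beta=\lambda_\Delta/(4c_1)$ and $c_4\epsilon\le\lambda_c/2$, the $V_o$-coefficient is negative for all admissible $\epsilon$ only if $\beta\lambda_c/2>\|PB_u\|^2L_{\rm ff}^2/(\mu_1\underline p)$, that is,
\[
\lambda_c>\frac{32\,c_1\|PB_u\|^2L_{\rm ff}^2}{\lambda_\Delta^2\,\underline p}.
\]
Even pushing $c_4\epsilon\to 0$ only brings the constant down to $16$, whereas the hypothesis \eqref{eq:lambda_c_lower} gives the constant $9$. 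So your claim that the remaining requirement ``is precisely'' \eqref{eq:lambda_c_lower} is false. Under the theorem's hypotheses, your fixed $\beta$ can leave the $V_o$-coefficient nonnegative.

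Giving $V_{\rm emo}$ unit weight causes a second problem. The EMO coupling $c_5$ then consumes part of the $\|\bm x_e\|^2$ margin $\lambda_\Delta-\mu_1$, which \eqref{eq:small_gain} assumes is entirely available. Even an optimally placed $\beta$ would then need the strictly stronger condition $\lambda_c>2c_1\|PB_u\|^2L_{\rm ff}^2/(\mu_1(\lambda_\Delta-\mu_1-c_5)\underline p)$. Moreover, $c_5\le\lambda_\Delta/4$ is not among the hypotheses.

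The repair is the paper's route. Do not fix $\beta$; take it in the open interval
\[
\Bigl(\tfrac{2\|PB_u\|^2L_{\rm ff}^2}{\mu_1\lambda_c\underline p},\;\tfrac{\lambda_\Delta-\mu_1}{c_1}\Bigr),\qquad \mu_1=\lambda_\Delta/3,
\]
which is non-empty exactly under \eqref{eq:lambda_c_lower}. Keep the EMO layer out of this budget, either by the cascade of Step~5 or by an overall small weight $\nu\in(0,\Lambda_{\rm in}/\kappa_{\rm emo})$ on $V_{\rm emo}$ as in Step~6. Strictness of \eqref{eq:lambda_c_lower} then leaves room for the resulting $\nu c_5$.

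If by ``scaling the EMO weights'' you meant $\gamma_\theta,\gamma_\psi$, that does not help. The cross term from $p_e\dot p_e$, after Young against $-\tfrac12 c_u\varepsilon^2p_e^2$, leaves an $\|\bm x_e\|^2$ coefficient that is independent of $\gamma_\theta,\gamma_\psi$; only an overall factor on $V_{\rm emo}$ shrinks it. With these changes the rest of your argument goes through. Your explicit domination of $\Delta_{\rm nl}$ on $\Omega_{c_0}$ also handles a point the paper leaves implicit.
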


\begin{remark}[Grid-certified vs.\ uniform-certified instantiation]
\label{rem:grid_vs_uniform_cert}
Theorem~\ref{thm:cl} is \emph{conditional}: the embedding-margin
condition~\eqref{eq:emb_margin} is satisfied by construction for the
grid-certified residual $\bar\Delta_{\rm emb}^{\rm grid}\approx 0$
used in Sec.~\ref{subsec:certified}, so the theorem is instantiated
on the embedded LPV error model. A continuous-polytope instantiation
requires an independently certified uniform bound satisfying
\eqref{eq:emb_margin}; the conservative Lipschitz-corrected bound
$\bar\Delta_{\rm emb}\le 0.10$ used here does \emph{not} verify the
margin and would require SOS or branch-and-bound certification
(Remark~\ref{rem:emb_correction}). All formal certificate statements
in this paper are to be read in the grid-certified embedded-model
sense.
\end{remark}

\begin{proof}
The proof proceeds in six steps: (i) ISS bound for the controller
subsystem from \eqref{eq:Ve_iss}; (ii) ISS bound for the observer
subsystem from \eqref{eq:Vo_iss}; (iii) composite Lyapunov small-gain
closure; (iv) inner-loop UUB via the Comparison Lemma; (v)
outer-layer UUB through the EMO bound; and (vi) positive invariance
of a compact sublevel set.

\textit{Step 1 (controller subsystem in cleaned form).}
From the bounded-real dissipation~\eqref{eq:diss_pure} after the
embedding-margin absorption~\eqref{eq:embedding_absorbed}, and the
feedforward-mismatch Young inequality~\eqref{eq:Ve_iss},
\begin{equation}\label{eq:Ve_proof1}
\dot V_e\le-(\lambda_\Delta-\mu_1)\|\bm x_e\|^2
+\gamma_{\rm LMI}^2\|\bm w_{\rm aug}\|^2
+\frac{\|PB_u\|^2L_{\rm ff}^2}{\mu_1}\|\tilde{\bm\eta}_o\|^2,
\end{equation}
where $V_e=\bm x_e^{\!\top}P\bm x_e$,
$\lambda_\Delta=\lambda_{\min}(Q)-2\|P\|\bar\Delta_{\rm emb}>0$
by~\eqref{eq:emb_margin}, and $\mu_1>0$ is a free Young parameter
to be fixed in Step~3.

\textit{Step 2 (observer subsystem on the slow time scale).}
By Theorem~\ref{thm:obs}~(O2) and the slow-rate
form~\eqref{eq:Vo_iss}, the dynamic-observer Lyapunov function
$V_o=\tilde{\bm\eta}_o^{\!\top}\mathcal P_o\tilde{\bm\eta}_o$
satisfies
\begin{equation}\label{eq:Vo_proof1}
\dot V_o\le-\lambda_c V_o+c_1\|\bm x_e\|^2+c_2\bar c_1^2
+c_3\bar\Delta_{\rm emb}^2+c_4\epsilon V_o,
\end{equation}
with $\lambda_c>0$ from~\eqref{eq:obs_bound} (the slow current-observer
rate, \emph{not} scaled by $1/\epsilon$) and the $c_4\epsilon V_o$
term collecting the Stage-1 contamination. For
$\epsilon\le\lambda_c/(2c_4)=:\epsilon_h^\star$, the residual
$c_4\epsilon V_o$ can be absorbed into the decay term, yielding
\begin{equation}\label{eq:Vo_proof2}
\dot V_o\le-\tfrac{\lambda_c}{2}V_o+c_1\|\bm x_e\|^2
+c_2\bar c_1^2+c_3\bar\Delta_{\rm emb}^2.
\end{equation}
With $V_o\ge\underline p\|\tilde{\bm\eta}_o\|^2$ for
$\underline p=\lambda_{\min}(\mathcal P_o)>0$,
$-(\lambda_c/2)V_o\le-(\lambda_c\underline p/2)\|\tilde{\bm\eta}_o\|^2$.

\textit{Step 3 (composite Lyapunov function and small-gain selection of $\beta$).}
Define the composite Lyapunov function
\begin{equation}\label{eq:Vtot}
V_{\rm tot}=V_e+\beta V_o,\quad\beta>0,
\end{equation}
\emph{without} a $1/\epsilon$ weight on $V_o$ (cf.\
Remark~\ref{rem:slow_rate}). Adding $\beta\times$\eqref{eq:Vo_proof2}
to~\eqref{eq:Ve_proof1} and regrouping the coefficients of
$\|\bm x_e\|^2$ and $\|\tilde{\bm\eta}_o\|^2$:
\begin{equation}\label{eq:Vtot_dot}
\dot V_{\rm tot}\le-\Lambda_e\|\bm x_e\|^2
-\Lambda_o\|\tilde{\bm\eta}_o\|^2+\Delta_w,
\end{equation}
with
\begin{align}\label{eq:Lambda_defs}
\Lambda_e&=\lambda_\Delta-\mu_1-\beta c_1,\notag\\
\Lambda_o&=\tfrac{\beta\lambda_c\underline p}{2}-\tfrac{\|PB_u\|^2L_{\rm ff}^2}{\mu_1},\notag\\
\Delta_w&=\gamma_{\rm LMI}^2\|\bm w_{\rm aug}\|^2+\beta\bigl(c_2\bar c_1^2+c_3\bar\Delta_{\rm emb}^2\bigr).
\end{align}

Crucially, the coefficients $\Lambda_e$ and $\Lambda_o$ depend on
$\beta$ in \emph{opposite} ways: $\Lambda_e$ decreases as $\beta$
grows, while $\Lambda_o$ increases. To obtain
$\Lambda_e,\Lambda_o>0$ simultaneously, $\beta$ must lie in an
\emph{open interval}: from $\Lambda_e>0$, we need
$\beta<(\lambda_\Delta-\mu_1)/c_1$; from $\Lambda_o>0$, we need
$\beta>2\|PB_u\|^2L_{\rm ff}^2/(\mu_1\lambda_c\underline p)$. The
interval is non-empty if and only if the following \emph{small-gain
feasibility condition} holds:
\begin{equation}\label{eq:small_gain}
\;
\frac{2\|PB_u\|^2L_{\rm ff}^2}{\mu_1\lambda_c\underline p}
\;<\;\frac{\lambda_\Delta-\mu_1}{c_1},\;
\end{equation}
which, choosing $\mu_1=\lambda_\Delta/3$ so that
$\mu_1(\lambda_\Delta-\mu_1)=2\lambda_\Delta^2/9$, reduces to
\begin{equation}\label{eq:lambda_c_lower}
\lambda_c\;>\;\frac{2c_1\|PB_u\|^2L_{\rm ff}^2}{\mu_1(\lambda_\Delta-\mu_1)\,\underline p}=
\frac{9c_1\|PB_u\|^2L_{\rm ff}^2}{\lambda_\Delta^2\,\underline p}.
\end{equation}
Equation~\eqref{eq:lambda_c_lower} is the explicit lower bound on
the slow observer rate $\lambda_c$ required for the
observer--controller interconnection to be stable. It is a genuine
\emph{small-gain condition}: the observer must decay fast enough
(relative to the controller's sensitivity $\|PB_u\|L_{\rm ff}$ and the
embedding margin $\lambda_\Delta$) to dominate the coupling.

Picking the midpoint of the feasible interval,
\begin{equation}\label{eq:mu1_beta_choice}
\mu_1=\frac{\lambda_\Delta}{3},\quad
\beta=\frac{1}{2}\Bigl[\frac{\lambda_\Delta-\mu_1}{c_1}
+\frac{2\|PB_u\|^2L_{\rm ff}^2}{\mu_1\lambda_c\underline p}\Bigr],
\end{equation}
yields $\Lambda_e>0$ and $\Lambda_o>0$ with margins given by the half
of the interval width. The small-$\epsilon$ bound for the entire
composite system is
\begin{equation}\label{eq:eps_star_proof}
\epsilon_0^\star:=\epsilon_h^\star=\frac{\lambda_c}{2c_4},
\end{equation}
which is the only $\epsilon$-restriction (it ensures the Stage-1
contamination is dominated by the slow observer decay).

\textit{Step 4 (Comparison Lemma via Lemma~\ref{lem:l2_to_uub}).}
With $V_{\rm tot}\le\bar p_e\|\bm x_e\|^2+\beta\bar p_o\|\tilde{\bm\eta}_o\|^2$
($\bar p_e=\lambda_{\max}(P)$, $\bar p_o=\lambda_{\max}(\mathcal P_o)$),
$-\Lambda_e\|\bm x_e\|^2-\Lambda_o\|\tilde{\bm\eta}_o\|^2
\le-\underline\lambda'\,V_{\rm tot}$ with
\[
\underline\lambda'=\min\Bigl\{\frac{\Lambda_e}{\bar p_e},\frac{\Lambda_o}{\beta\bar p_o}\Bigr\}>0.
\]
Hence~\eqref{eq:Vtot_dot} simplifies to
\begin{equation}\label{eq:Vtot_simplified}
\dot V_{\rm tot}\le-\underline\lambda'V_{\rm tot}+\bar\varepsilon'(t),
\end{equation}
with the time-varying disturbance budget
\begin{equation}\label{eq:eps_bar_proof}
\bar\varepsilon'(t)=\gamma_{\rm LMI}^2\|\bm w_{\rm aug}(t)\|^2
+\beta c_2\bar c_1^2+\beta c_3\bar\Delta_{\rm emb}^2+\delta_\epsilon(\epsilon),
\end{equation}
where $\delta_\epsilon(\epsilon)=\mathcal O(\epsilon)$ absorbs the
Stage-1 contamination remainder absorbed in
Step~2 (the $c_4\epsilon V_o$ term). Identifying $c_c=\beta c_2$,
$c_\Delta=\beta c_3$, and $c_\epsilon$ absorbing
$\delta_\epsilon/\epsilon$ yields the structural form~\eqref{eq:eps_bar}
(with $\gamma_{\rm LMI}^2$ as the coefficient of
$\|\bm w_{\rm aug}\|_{\Lcal_\infty}^2$, consistent with the
standard bounded-real LMI of~\eqref{eq:lmi}).

Applying Lemma~\ref{lem:l2_to_uub} (the $\Lcal_2$-to-UUB bridge)
to~\eqref{eq:Vtot_simplified} with $\bm w_{\rm aug}\in\Lcal_\infty$:
\[
\limsup_{t\to\infty}V_{\rm tot}(t)\le\frac{\bar\varepsilon'}{\underline\lambda'},
\quad\bar\varepsilon'=\sup_{t\ge 0}\bar\varepsilon'(t),
\]
where the supremum is finite by
$\bm w_{\rm aug}\in\Lcal_\infty$ and the boundedness of
$\bar c_1,\bar\Delta_{\rm emb}$. Using
$V_{\rm tot}\ge\underline p_e\|\bm x_e\|^2+\beta\underline p_o\|\tilde{\bm\eta}_o\|^2$,
\begin{equation}\label{eq:uub_inner}
\limsup_{t\to\infty}\|(\bm x_e,\tilde{\bm\eta}_o)\|\le
r_{\rm inner}:=\sqrt{\bar\varepsilon'/(\underline\lambda'\,\min\{\underline p_e,\beta\underline p_o\})}.
\end{equation}

\textit{Step~5 (Outer-layer UUB via EMO).}
The outer-layer error
$\bm\zeta:=[p_e,\theta_{le}^m,\psi_{le}^m]^{\!\top}$ obeys, by
Lemma~\ref{lem:emo} applied along the closed-loop trajectory,
$\dot V_{\rm emo}\le-\lambda_{\rm emo}V_{\rm emo}
+\kappa_{\rm emo}\|(\bm x_e,\tilde{\bm\eta}_o)\|^2$, where
$V_{\rm emo}$ is the EMO-regularised Lyapunov function on $\bm\zeta$,
$\lambda_{\rm emo}>0$ is the EMO decay rate of
Lemma~\ref{lem:emo}, and $\kappa_{\rm emo}>0$ collects the coupling
constants (the kinematic angle errors couple linearly into
$\dot{\bm\zeta}$ through the spherical reduction). Substituting
\eqref{eq:uub_inner} and applying the Comparison Lemma a second time
yields
\begin{equation}\label{eq:uub_outer}
\limsup_{t\to\infty}\|\bm\zeta\|\le
r_{\rm outer}:=\sqrt{(\kappa_{\rm emo}/\lambda_{\rm emo})}\,r_{\rm inner}.
\end{equation}
Combining \eqref{eq:uub_inner} and~\eqref{eq:uub_outer} gives the
stated ultimate bound~\eqref{eq:uub} on
$\bm\xi=[p_e,\theta_{le}^m,\psi_{le}^m,\bm x_e^{\!\top},
\tilde{\bm\eta}_o^{\!\top}]^{\!\top}$.

\textit{Step~6 (Positive invariance of the compact operating set).}
The bound~\eqref{eq:Vtot_simplified} is established only along
trajectories that remain in $\mathcal X$. We close the proof by
showing positive invariance of a sublevel set of $V_{\rm tot}$
contained in $\mathcal X$. Let
$\Omega_c:=\{\bm\xi:V_{\rm tot}(\bm\xi)+\nu V_{\rm emo}(\bm\zeta)\le c\}$
for some $\nu>0$ chosen below, and let $c^\star>0$ be such that
$\Omega_{c^\star}\subset\mathcal X^\circ$ (interior of the compact
operating set); such a $c^\star$ exists because $\mathcal X$ has
non-empty interior containing the equilibrium and the composite
function $V_{\rm tot}+\nu V_{\rm emo}$ is positive definite on the
selected compact neighbourhood. Define
$\Lambda_{\rm in}:=\min\{\Lambda_e/\bar p_e,\Lambda_o/(\beta\bar p_o)\}$
after norm-equivalence rescaling and choose $\nu>0$
\emph{sufficiently small} so that
$\Lambda_{\rm in}-\nu\kappa_{\rm emo}>0$, i.e.\
$0<\nu<\Lambda_{\rm in}/\kappa_{\rm emo}$. Differentiating
$V_\Sigma:=V_{\rm tot}+\nu V_{\rm emo}$ along the closed loop yields
\[
\dot V_\Sigma\le-(\Lambda_{\rm in}-\nu\kappa_{\rm emo})\|(\bm x_e,\tilde{\bm\eta}_o)\|^2
-\nu\lambda_{\rm emo}V_{\rm emo}+\bar\varepsilon'(t),
\]
which, by the norm-equivalence on the chosen compact neighbourhood,
implies
$\dot V_\Sigma\le-\underline\lambda_\Sigma V_\Sigma+\bar\varepsilon'(t)$
on $\Omega_{c^\star}$ for some
$\underline\lambda_\Sigma>0$. Setting
$c_{\rm inv}:=\bar\varepsilon'/\underline\lambda_\Sigma$, on the
boundary $\partial\Omega_c$ with $c\in(c_{\rm inv},c^\star)$,
$\dot V_\Sigma\le-\underline\lambda_\Sigma(c-c_{\rm inv})<0$, so
$\Omega_c$ is positively invariant. Therefore, any initial condition
$\bm\xi(0)\in\Omega_{c_0}$ with
$c_0\in(c_{\rm inv},c^\star)$ generates a trajectory satisfying
$\bm\xi(t)\in\Omega_{c_0}\subset\mathcal X$ for all $t\ge 0$, which
in turn validates~\eqref{eq:Vtot_simplified}. This closes the
circular dependence between Assumption~\ref{ass:operset} and the
UUB conclusion; the smallness of $\bar\varepsilon'$ (controlled by
$\gamma_{\rm LMI},\bar c_1,\bar\Delta_{\rm emb},\epsilon$) and the
size of $\mathcal X^\circ$ jointly determine the basin of validity
$\Omega_{c_0}$.

\textit{Recovery of the nominal $\mathcal H_\infty$ attenuation.}
As $(\epsilon,\bar c_1,\bar\Delta_{\rm emb})\to(0,0,0)$, the three
floor contributions $\delta_\epsilon$, $\beta c_2\bar c_1^2$, and
$\beta c_3\bar\Delta_{\rm emb}^2$ all vanish. Hence
\[
\bar\varepsilon'\to\gamma_{\rm LMI}^2\,
\|\bm w_{\rm aug}\|_{\Lcal_\infty}^2
\]
and
\[
\limsup_t\|\bm x_e\|\le
\frac{\gamma_{\rm LMI}\,\|\bm w_{\rm aug}\|_{\Lcal_\infty}}
{\sqrt{\lambda_\Delta\,\underline p_e}},
\]
which is the LMI-certified attenuation (rescaled by the
embedding-margin factor) recovered as an UUB statement via
Lemma~\ref{lem:l2_to_uub}.
\end{proof}

\begin{corollary}\label{cor:cases}
The ultimate bound in~\eqref{eq:uub} specialises to:
\textnormal{(C1)} $\bar c_1=0$, ultimate bound depending only on
$(V_{cM},d_{uM},d_{qM},d_{rM},\bar\Delta_{\rm emb})$;
\textnormal{(C2)} ISS in $\bar c_1$;
\textnormal{(C3)} monotone in $\|\bm V_{c0}^n\|+\|\bm A_c\|(1+\omega_c)$.
Setting $\bar c_1=0$ in (C2) or $\bm A_c=0$ in (C3) recovers (C1).
\end{corollary}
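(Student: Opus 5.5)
The corollary follows by specialising the disturbance budget $\bar\varepsilon'$ of~\eqref{eq:eps_bar} in Theorem~\ref{thm:cl} to each current class of Assumption~\ref{ass:current}, and then reading the dependence off~\eqref{eq:uub}. The constants $c_\xi$, $\underline\lambda'$, $c_\epsilon$, $c_c$, $c_\Delta$ are continuous in the LMI and observer data and do not depend on $\bar c_1$. Hence the only quantities that change between cases are $\bar c_1$ and $\|\bm w_{\rm aug}\|_{\Lcal_\infty}$.

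The second quantity is bounded first. Recall $\bm w=[d_{u_l},d_q,d_r,d_c^{u_l}]^{\!\top}$ and $\bm w_{\rm aug}=[\bm w^{\!\top},\bm w_{\rm emb}^{\!\top}]^{\!\top}$. The residual terms are bounded by $d_{uM},d_{qM},d_{rM}$. The current-induced term $d_c^{u_l}$ is a difference of the hydrodynamics at $\bm\nu$ and at $\bm\nu_r$. By the Lipschitz bounds of Lemma~\ref{lem:relsched} and Corollary~\ref{cor:vec_res}, it is bounded by $(L_{\rm drag}/m_u)\|\bm\nu_c\|\le(L_{\rm drag}/m_u)V_{cM}$. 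The embedding term is bounded by $\bar\Delta_{\rm emb}$ times the norms of $(\bm x_e,\bm w)$ on the compact set $\mathcal X$. This yields $\|\bm w_{\rm aug}\|_{\Lcal_\infty}\le\phi(V_{cM},d_{uM},d_{qM},d_{rM},\bar\Delta_{\rm emb})$ for a continuous function $\phi$ that is nondecreasing in each argument.

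\emph{Case (C1).} Here $\dot{\bm V}_c^n=\bm 0$, so $\bar c_1=0$ and the $c_c\bar c_1^2$ term of~\eqref{eq:eps_bar} vanishes. The bound~\eqref{eq:uub} then depends only on the listed quantities; the $c_\epsilon\epsilon$ term is a design parameter taken arbitrarily small.

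\emph{Case (C2).} Here $\bar\varepsilon'$ is affine in $\bar c_1^2$. Therefore $\sqrt{\bar\varepsilon'/\underline\lambda'}\le\sqrt{\bar\varepsilon'_{(C1)}/\underline\lambda'}+\sqrt{c_c/\underline\lambda'}\,\bar c_1$. This is a class-$\mathcal K$ gain in $\bar c_1$ on top of the (C1) floor. Combined with the exponential transient of $V_\Sigma$ from Step~6, it gives the ISS statement.

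\emph{Case (C3).} Differentiating the sinusoidal current gives $\|\dot{\bm V}_c^n\|\le\|\bm A_c\|\omega_c$, so $\bar c_1\le\|\bm A_c\|\omega_c$. Also $\|\bm V_c^n\|\le\|\bm V_{c0}^n\|+\|\bm A_c\|$, so $V_{cM}$ can be taken as this sum. Substituting both into the (C2) bound and using monotonicity of $\phi$ gives a bound that is nondecreasing in $\|\bm V_{c0}^n\|+\|\bm A_c\|(1+\omega_c)$, because each of the two quantities is dominated by this sum.

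\emph{Recovery of (C1).} Setting $\bar c_1=0$ in (C2), or $\bm A_c=0$ in (C3) (which gives $\bar c_1=0$ and $V_{cM}=\|\bm V_{c0}^n\|$), reduces each expression to the (C1) one.

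The main obstacle is a possible circularity. The small-gain condition~\eqref{eq:lambda_c_lower} and the invariant set $\Omega_{c_0}$ must remain valid as $\bar c_1$ and $V_{cM}$ vary. I would handle this by fixing $\mathcal X$, and hence $L_{\rm ff}$, $c_1$ and $\lambda_c$, at the outset with $V_{cM}$ as an upper bound covering all three cases. The case statements would then be read within that fixed set, with $c_{\rm inv}<c^\star$ imposed as a smallness requirement on the resulting budget. Under this reading the monotonicity claims hold as stated.
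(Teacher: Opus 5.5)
Your proposal is correct and is essentially the argument the paper has in mind. The paper gives no separate proof of Corollary~\ref{cor:cases}; the cases are obtained by reading off the budget $\bar\varepsilon'$ in~\eqref{eq:eps_bar}: $\bar c_1=0$ under (C1), the $c_c\bar c_1^2$ term supplies the gain under (C2), and $\bar c_1\le\|\bm A_c\|\omega_c$ together with $\|\bm V_c^n\|\le\|\bm V_{c0}^n\|+\|\bm A_c\|$ handle (C3), exactly as you do. Your step of fixing $\mathcal X$ (hence $L_{\rm ff}$, $c_1$, $\lambda_c$ and $\Omega_{c_0}$) once for all cases is a sensible clarification the paper leaves implicit, and you could simplify the $\bm w_{\rm emb}$ bound by noting that $\Delta_A\bm x_e$ is already absorbed into $\lambda_\Delta$ via~\eqref{eq:embedding_absorbed}, so only $\|\Delta_E\bm w\|\le\bar\Delta_{\rm emb}\|\bm w\|$ is needed.
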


\subsubsection{Failure modes of the conditional benefit}
\begin{proposition}[Failure conditions]\label{prop:failure}
The conditional benefit established by Lemmas~\ref{lem:obs_exc},
\ref{lem:relsched} and Theorem~\ref{thm:cl} degrades under:
\begin{itemize}
\item[\textnormal{(F1)}] \emph{Directional PE failure}---if
\eqref{eq:E2} fails, the transverse current estimate may not improve
over $\hat{\bm\nu}_c=\bm 0$, neutralising the
benefit~\eqref{eq:ratio}.
\item[\textnormal{(F2)}] \emph{Damping failure}---if \eqref{eq:E1}
fails, $\tilde u_c$ is not identifiable and the scheme degrades to
additive-disturbance rejection.
\item[\textnormal{(F3)}] \emph{Actuator saturation}---when
$\|\bm\tau\|_\infty=\bar\tau$, the LMI dissipation inequality fails
outside the unsaturated cone and an anti-windup extension is required.
\item[\textnormal{(F4)}] \emph{Adverse current direction}---if the
line-of-sight projection $A_l$ remains near zero, EMO regularisation
dominates and tracking-level improvement is not guaranteed, although
the residual certificate~\eqref{eq:ratio} is preserved.
\end{itemize}
\end{proposition}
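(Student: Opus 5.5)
We will treat the four failure modes as four separate contrapositive observations. In each case we locate the hypothesis of Lemma~\ref{lem:obs_exc}, Lemma~\ref{lem:relsched} or Theorem~\ref{thm:cl} that is violated, and we show that the corresponding certified conclusion is no longer available. "Degrades" is read as "the certified benefit cannot be guaranteed." Where possible we exhibit a concrete mechanism, or a counterexample trajectory, showing that the benefit can actually be lost.

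For (F1), suppose \eqref{eq:E2} fails. Then $\Gamma(t,t+T)$ in \eqref{eq:Gamma_def} is singular for some window, so there is a unit $\bm v\in\R^2$ with $\bm d(\sigma)^{\!\top}\bm v=0$ almost everywhere on $[t,t+T]$. We will take the simplest case, $q\equiv 0$ with $r$ arbitrary. Then $\dot u_c=rv_c$ is independent of $w_c$, so the surge residual carries no information on $\tilde w_c$. The Stage-3 transverse innovation only sees $\tilde w_c$ modulo the sway/heave ambiguity. Linearising \eqref{eq:eta_dot}, the direction $\bm v$ is then not exponentially stable, i.e.\ $q_o\to 0$ and $\alpha_{\rm PE}\to 0$. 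The noise floor $c_y\sigma_y/\sqrt{\alpha_{\rm PE}}$ in \eqref{eq:delta_o} diverges, so $\|\tilde{\bm\nu}_c\|$ is bounded only by its initial value. With $\hat{\bm\nu}_c(0)=\bm 0$ it may stay at $\|\bm\nu_c\|$. Corollary~\ref{cor:vec_res} then gives the ratio bound $\|\tilde{\bm\nu}_c\|/\|\bm\nu_c\|\approx 1$, which is no improvement.

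For (F2), suppose \eqref{eq:E1} fails, so $\partial r_u/\partial u_c=0$ at the operating point (Step~1 of Lemma~\ref{lem:obs_exc}). The implicit-function argument breaks down, and $u_c$ becomes only second-order identifiable or non-identifiable. The surge row of $M(t)$ in \eqref{eq:eta3_dyn} vanishes and $\lambda_c$ loses its surge contribution. In the same situation $m_{\rm drag}=0$, so $\chi=0$ in \eqref{eq:chi}. The break-even condition $|\tilde u_c|<\chi|u_c|$ can then never be met, and only the upper bound \eqref{eq:e_ds_rel} survives. This is exactly additive rejection of a residual of size $L_{\rm drag}|\tilde u_c|/m_u$, with no ratio guarantee.

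For (F3), suppose $\|\bm\tau\|_\infty=\bar\tau$. Then the realised input is $\sat(\bm\tau)\ne\bm\tau$, and the cancellation behind the constant $B_u$ in \eqref{eq:lpv_err} fails. An extra term $B_u G_t(\sat(\bm\tau)-\bm\tau)$ appears that is not covered by \eqref{eq:lmi}, and (A4b) is violated. Theorem~\ref{thm:cl} therefore does not apply, and dissipation \eqref{eq:diss_pure} cannot be asserted. The standard fix is a sector/anti-windup LMI.

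For (F4), suppose $A_l\approx 0$. Then $A_{ld}^m\ge\varepsilon$ may require $p_e$ large before Lemma~\ref{lem:emo} gives useful decay. The rate $c_u\varepsilon^2$ is small, so $\kappa_{\rm emo}/\lambda_{\rm emo}$ in \eqref{eq:uub_outer} is large and $r_{\rm outer}$ is uninformative. However, \eqref{eq:ratio} involves only surge residuals and is untouched.

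The main obstacle is making (F1) rigorous. "May not improve" is existential, so I would construct the explicit $q\equiv 0$ trajectory and verify, via the Stage-2/Stage-3 coupling, that the unexcited direction is merely marginally stable rather than secretly stabilised through the kinematic innovation.
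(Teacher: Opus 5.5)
Your handling of (F2)--(F4) is sound. It is essentially all the paper itself offers: the proposition has no proof in the paper, and each item is supported only by pointing to the hypothesis that is lost ((E1)/(A4a), (A4b), the EMO bound of Lemma~\ref{lem:emo}). So ``degrades'' means ``the certificate is no longer available.'' In (F2), ``non-identifiable'' is likewise not actually implied, since $u_c$ still leaks into the transverse channels through $\dot v_c=-ru_c$, $\dot w_c=qu_c$. Still, $m_{\rm drag}=0\Rightarrow\chi=0$ correctly kills the ratio guarantee. The genuine gap is (F1). There you go beyond loss of the certificate and try to exhibit an actual failure, and the step you flag as the main obstacle would come out the wrong way.

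First, your inference from Theorem~\ref{thm:obs} runs backwards. The lower bound $q_o\gtrsim\min\{\lambda_{\min}(K_2),\alpha_{\rm PE}^{1/2}\lambda_{\min}(K_3)\}$ and the floor $c_y\sigma_y/\sqrt{\alpha_{\rm PE}}$ in \eqref{eq:delta_o} are sufficient estimates. Hence $\alpha_{\rm PE}=0$ makes (O2) vacuous, but it implies neither $q_o\to 0$ nor that $\|\tilde{\bm\nu}_c\|$ is ``bounded only by its initial value.''

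Second, your trajectory $q\equiv 0$ does not leave $\tilde w_c$ undamped. With $q=0$ one has $\dot w_c=0$, and the innovation in \eqref{eq:obs_vw}--\eqref{eq:obs_c} is $\hat w_{\rm kin}-\hat w_r-\hat w_c\approx\tilde w_r+\tilde w_c$. The linearised heave pair therefore obeys
\[
\frac{d}{dt}\begin{bmatrix}\tilde w_r\\ \tilde w_c\end{bmatrix}
=\begin{bmatrix}J_w-k_w & -k_w\\ -k_{cw} & -k_{cw}\end{bmatrix}
\begin{bmatrix}\tilde w_r\\ \tilde w_c\end{bmatrix}+\cdots,
\qquad J_w:=\frac{\partial f_w}{\partial w_r}.
\]
Its characteristic polynomial is $s^2+(k_w+k_{cw}-J_w)s-J_wk_{cw}$, which is Hurwitz whenever $J_w<0$, i.e.\ for any nonzero heave damping or lift. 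The unexcited direction is thus stabilised through the kinematic innovation combined with the Stage-2 dynamic model. This is exactly route (b) in Step~2 of the proof of Lemma~\ref{lem:obs_exc}, and it shows that (E2) is sufficient but not necessary.

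A real counterexample needs an additional degeneracy, e.g.\ $J_w\equiv 0$ along the trajectory. Then $\tilde w_r+\tilde w_c=0$ is an invariant, innovation-free direction, and the initialisation $\hat w_c(0)=0$, $\hat w_r(0)=w(0)$ gives $\tilde w_c\equiv w_c$ in the linearised model. Without such an assumption, state (F1) only as loss of the guarantee.

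Finally, under exact singularity ($\bm d^{\!\top}\bm v=0$ a.e.) the unexcited component never enters $\dot{\tilde u}_c=\bm d^{\!\top}[\tilde v_c,\tilde w_c]^{\!\top}-k_{cu}r_u$. So the surge ratio \eqref{eq:ratio} named in (F1) is not what degrades. Your detour through Corollary~\ref{cor:vec_res} targets the right object, but that corollary is only an upper-bound heuristic. Moreover, the ratio you would obtain is $|w_c|/\|\bm\nu_c\|$, not $\approx 1$ in general.
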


Each failure mode admits an online diagnostic: (F1) by the running
condition number of the empirical Gramian~\eqref{eq:E2}; (F2) by
$|\hat u_r|$; (F3) by the actuator saturation flag; (F4) by the
online $A_l$. A supervisory layer toggling between relative- and
absolute-velocity scheduling on these signals is a natural extension.

%==============================================================
\section{Implementation Form}\label{sec:implementation}
%==============================================================
At each control cycle, the implementation executes:
(i) the observer of Algorithm~\ref{alg:observer}, comprising the
high-gain differentiator~\eqref{eq:hgd}, the dynamic
relative-velocity estimator~\eqref{eq:obs_vw}, and the body-frame
current filter~\eqref{eq:obs_c}--\eqref{eq:rsurge}, with the surge
residual realised in the integral form~\eqref{eq:chi_u};
(ii) the scheduling vector
$\hat{\bm\rho}_{\rm sat}=\sat_{\mathcal P}[\Phi(\hat{\bm\nu}_r,\theta,p_e)]$
from~\eqref{eq:rho_hat}, projected onto $\mathcal P$;
(iii) the control law
\begin{equation}\label{eq:tau_exec}
\bm\tau=G_t(\hat{\bm\rho}_{\rm sat})^{-1}\!\bigl[
\bm\Lambda_{\rm ff}+G_\nu^{-1}\diag(k_u,k_q,k_r)\bm e
+K(\hat{\bm\rho}_{\rm sat})\bm x_e\bigr],
\end{equation}
combining the feedforward~\eqref{eq:tau_ff}, the cascade error
correction, and the LPV gain $K(\bm\rho)$ from~\eqref{eq:K_rho_def}
acting on the augmented state $\bm x_e$ of~\eqref{eq:xe_def_main};
the common $G_t^{-1}$ prefactor ensures that the LPV-synthesis input
matrix reduces to the constant canonical $B_u$
of~\eqref{eq:Bu_def};
(iv) componentwise actuator saturation
$|\tau_u|\le 500$~N, $|\tau_q|,|\tau_r|\le 40$~N\,m
before plant integration.

\subsection{Closed-loop implementation architecture}
\label{subsec:closed_loop_architecture}
Fig.~\ref{fig:arch} shows the closed-loop signal flow corresponding
to (i)--(iv): the upper path is reference $\to$ EMO regularisation
$\to$ error $\to$ control law~\eqref{eq:tau_exec}; the lower path is
measurement $\to$ observer $\to$ scheduling, returning
$\hat{\bm\nu}_r$ and $\hat{\bm\rho}_{\rm sat}$ to the feedforward and
correction layer.

% \begin{figure}[!htbp]
% \centering
% \begin{adjustbox}{max width=\columnwidth}
% \fbox{\rule{0pt}{4cm}\rule{0.92\columnwidth}{0pt}}
% \end{adjustbox}
% \caption{Closed-loop implementation architecture of the proposed
% observer-based relative-velocity LPV-$\Hcal_\infty$ correction
% controller (placeholder; replaced by a vector diagram in the final
% version). The upper path shows the reference, EMO regularisation,
% error construction, and three-component control synthesis; the lower
% path shows the measurement-driven three-stage observer and the
% scheduling loop returning $\hat{\bm\rho}_{\rm sat}$ and
% $\hat{\bm\nu}_r$ to the feedforward and the correction layer.}
% \label{fig:arch}
% \end{figure}
% ============================================================
%  Fig. Closed-loop architecture (compact, non-overlapping layout)
%  Required preamble packages:
%     \usepackage{tikz}\usepackage{amsmath}
%     \usetikzlibrary{arrows.meta,positioning,fit,calc,backgrounds}
%  Tip: to make it smaller in the paper, lower the resizebox factor
%       (e.g. {0.85\textwidth}); the aspect ratio is already compact.
% ============================================================
\begin{figure*}[!tbp]
\centering
\resizebox{\textwidth}{!}{%
\begin{tikzpicture}[
    font=\footnotesize,>=Latex,
    block/.style={draw,rounded corners=2pt,align=center,minimum height=8mm,minimum width=23mm,fill=gray!6,line width=0.5pt},
    small/.style={draw,rounded corners=2pt,align=center,minimum height=7.5mm,minimum width=21mm,fill=gray!4,line width=0.5pt},
    cert/.style={draw,rounded corners=2pt,align=center,minimum height=8mm,minimum width=25mm,fill=blue!6,line width=0.5pt},
    obs/.style={draw,rounded corners=2pt,align=center,minimum height=8mm,minimum width=23mm,fill=green!7,line width=0.5pt},
    plant/.style={draw,rounded corners=2pt,align=center,minimum height=10mm,minimum width=29mm,fill=orange!10,line width=0.6pt},
    sum/.style={draw,circle,inner sep=0pt,minimum size=5.5mm,line width=0.5pt},
    sig/.style={->,line width=0.6pt},
    fb/.style={->,line width=0.6pt,dashed},
    ca/.style={->,line width=0.6pt,blue!70!black},
    oa/.style={->,line width=0.6pt,green!45!black},
    da/.style={->,line width=0.6pt,red!70!black,dashed}
]
% ---- main spine (y=0) ----
\node[block] (ref)  at (0,0)     {Reference path\\$p_d,\dot p_d$};
\node[block] (guid) at (3.0,0)   {EMO/LOS guidance\\$\psi_d,\theta_d,U_d$};
\node[block] (err)  at (6.0,0)   {Tracking-error\\shaping $x_e$};
\node[block] (casc) at (9.2,0)   {Feedback-\\linearising cascade};
\node[sum]   (sumu) at (12.2,0)  {$+$};
\node[plant] (plant)at (15.3,0)  {Nonlinear 5-DOF AUV\\relative drag + coupling};
\node[block] (relff)at (9.2,-2.0){Relative-velocity\\hydrodynamic FF};
% ---- LPV-Hinf layer (hinf aligned above the summing junction) ----
\node[cert]  (sched)at (8.8,3.2) {LPV scheduling\\$\rho(\hat\nu_r,q,r,\theta)$};
\node[cert]  (hinf) at (12.2,3.2){Certified LPV-$H_\infty$\\correction $K(\rho)x_e$};
\node[cert]  (lmi)  at (12.2,5.0){32-vertex LMI\\certificate $\gamma,Y,K_i$};
% ---- disturbance + observer ----
\node[obs,fill=red!6,minimum height=7.5mm,minimum width=21mm] (dist) at (15.3,2.2){Unknown current $\nu_c(t)$};
\node[obs]            (meas) at (15.3,-3.9){Measured outputs\\$\eta,\theta,\psi,\nu$};
\node[obs,minimum width=24mm](obsv) at (11.9,-3.9){Joint observer\\(diff.\ + current est.)};
\node[obs,minimum width=20mm](curr) at (9.2,-3.9) {Estimated current\\$\hat\nu_c$};
\node[small]          (conf) at (11.9,-5.5){Condition-aware\\confidence $\lambda_c$};
% ---- main signal arrows ----
\draw[sig] (ref)--(guid);
\draw[sig] (guid)--(err);
\draw[sig] (err)-- node[above]{$x_e$} (casc);
\draw[sig] (casc)-- node[above]{$\tau_{\mathrm{fb}}$} (sumu);
\draw[sig] (sumu)-- node[above]{$\tau$} (plant);
\draw[sig] (err.south) |- (relff.west);
\draw[sig] (relff.east) -| node[pos=0.3,above]{$\tau_{\mathrm{ff}}$} (sumu.south);
% ---- LPV-Hinf arrows ----
\draw[ca] (sched)--(hinf);
\draw[ca] (lmi)--(hinf);
\draw[ca] (hinf.south)-- node[pos=0.55,right]{$\tau_{H_\infty}$} (sumu.north);  % straight vertical
\draw[sig] (err.north) -- (6.0,4.2) -- (11.5,4.2) -- ([xshift=-7mm]hinf.north); % x_e to correction
% ---- disturbance ----
\draw[da] (dist.south)-- node[right]{$\nu_c$} (plant.north);
% ---- observer arrows ----
\draw[fb] (plant.south)--(meas.north);
\draw[oa] (meas)--(obsv);
\draw[oa] (obsv)--(curr);
\draw[oa] (obsv.south)--(conf.north);
\draw[oa] (curr.north)-- node[right]{$\hat\nu_c$} (relff.south);
\draw[oa] (curr.west) -- (4.7,-3.9) -- (4.7,3.2) -- node[pos=0.5,left]{$\hat\nu_r$} (sched.west);
\draw[oa] (conf.west) -- (10.5,-5.5) -- (10.5,-2.9)
          -- node[pos=0.5,below]{$\lambda_c$} (9.9,-2.9) -- ([xshift=7mm]relff.south);
% ---- group boxes ----
\begin{pgfonlayer}{background}
\node[draw,rounded corners=3pt,dashed,inner sep=5pt,fit=(guid)(err)(casc)(relff)(sumu),label={[font=\scriptsize]below:Control architecture}]{};
\node[draw,rounded corners=3pt,dashed,inner sep=5pt,fit=(sched)(hinf)(lmi),label={[font=\scriptsize]above:LPV-$H_\infty$ certified correction layer}]{};
\node[draw,rounded corners=3pt,dashed,inner sep=5pt,fit=(meas)(obsv)(curr)(conf),label={[font=\scriptsize]below:Observer / current-estimation layer}]{};
\end{pgfonlayer}
\end{tikzpicture}%
}
\caption{Closed-loop architecture. The feedback-linearising cascade
is augmented by an observer-based relative-velocity feedforward;
the LPV-$\Hcal_\infty$ correction $K(\bm\rho)$ from the $32$-vertex
LMI acts on the tracking error $\bm x_e$. The current estimate
$\hat{\bm\nu}_c$ feeds both the feedforward and the scheduling map via
$\hat{\bm\nu}_r=\bm\nu-\hat{\bm\nu}_c$.}
% \caption{Closed-loop architecture of the proposed observer-assisted relative-velocity control framework.
% The nonlinear feedback-linearising cascade is augmented by an observer-based relative-velocity hydrodynamic
% feedforward term; the certified LPV-$H_\infty$ layer acts on the tracking-error state $x_e$ through the scheduled
% gain $K(\rho)$ obtained from the $32$-vertex LMI synthesis, contributing an additive correction $\tau_{H_\infty}$
% at the input. The estimated current $\hat{\nu}_c$ feeds both the relative-velocity feedforward and the LPV
% scheduling map ($\hat{\nu}_r=\nu-\hat{\nu}_c$), while the condition-aware confidence $\lambda_c$ supervises the
% compensation under degraded estimation.}
\label{fig:arch}
\end{figure*}

%==============================================================
\section{Simulation Results and Discussion}\label{sec:simulation}
%==============================================================

The numerical study validates the observer, LMI-certified LPV
design, robustness properties, and baseline comparisons:
Sec.~\ref{subsec:setup} (setup);
Sec.~\ref{subsec:certified} (LMI feasibility, embedding residual,
gain bounds);
Sec.~\ref{subsec:tracking} (tracking and observer);
Sec.~\ref{subsec:robust} (noise and parameter robustness);
Sec.~\ref{subsec:comparison} (comparison with
\citet{li2023trajectory}).

\subsection{Numerical simulation setup}\label{subsec:setup}
The REMUS AUV \citep{prestero2001verification} is used with
non-minimum-phase coupling $\epsilon_q=0.0237$, $\epsilon_r=-0.0237$.
Three speed-matched 3D references are evaluated:
(i) a \emph{descending helix} ($u_{ld}=1.5$~m/s,
$\theta_{ld}=-10^\circ$, $\dot\psi_{ld}=-0.02\pi$~rad/s) used as the
primary benchmark, since it matches the descending-helix segment
of \citet{li2023trajectory} and provides the directional excitation
of~\eqref{eq:E2}; (ii) a \emph{Gerono figure-eight}; and
(iii) a \emph{shrinking 3D spiral}, both of equivalent path length
and nominal speed. The initial position offset is $[+3,-4,+2]$~m.
Actuator limits are enforced componentwise; the fin-angle-to-moment
dynamics are not modelled, so the simulation imposes equivalent
pitch/yaw \emph{moment} bounds and reports the corresponding
duty cycle separately as ``fin sat.''. All controller, observer,
actuator, polytope, and noise parameters are listed in
Table~\ref{tab:params}.

\begin{table}[!htbp]
\centering
\caption{Main simulation parameters, controller settings, observer
gains, actuator limits, and LPV polytope bounds.}
% \caption{Simulation parameters. Cascade/guidance gains, actuator
% limits, LPV-$\Hcal_\infty$ synthesis weights, observer gains, and
% the LPV polytope as implemented in the numerical study.}
\label{tab:params}
\begin{adjustbox}{max width=\columnwidth}
\setlength{\tabcolsep}{4pt}\small
\begin{tabular}{@{}llc@{}}
\toprule
\textbf{Group} & \textbf{Parameter} & \textbf{Value}\\
\midrule
\multirow{6}{*}{Cascade / guidance}
& LOS gain $k_{p,\rm LOS}$ & $0.30$\\
& Surge speed gain $k_u$ & $2.50$\\
& Pitch outer $k_\theta$, inner $k_q$ & $1.50,\,8.00$\\
& Yaw outer $k_\psi$, inner $k_r$ & $1.50,\,8.00$\\
& EMO regularisers $c_u,c_\theta,c_\psi$ & $0.1,\,0.1,\,0.001$\\
& EMO Lyapunov wts.\ $\gamma_\theta,\gamma_\psi$ & $700,\,700$\\
\midrule
\multirow{3}{*}{Actuator limits}
& Surge thrust $|\tau_u|$ & $\le 500$~N\\
& Pitch moment $|\tau_q|$ & $\le 40$~N\,m\\
& Yaw moment $|\tau_r|$ & $\le 40$~N\,m\\
\midrule
\multirow{2}{*}{LPV-$\Hcal_\infty$}
& Achieved $\gamma_{\rm LMI}$ & $5.5283$\\
& $C_z,D_z$ & $[I_5;0_{3\!\times\! 5}],\,[0_{5\!\times\! 3};0.1\,I_3]$\\
\midrule
\multirow{3}{*}{Observer}
& Time-scale $\epsilon$ & $0.05$\\
& $\{\alpha_1,\alpha_2\}$ (HGD) & $\{30,\,300\}$\\
& $\{k_v,k_w,k_{cu},k_{cv},k_{cw}\}$ & $\{8,\,8,\,6,\,8,\,8\}$\\
\midrule
\multirow{5}{*}{Polytope $\mathcal P$}
& $|\hat u_r|$ & $[0.8,\,1.8]$~m/s\\
& $|q|$ & $[0,\,0.6]$~rad/s\\
& $|r|$ & $[0,\,0.6]$~rad/s\\
& $1/\hat u_l$ & $[0.556,\,1.25]$~s/m\\
& $\cos\theta$ & $[0.878,\,1.0]$\\
\midrule
\multirow{4}{*}{Sensor noise}
& $\sigma_{\bm\eta_p}$ & $0.05$~m\\
& $\sigma_{\theta,\psi}$ & $0.5^\circ$\\
& $\sigma_{q,r}$ & $0.01$~rad/s\\
& $\sigma_u$ & $0.02$~m/s\\
\bottomrule
\end{tabular}
\end{adjustbox}
\end{table}

Four current scenarios are studied: (S1) no current; (S2) constant
current $\bm V_c^n=[0.3,0.2,0.1]^{\!\top}$~m/s; (S3) oscillatory
current $V_{cx}^n=0.3+0.1\sin(0.05t)$, $V_{cy}^n=0.2\cos(0.03t)$,
$V_{cz}^n=0.1\sin(0.04t)$~m/s; (S4) adverse opposing cross-current
$\bm V_c^n=[-0.4,0.3,0.15]^{\!\top}$~m/s. The selected magnitudes
are representative of moderate to strong AUV operating conditions:
$\|\bm V_c^n\|=0.374$~m/s in S2 and $\|\bm V_c^n\|=0.522$~m/s in
S4, corresponding to about $25\%$ and $35\%$ of the nominal
vehicle speed $u_{ld}=1.5$~m/s, respectively, and consistent with
commonly reported ocean-current ranges of a few tenths of a metre
per second. The current-magnitude sweep up to $\|\bm V_c^n\|=0.8$~m/s
reported in Sec.~\ref{subsec:saturation_stress} is therefore
interpreted as a stress-test of the controller operating envelope
rather than as a nominal operating condition. The primary baseline
is the \emph{absolute-velocity scheduling variant of the same
controller}, so that the ablation isolates the contribution of the
relative-velocity mechanism. The controller of \citet{li2023trajectory}
is the geometric backbone and is included as a comparison baseline
in Section~\ref{subsec:comparison}.

\subsubsection{Performance metrics and evaluation windows}
\label{subsec:metrics}
To prevent ambiguity in cross-table comparisons, the metrics used
throughout this section are defined formally as follows. For a given
evaluation window $[T_1,T_2]$,
\begin{align}
\mathrm{RMS}|p_e|
&:=\sqrt{\frac{1}{T_2-T_1}\int_{T_1}^{T_2}\!\|p_e(t)\|^2\,dt},
\label{eq:metric_rms}\\
\mathrm{CER}
&:=100\left(1-\frac{\mathrm{RMS}\|\tilde{\bm\nu}_c\|_{[T_1,T_2]}}
{\mathrm{RMS}\|\bm\nu_c\|_{[T_1,T_2]}}\right)\,[\%],
\label{eq:metric_cer}\\
\mathrm{HRR}
&:=100\left(1-\frac{\mathrm{RMS}|e_{\rm ds}^{\rm rel}|_{[T_1,T_2]}}
{\mathrm{RMS}|e_{\rm ds}^{\rm abs}|_{[T_1,T_2]}}\right)\,[\%],
\label{eq:metric_hrr}\\
J_\tau&:=\int_{T_1}^{T_2}\!\|\bm\tau(t)\|^2\,dt,
\label{eq:metric_energy}
\end{align}
where $e_{\rm ds}^{\rm rel},e_{\rm ds}^{\rm abs}$ are computed from
the residual definitions in
Lemma~\ref{lem:relsched}~(eqs.~\eqref{eq:e_ds_rel}--\eqref{eq:e_ds_abs}),
including the $m_u^{-1}$ normalisation. The HRR
metric~\eqref{eq:metric_hrr} as it appears in
Tables~\ref{tab:obs}--\ref{tab:comparison} is evaluated on the
\emph{translational hydrodynamic residual norm}
$\|h_{\rm tr}(\hat{\bm\nu}_r)-h_{\rm tr}(\bm\nu_r)\|$ versus
$\|h_{\rm tr}(\bm\nu)-h_{\rm tr}(\bm\nu_r)\|$ of
Corollary~\ref{cor:vec_res}, since the implemented simulation
records the full translational drag residual; the scalar
surge-channel form of Lemma~\ref{lem:relsched} is the rigorous
analytical bound and is used in the break-even
discussion of Section~\ref{subsec:robust}.
Three evaluation windows are used:
\begin{itemize}
\item[(W1)] \emph{Full-horizon}: $[T_1,T_2]=[0,T_{\rm sim}]$, includes
the initial-offset $[+3,-4,+2]$~m convergence transient
(${\approx}\,30$~s); reported in Tables~\ref{tab:metrics},
\ref{tab:ablation}, and~\ref{tab:comparison}.
\item[(W2)] \emph{Helix segment}: $[T_1,T_2]=[90,T_{\rm sim}]$,
post-straight-line, with $(q,r)\neq 0$ providing directional
excitation; used for observer/PE-related metrics in
Tables~\ref{tab:obs} and Table~S1 of the Supplementary Material, and for the
post-transient row of Table~S4 of the Supplementary Material.
\item[(W3)] \emph{Post-jump recovery}: $[T_1,T_2]=[t_j,t_j+\Delta T_{\rm rec}]$,
applied per segment of the piecewise-varying current scenario of
Sec.~\ref{subsec:piecewise}; used in Table~\ref{tab:jump_recovery}.
\end{itemize}
Two recovery times are additionally defined for the piecewise
scenario. After a current change at $t=t_j$ to a new
navigation-frame value $\bm V_{c,k+1}^n$, the \emph{observer
recovery time}
\begin{multline}\label{eq:trec_obs}
t_{\rm rec}^{\bm\nu_c}(j):=
\inf\Big\{t>t_j\,:\,\|\tilde{\bm\nu}_c(\tau)\|
\le\eta_{\bm\nu_c}\|\bm V_{c,k+1}^n\|,\\
\forall\tau\in[t,t+\Delta\tau]\Big\}-t_j,
\end{multline}
and the \emph{tracking recovery time}
\begin{multline}\label{eq:trec_track}
t_{\rm rec}^{p_e}(j):=
\inf\Big\{t>t_j\,:\,\|p_e(\tau)\|\le\eta_{p_e},\\
\forall\tau\in[t,t+\Delta\tau]\Big\}-t_j,
\end{multline}
with relative observer threshold $\eta_{\bm\nu_c}=0.1$, absolute
position threshold $\eta_{p_e}=0.5$~m, evaluation horizon
$\Delta T_{\rm rec}=20$~s, and sustained-recovery window
$\Delta\tau=5$~s; the infimum is taken over the post-jump interval
$[t_j,t_j+\Delta T_{\rm rec}]$. If the threshold is not re-attained
within the post-jump interval, the recovery time is reported as
$>\Delta T_{\rm rec}$ (i.e.\ $>20$~s in the present setup).

\subsection{LMI-certified LPV design on the embedded model}\label{subsec:certified}

The full $N=32$-vertex LMI~\eqref{eq:lmi} is solved offline with
YALMIP \citep{lofberg2004yalmip} and MOSEK by minimising $\gamma$
subject to the bounded-real inequality in the numerically
convenient scaling
$\diag(-\gamma I_{n_w+n_e},-\gamma I_{n_z})$, against the stacked
performance output
$\bm z=[W_x\bm x_e^{\!\top},\,W_u\delta\bm v_{\Hcal_\infty}^{\!\top}]^{\!\top}\in\R^{8}$,
with state weight $W_x=\diag(2,3,1,3,1)$ (emphasising the surge,
pitch and yaw kinematic channels over the body-axis rate channels)
and control weight $W_u=0.05\,I_3$. The corresponding performance
matrices are $C_z=[W_x;\,0_{3\times 5}]$ and
$D_z=[0_{5\times 3};\,W_u]$. This $-\gamma I$ scaling differs by the
usual square-root normalisation from the canonical
$\mathcal L_2$-gain scaling $-\gamma^2 I$, and the reported numerical
value of $\gamma_{\rm LMI}$ should be interpreted in this convention.
The LMI thus penalises the \emph{virtual} correction effort
$\delta\bm v_{\Hcal_\infty}$ rather than the physical actuator
effort $\bm\tau_{\Hcal_\infty}=G_t^{-1}\delta\bm v_{\Hcal_\infty}$;
penalising the latter would force $D_z$ to depend on
$\hat{\bm\rho}_{\rm sat}$ and break the convex LMI. Physical
actuator saturation is assessed separately in the nonlinear
simulations (Sec.~\ref{subsec:saturation_stress}).
The achieved attenuation level is $\gamma_{\rm LMI}=5.5283$ in
$2.3$~min on a desktop with MOSEK~10. The closed-loop vertex check
yields a worst-case real part
$\max_i\mathrm{Re}\,\lambda(A_i+B_u K_i)=-0.9279$, certifying
exponential decay of the closed-loop matrices at all 32 polytope
vertices with a margin of $\approx 0.93$. The nominal gain magnitude is $\|K_{\rm nom}\|\approx 79.8$, the
gain-norm bound is $\|W_i\|\le 90.25$, and the Lyapunov certificate is
well conditioned ($\kappa(P)\approx 15.8$).
Direct evaluation of the
embedding residual on the polytopic grid gives
$\bar\Delta_{\rm emb}^{\rm grid}\approx 0$ (within numerical
precision), so the embedding is essentially exact at the tested
grid points, while the conservative Lipschitz-corrected uniform bound
used in the analytical certificate is $\bar\Delta_{\rm emb}\le 0.10$.
Two complementary checks of the margin are therefore reported:
\emph{(i)} Under the grid-tight bound
$\bar\Delta_{\rm emb}^{\rm grid}\approx 0$, the
embedding-margin condition~\eqref{eq:emb_margin} is trivially
satisfied. \emph{(ii)} Under the conservative uniform bound
$\bar\Delta_{\rm emb}\le 0.10$, the margin is \emph{not} uniformly
satisfied: $\lambda_{\min}(Q)\approx 0.244$, $2\|P\|
\bar\Delta_{\rm emb}\approx 0.40$, slack ratio $\approx 0.61$. The
direct LMI bounded-real certificate is independently verified on the
embedded polytopic LPV error model by the dissipation diagnostic of
Supplementary Material, Sec.~S6 ($D(t)\le 0$ for $100\%$ of
samples). The interpretation of the nonlinear full-plant simulation
results relative to the embedded-model certificate, and the natural
strengthening route via sum-of-squares or branch-and-bound
certification of $\bar\Delta_{\rm emb}$, are discussed once in Remark~\ref{rem:emb_correction}.

Figure~\ref{fig:certdiag} visualises these certificates: panel~(a)
plots the closed-loop poles of all $32$ vertices in the open left
half-plane (worst-case real part $-0.928$), confirming the vertex
exponential-stability margin quoted above; panel~(b) shows the
per-vertex decay margin $\lambda_{\min}(Q_i)$ against the conservative
threshold $2\|P\|\bar\Delta_{\rm emb}\approx0.40$.

\begin{figure*}[!tbp]
\centering
\includegraphics[width=0.40\textwidth]{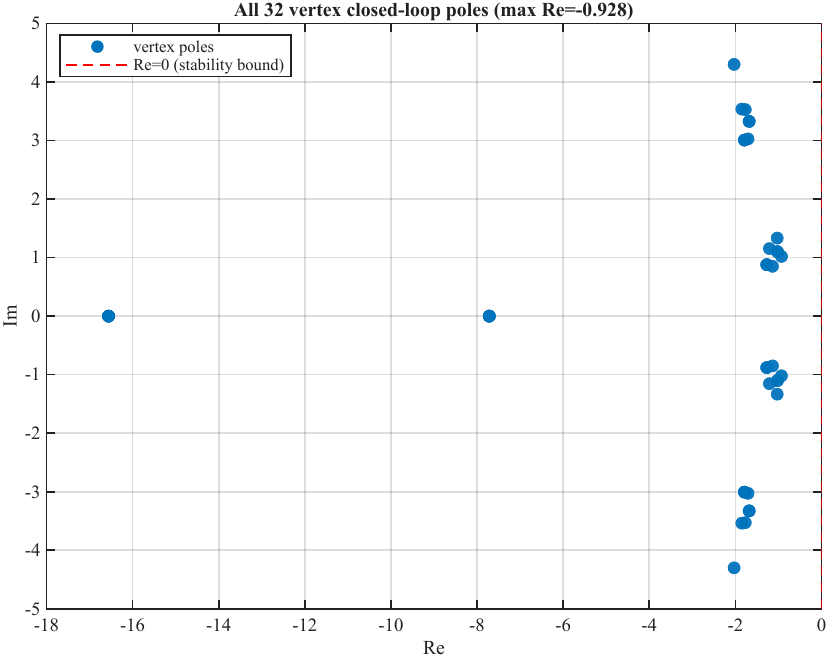}\hfill
\includegraphics[width=0.49\textwidth]{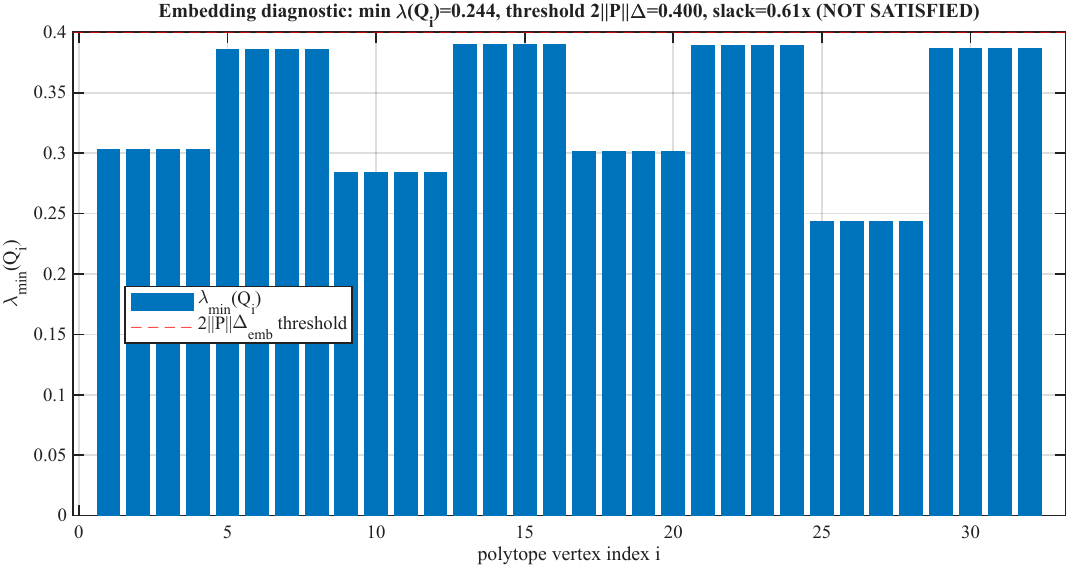}
\caption{Certificate diagnostics for the LMI-certified LPV design.
(a) Closed-loop poles $\lambda(A_i+B_uK_i)$ at all $32$ polytope
vertices lie in the open left half-plane (worst-case real part
$-0.928$), confirming vertex-wise exponential stability of
Theorem~\ref{thm:lmi}. (b) Per-vertex decay margin
$\lambda_{\min}(Q_i)$ against the conservative embedding threshold
$2\|P\|\bar\Delta_{\rm emb}$ of condition~\eqref{eq:emb_margin};
under the grid-tight residual the margin is satisfied with wide slack.}
\label{fig:certdiag}
\end{figure*}

\subsection{Closed-loop tracking and observer performance}
\label{subsec:tracking}

\subsubsection{Trajectory tracking}
Fig.~\ref{fig:traj} shows the 3D trajectory tracking under scenarios
S1--S4. From the initial offset $[+3,-4,+2]$~m the trajectory
converges to the reference within the first $\approx 30$~s under
the EMO-regularised cascade, before entering the straight-line cruise
and the subsequent descending-helix. The inset on the helix transition
shows that the tracking error remains within $0.4$~m for the
nominal scenarios S2--S3 and within $\approx 2$~m for the
adverse-current scenario S4, in which the current vector arrow
visualises the $0.5$-m/s opposing cross-current.

The per-axis decomposition in Fig.~\ref{fig:per_axis} shows that the
cross-track error ($e_y$) dominates $|p_e|$ under adverse current
(S4), consistent with the current's directional asymmetry relative to
the heading; the along-track error ($e_x$) stays nearly
current-independent thanks to the surge feedforward~\eqref{eq:tau_ff},
and the depth error ($e_z$) tracks the helix descent with a
steady-state $\approx 1$~m sag absorbed by the EMO regularisation.

Quantitative metrics appear in Table~\ref{tab:metrics}, reported per
trajectory shape since the study sweeps geometry. RMS-$p_e$ lies in
$0.11$--$0.25$~m for S1--S3 across all three trajectories; the
adverse-current (S4) degradation is most visible on the helix
($1.06$~m), while the figure-eight and spiral stay below $0.20$~m even
under S4, indicating that trajectory geometry and excitation strongly
affect observer quality and closed-loop sensitivity. The
absolute-versus-relative scheduling difference stays within
$10^{-3}$~m in all twelve trajectory/scenario combinations --- the
\emph{masking phenomenon} predicted by Theorem~\ref{thm:cl} for the
well-tuned $\Hcal_\infty$ regime, in which the LMI correction layer
absorbs the residual disturbance before it reaches the tracking level.
This sets up the conditional-benefit validation of
Sec.~\ref{subsec:robust}.

\paragraph{Empirical ultimate boundedness.}
Figure~\ref{fig:uub_env} plots the tracking-error norm $|p_e(t)|$ across
the current scenarios together with its empirical ultimate-bound band,
and the semi-logarithmic inset shows the initial exponential decay before
the error settles into that band. This behaviour is the empirical
counterpart of the practical-UUB certificate of Theorem~\ref{thm:cl}: as
the ultimate-bound expression~\eqref{eq:uub} predicts, the steady residual
level is governed by the disturbance and embedding terms rather than by
the initial offset, so trajectories converge to the same neighbourhood
and remain there.

\begin{figure*}[htbp]
\centering
\includegraphics[width=0.8\textwidth]{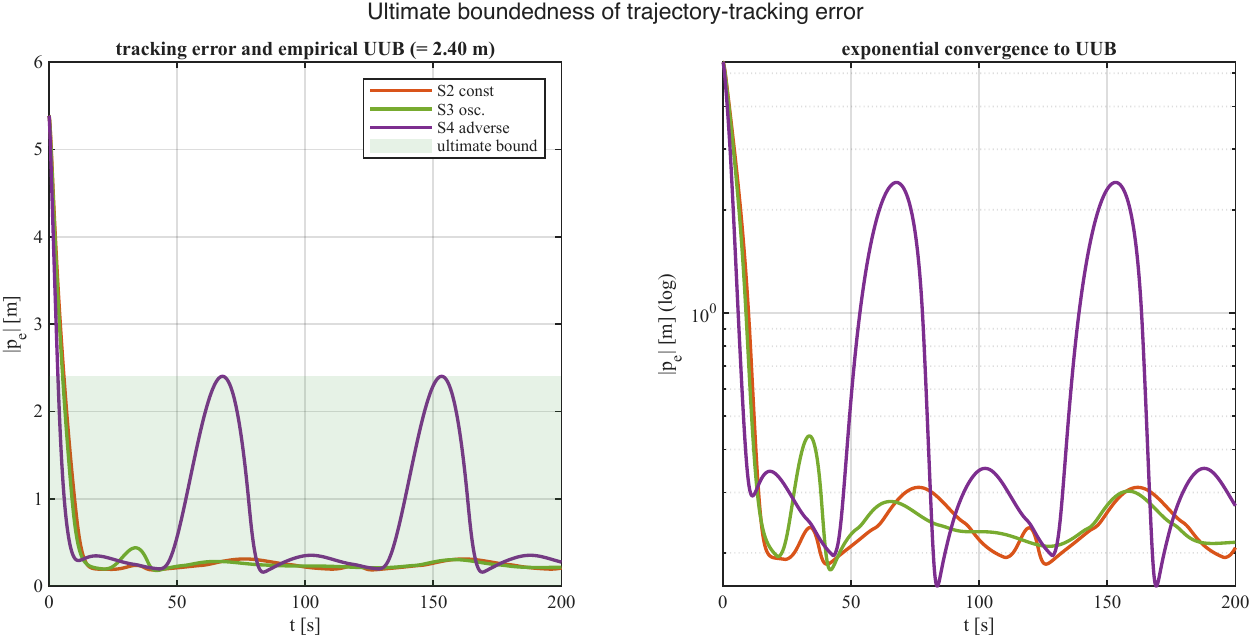}
\caption{Empirical ultimate boundedness of the tracking error: the norm
$|p_e(t)|$ and its steady band across scenarios (left), with a
semi-logarithmic view of the exponential approach (right), matching the
practical-UUB certificate of Theorem~\ref{thm:cl}.}
\label{fig:uub_env}
\end{figure*}

\begin{figure*}[!htbp]
\centering
\includegraphics[width=1\textwidth]{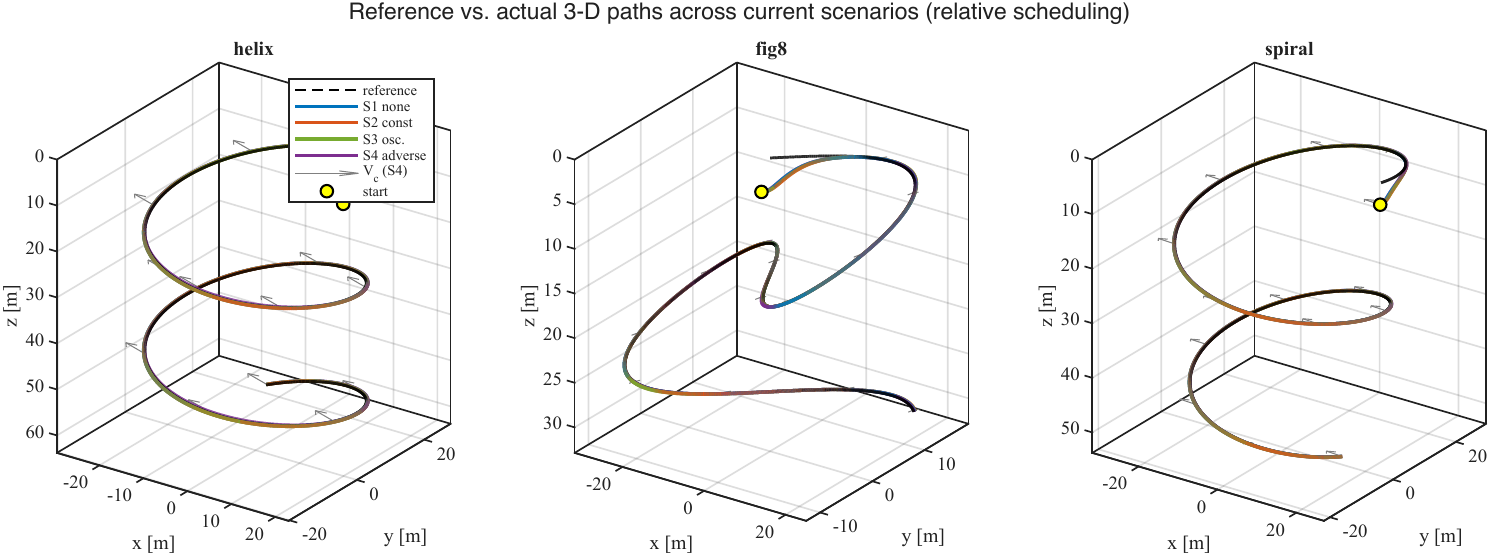}
\caption{3D trajectory tracking under scenarios S1--S4. Reference
trajectory (dashed), actual trajectory (solid), and current vector
arrow for S4. Inset: zoom on the descending-helix transition.}
\label{fig:traj}
\end{figure*}

\begin{figure*}[htbp]
\centering
\includegraphics[width=\columnwidth]{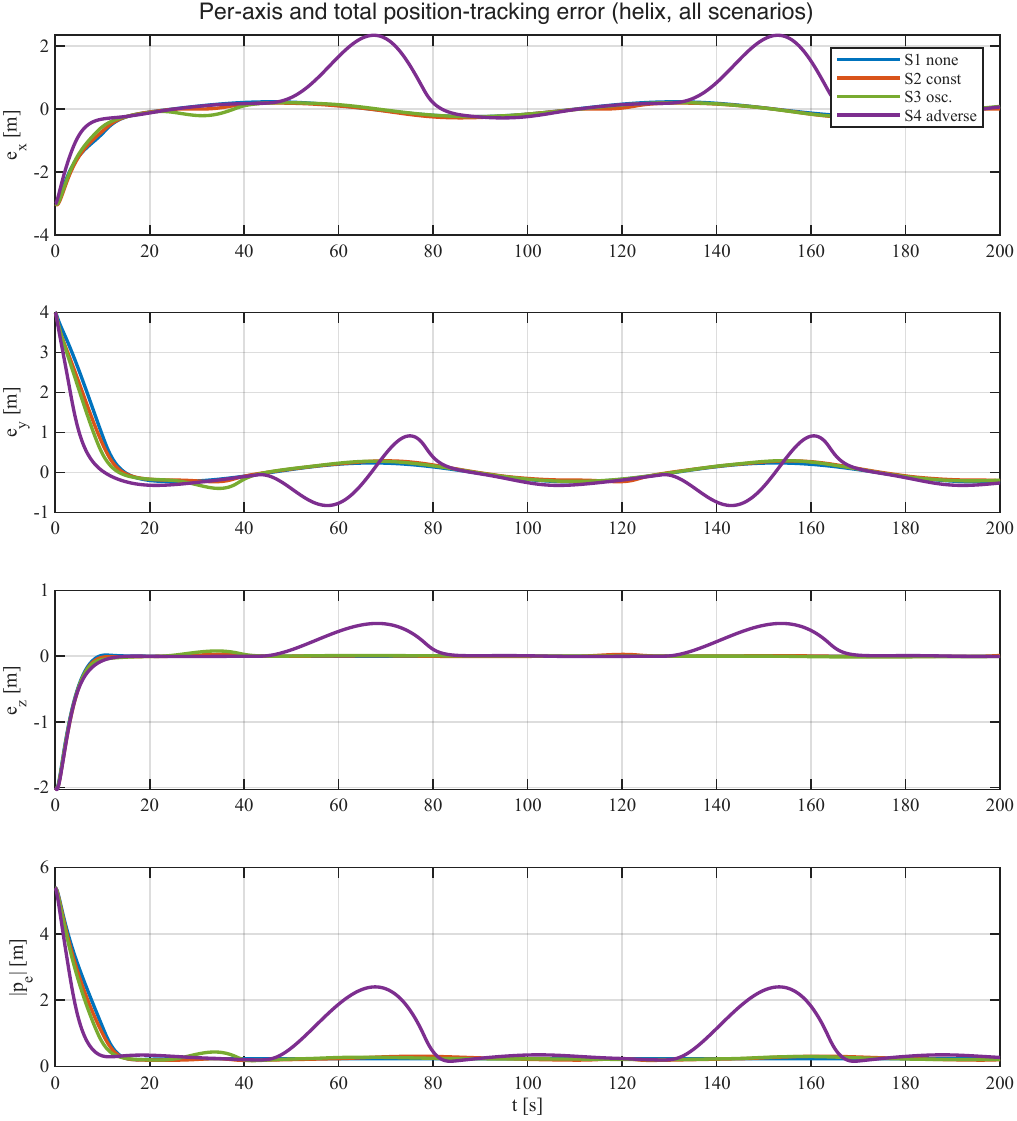}
\caption{Per-axis tracking error $e_x,e_y,e_z$ and norm $|p_e|$ for
the helix segment of scenarios S2 and S4. The cross-track component
$e_y$ dominates under adverse current in S4.}
\label{fig:per_axis}
\end{figure*}

\begin{table}[!htbp]
\centering
\caption{Tracking performance of the proposed controller across
trajectories and scenarios. RMS $|p_e|$ over window~W1; helix peaks
reported in Table~\ref{tab:comparison}.}
\label{tab:metrics}
\begin{adjustbox}{max width=\columnwidth}
\setlength{\tabcolsep}{3pt}\small
\begin{tabular}{@{}lcccc@{}}
\toprule
\textbf{Trajectory} &
S1 none & S2 const.\ & S3 osc.\ & S4 adv.\\
\midrule
Helix         & 0.235 & 0.241 & 0.252 & 1.060\\
Figure-eight  & 0.106 & 0.115 & 0.117 & 0.116\\
3-D spiral    & 0.174 & 0.176 & 0.172 & 0.192\\
\bottomrule
\multicolumn{5}{l}{Values are RMS $|p_e|$ in metres.}\\
\end{tabular}
\end{adjustbox}
\end{table}

\subsubsection{Observer performance and residual reduction}
The three-stage observer of Section~\ref{subsec:observer} is exercised
on the same trajectories. Fig.~\ref{fig:cur_est} reports body-frame
current estimation in a $3\times3$ grid (channels $u_c,v_c,w_c$;
scenarios S2/S3/S4). Several predictions of Theorem~\ref{thm:obs} are
visible: (i) a boundary-layer transient of duration
$\approx5\epsilon=0.25$~s, after which the estimate locks onto the true
value within the steady-state offset of~\eqref{eq:delta_o}; (ii) under
oscillatory current S3, the Stage-1/Stage-3 implementation tracks the
$\omega_c\in[0.03,0.05]$~rad/s disturbance with sufficient bandwidth;
(iii) the transverse channels ($v_c,w_c$) lag the surge channel
slightly, reflecting two-step recovery via the directional
Gramian~\eqref{eq:Gamma_def} versus the direct surge residual.

Fig.~\ref{fig:err_conv} shows $\|\tilde{\bm\nu}_c(t)\|$ on a
semi-logarithmic scale with the predicted bound $\delta_o$ of
Theorem~\ref{thm:obs}. The initial transient reflects the fast
Stage-1 boundary layer at rate
$\lambda_h/\epsilon$~\eqref{eq:hgd_bound}; the subsequent decay follows
the slow rate $\lambda_c$ of~\eqref{eq:obs_bound}, with steady-state
level approaching $\delta_o\approx c_y\sigma_y/\sqrt{\alpha_{\rm PE}}$.
The four-scenario overlay confirms that this level is essentially
independent of current magnitude, as the disturbance-driven
bound~\eqref{eq:delta_o} predicts.

The per-channel RMS accuracy and two residual-reduction metrics appear
in Table~\ref{tab:obs}. The \emph{current-estimation residual
reduction} $\mathrm{CER}:=1-\|\tilde{\bm\nu}_c\|/\|\bm\nu_c\|$ ranges
from $90.5\%$ to $94.4\%$ on the helix and from $89\%$ to $96\%$
across the three trajectories (scenarios~S2--S4), with
trajectory-dependent
variation reflecting the directional excitation $\alpha_{\rm PE}$ of
Lemma~\ref{lem:obs_exc}. The empirical translational \emph{hydrodynamic
residual reduction} (HRR) of Corollary~\ref{cor:vec_res},
$1-\mathrm{RMS}\|h_{\rm tr}(\hat{\bm\nu}_r)-h_{\rm tr}(\bm\nu_r)\|/\mathrm{RMS}\|h_{\rm tr}(\bm\nu)-h_{\rm tr}(\bm\nu_r)\|$,
falls in $99.0\%$--$99.6\%$ on the REMUS trajectory. The scalar
surge-channel form
$\mathrm{HRR}_{\rm surge}:=1-|e_{\rm ds}^{\rm rel}|/|e_{\rm ds}^{\rm abs}|$
of Lemma~\ref{lem:relsched} is the rigorous analytical bound
used in the break-even discussion of
Section~\ref{subsec:robust}.

\begin{remark}[Worst-case guarantee versus empirical HRR]
\label{rem:hrr_vs_lemma}
The break-even law of Lemma~\ref{lem:relsched} gives the worst-case
residual-ratio bound
\[
\frac{|e_{\rm ds}^{\rm rel}|}{|e_{\rm ds}^{\rm abs}|}
\;\le\;\frac{L_{\rm drag}}{m_{\rm drag}}\,\frac{|\tilde u_c|}{|u_c|},
\]
or equivalently the guaranteed lower bound
\[
\mathrm{HRR}\;\ge\;1-\frac{L_{\rm drag}}{m_{\rm drag}}\,\frac{|\tilde u_c|}{|u_c|}.
\]
For the REMUS parameters and the operating interval used here,
$L_{\rm drag}/m_{\rm drag}=577.7/13.7\approx 42.2$, which yields a
strongly conservative (and, for $|\tilde u_c|/|u_c|\gtrsim 2.4\%$,
\emph{vacuous}) analytical lower bound. The empirical HRR of
$99.0\%$--$99.6\%$ across the tested REMUS trajectories is therefore
much higher than the worst-case bound predicts: the actual trajectory
exhibits favourable sign alignment and partial cancellation between
the linear and quadratic damping residuals on a small operating
range, while $L_{\rm drag}$ is computed over the full envelope
$\mathcal I_u$. The empirical HRR should be read as a
trajectory-specific observation; Lemma~\ref{lem:relsched} provides
the rigorous but conservative analytical structure, and the gap
between $\chi=0.0237$ and the empirical break-even $\approx 0.75$
quantifies the conservativeness gap.
\end{remark}

\begin{figure*}[tbp]
\centering
\includegraphics[width=\textwidth]{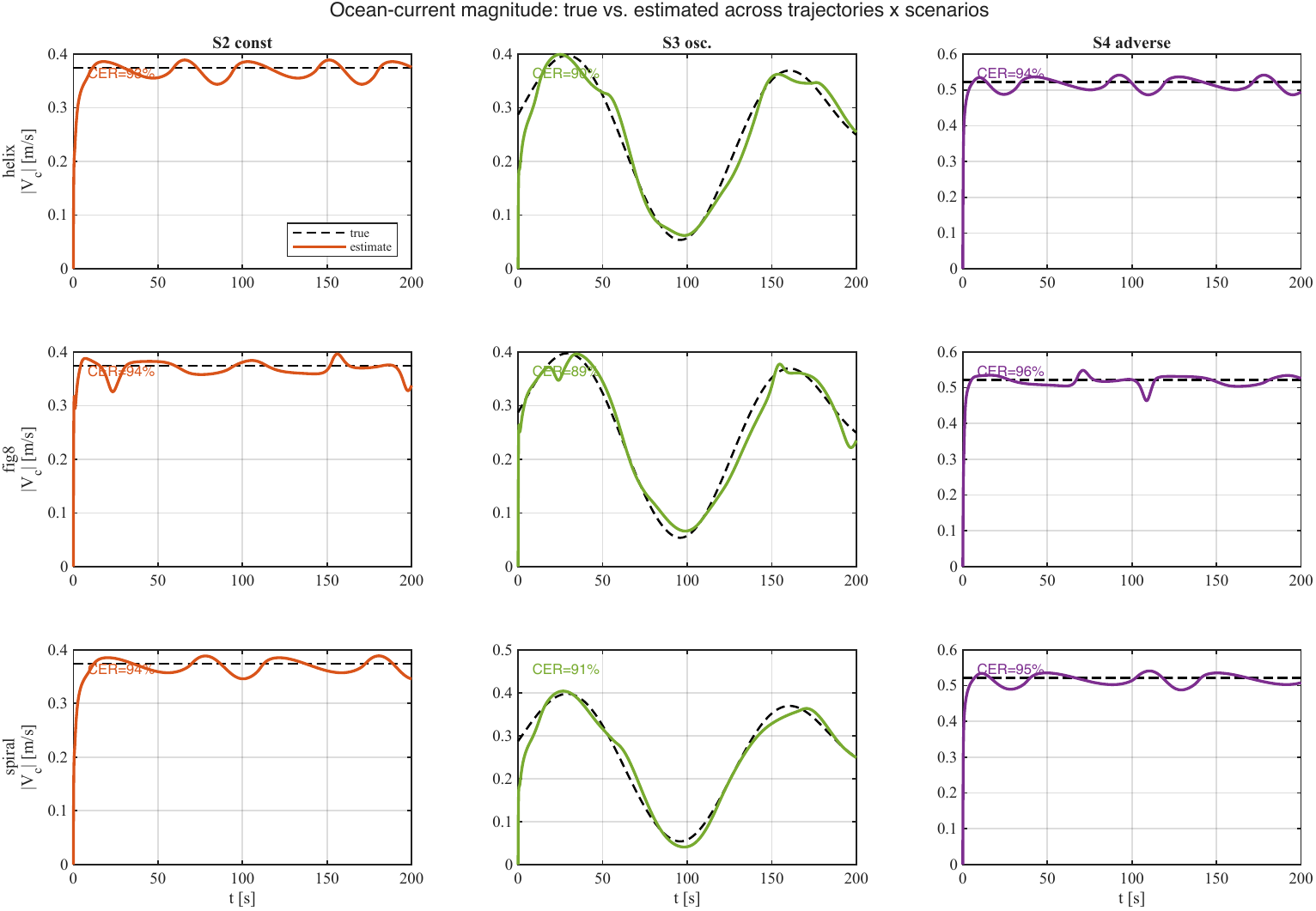}
\caption{Body-frame current estimation across scenarios:
$3\times 3$ panel grid with channels $u_c,v_c,w_c$ (rows) and
scenarios S2/S3/S4 (columns). True signal dashed; estimate solid.}
\label{fig:cur_est}
\end{figure*}

\begin{figure*}[htbp]
\centering
\includegraphics[width=0.8\textwidth]{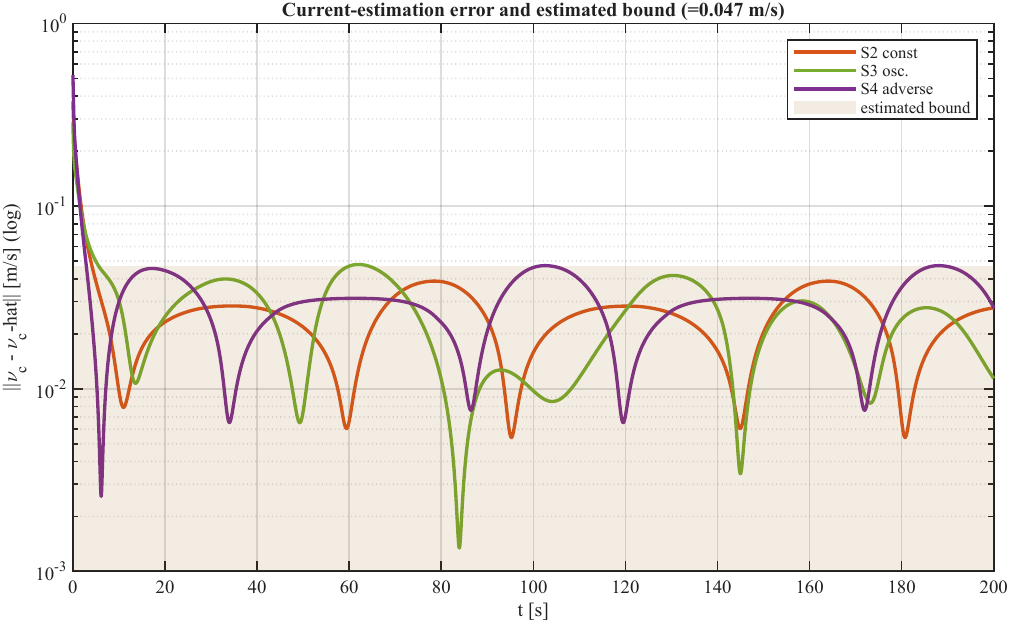}
\caption{Estimation-error convergence $\|\tilde{\bm\nu}_c(t)\|$ on a
semi-logarithmic scale for the four scenarios. Dashed: predicted
ultimate bound $\delta_o$ of Theorem~\ref{thm:obs}.}
\label{fig:err_conv}
\end{figure*}

\begin{table}[!htbp]
\centering
\caption{Per-channel current-estimation accuracy with CER and HRR
metrics (defined in Sec.~\ref{subsec:metrics},
eqs.~\eqref{eq:metric_cer}--\eqref{eq:metric_hrr}). Helix segment.}
\label{tab:obs}
\begin{adjustbox}{max width=\columnwidth}
\setlength{\tabcolsep}{3pt}\small
\begin{tabular}{@{}lccccc@{}}
\toprule
\textbf{Sc.} & RMS $\tilde u_c$ & RMS $\tilde v_c$ & RMS $\tilde w_c$
& CER & HRR\\
& [m/s] & [m/s] & [m/s] & [\%] & [\%]\\
\midrule
S1 & 0.0000 & 0.0005 & 0.0000 & N/A  & N/A\\
S2 & 0.0003 & 0.0256 & 0.0042 & 93.5 & 99.5\\
S3 & 0.0009 & 0.0265 & 0.0041 & 90.5 & 99.3\\
S4 & 0.0005 & 0.0303 & 0.0055 & 94.4 & 99.5\\
\bottomrule
\end{tabular}
\end{adjustbox}
\end{table}

\paragraph{Control effort and saturation.}
The integrated control effort $J_\tau=\int\|\bm\tau(t)\|^2\,dt$ is about
$2.60\times10^7$ on S2 and $2.69\times10^7$ on S4 (mixed units, summed
across actuators; breakdown in Table~\ref{tab:comparison}). The pitch
and yaw moments stay far from their bounds
$|\tau_q|,|\tau_r|\le40$~N\,m, with zero fin-equivalent saturation duty
cycle. The surge thrust is clipped at $U_{\max}=500$~N for $17.2\%$ of
the horizon on S2 and $39.0\%$ on S4. Although the practical-UUB certificate of
Theorem~\ref{thm:cl} is derived on the unsaturated cone
(Assumption~\ref{ass:operset}~(A4b)), the simulations enforce actuator
limits explicitly and the closed loop remains practically stable; the
stress test of Section~\ref{subsec:saturation_stress} quantifies the
degradation when saturation dominates. The results are thus engineering
validation beyond the local formal certificate, not a violation of it.

\subsubsection{Five-way ablation: observer dominance}
\label{subsec:five_way_ablation}

Table~\ref{tab:ablation} evaluates the helix segment of S2 under
five configurations: ideal scheduling, relative scheduling, absolute
scheduling, confidence-weighted relative scheduling, and no-observer.
The result identifies the observer-assisted feedforward as the
dominant performance enabler. Removing the observer increases RMS
tracking error from $0.241$~m to $4.038$~m, because the hydrodynamic
feedforward loses access to the estimated relative velocity and
compensates the damping at the wrong velocity argument. By contrast,
ideal, relative, absolute, and confidence-weighted scheduling remain
numerically indistinguishable in the well-tuned regime, with
differences below $10^{-3}$~m. This confirms that the residual-level
advantage of Lemma~\ref{lem:relsched} is masked at the tracking
level when the closed-loop disturbance-to-error gain is small. The
conditional scheduling-side benefit is examined separately under
reduced feedback authority in Section~\ref{subsec:robust}.

\begin{table}[!htbp]
\centering
\caption{Five-way ablation on the helix segment of S2, well-tuned
closed loop ($g_{\rm f}=1$).}
\label{tab:ablation}
\begin{adjustbox}{max width=\columnwidth}
\setlength{\tabcolsep}{4pt}\small
\begin{tabular}{@{}lc@{}}
\toprule
\textbf{Configuration} & \textbf{RMS} $|p_e|$ [m]\\
\midrule
Ideal scheduling (true $\bm\nu_c$ in LPV schedule) & 0.2407\\
Relative scheduling (proposed, $\hat{\bm\nu}_c$ from observer) & 0.2407\\
Absolute scheduling (raw $\bm\nu$; observer-assisted FF retained) & 0.2407\\
Confidence-weighted relative scheduling & 0.2407\\
\midrule
\textbf{No observer} ($\hat{\bm\nu}_c=\bm 0$) & \textbf{4.0378}\\
\bottomrule
\end{tabular}
\end{adjustbox}
\end{table}

\subsubsection{Conditional scheduling-side benefit under reduced feedback authority}
\label{subsec:conditional_sched_benefit}

To reveal the conditional benefit of relative-velocity scheduling,
the feedback authority is reduced by uniformly scaling the feedback
gains with $g_{\rm f}\in\{1.0,0.4,0.2\}$, while the
observer-assisted feedforward is kept active. Fig.~\ref{fig:breakeven}
illustrates the break-even law of Lemma~\ref{lem:relsched} through
the residual ratio
$R_{\rm rel}(\alpha)=\|e_{\rm ds}^{\rm rel}\|/\|e_{\rm ds}^{\rm abs}\|$
against the observer-quality ratio
$\alpha=\|\tilde{\bm\nu}_c\|/\|\bm\nu_c\|$. For the REMUS parameters,
the worst-case analytical threshold is
$\chi=m_{\rm drag}/L_{\rm drag}=0.0237$\footnote{This
surge-damping condition number is unrelated to, and only coincidentally
equal to, the actuator-coupling coefficient $\epsilon_q=0.0237$ of the
plant model.}
($m_{\rm drag}=13.70$, $L_{\rm drag}=577.70$ on
$\mathcal I_u=[-\bar u_{r,\max},\bar u_{r,\max}]$ with
$\bar u_{r,\max}=2.0$~m/s). The empirical sweep remains beneficial
up to $\alpha\approx 0.75$, showing that the analytical threshold is
conservative on the tested REMUS trajectories; beyond this point,
the relative residual exceeds the absolute one, corresponding to
failure mode~(F1) of Proposition~\ref{prop:failure}.

Table~\ref{tab:stress} reports the stress-regime ablation. At nominal
authority ($g_{\rm f}=1.0$), relative and absolute scheduling are
indistinguishable, consistent with the masking effect observed in
Table~\ref{tab:ablation}. When feedback authority is reduced, the
scheduling-side benefit becomes visible: relative scheduling improves
RMS tracking by $2.6\%$ at $g_{\rm f}=0.4$ and by $9.5\%$ at
$g_{\rm f}=0.2$. The continuous sweep in Fig.~\ref{fig:condbenefit}
confirms the same trend, with the rel-vs-abs and
rel-vs-zero-current-scheduling envelopes crossing into beneficial
territory near $g_{\rm f}\approx 0.5$. Hence, relative-velocity
scheduling is not the dominant source of nominal performance, but it
provides a conditional robustness margin when feedback authority is
reduced.

% \begin{table}[H]
% \centering
% \caption{Stress-regime ablation: feedback-gain sweep $g_{\rm f}$
% under S2. The ``zero-current sched.''\ column collapses LPV
% scheduling to absolute while keeping the observer-assisted
% feedforward active (see Remark~\ref{rem:nominal_vs_stress}).}
% \label{tab:stress}
% \begin{adjustbox}{max width=\columnwidth}
% \setlength{\tabcolsep}{3pt}\small
% \begin{tabular}{@{}lcccccc@{}}
% \toprule
% $g_{\rm f}$ & abs & rel & ideal & 0-curr.\ sched.\ & rel-vs-abs & rel-vs-0curr\\
% & [m] & [m] & [m] & [m] & (\%) & (\%)\\
% \midrule
% 1.0 (nominal) & 0.2407 & 0.2407 & 0.2407 & 0.2407 & $\phantom{+}0.0$ & $\phantom{+}0.0$\\
% 0.4 (stress)  & 0.6383 & 0.6217 & 0.6221 & 0.6383 & $+2.6$ & $+2.6$\\
% 0.2 (weak)    & 1.5844 & 1.4342 & 1.4606 & 1.5844 & $+9.5$ & $+9.5$\\
% \bottomrule
% \end{tabular}
% \end{adjustbox}
% \end{table}

\subsection{Piecewise-varying current scenario: observer re-convergence
and tracking recovery}\label{subsec:piecewise}
To evaluate the observer--controller architecture under a more realistic
non-stationary environment, we introduce a piecewise-varying current
scenario in which both the magnitude and direction of the
navigation-frame current change during the mission. This test probes the
re-convergence capability of the three-stage current observer of
Sec.~\ref{subsec:observer} and the transient recovery of the closed-loop
tracking system after abrupt environmental changes. Unlike the
oscillatory scenario~S3, it creates successive current-regime transitions
resembling entry into different flow layers or eddy structures, and the
controller is assessed not only by global RMS tracking error but also by
observer recovery time, tracking recovery time, hydrodynamic residual
reduction, and actuator-saturation duty cycle
(metrics~\eqref{eq:metric_rms}--\eqref{eq:trec_track}).

\paragraph{Current profile.}
The navigation-frame current is specified per regime by a horizontal
magnitude $\|\bm V_{c\perp}^n\|$, heading $\beta_c\in[0^\circ,360^\circ)$,
and an independent vertical component $V_{cz}^n$. The five concatenated
regimes are listed in Table~\ref{tab:piecewise_def}, with the
discontinuities smoothed by
\begin{equation}\label{eq:tanh_smoothing}
\bm V_c^n(t)=\bm V_{c,k}^n+\tfrac{1}{2}\!\left(1+\tanh\frac{t-t_k}{T_s}\right)
(\bm V_{c,k+1}^n-\bm V_{c,k}^n)
\end{equation}
on each transition interval $[t_k-T_s,t_k+T_s]$ with $T_s=3$~s, so that
$\bm V_c^n(t)$ and $\dot{\bm V}_c^n(t)$ are uniformly bounded. This is
consistent with Assumption~\ref{ass:current}~(C2), which extends~(C1) to
non-stationary profiles and keeps the augmented disturbance
$\bm w_{\rm aug}$ of Theorem~\ref{thm:cl} in $\Lcal_\infty$. The maximum
navigation-frame magnitude is $\max_t\|\bm V_c^n(t)\|\approx0.50$~m/s,
comparable to scenario~S4; the reference is the descending helix of
Sec.~\ref{subsec:setup}.

Figure~\ref{fig:jump_traj} shows the imposed field and the resulting
motion. The stepwise magnitude and heading profiles on the right realise
the five regimes of Table~\ref{tab:piecewise_def}, while the
three-dimensional view on the left confirms that the vehicle remains on the
descending-helix reference through every transition, the body-referenced
current vector visibly rotating and rescaling at the marked jump instants.
The complete closed-loop response is collected in
Fig.~\ref{fig:piecewise}, whose four panels report the body-frame current
estimation, the observer error norm, the tracking error of the four
controllers, and the surge-saturation activity; these are analysed in the
remainder of this subsection.

\begin{table}[!htbp]
\centering
\caption{Piecewise navigation-frame current
profile~\eqref{eq:tanh_smoothing}, specified per regime by horizontal
magnitude, heading $\beta_c$, and vertical component $V_{cz}^n$.}
\label{tab:piecewise_def}
\begin{adjustbox}{max width=\columnwidth}
\setlength{\tabcolsep}{4pt}\small
\begin{tabular}{@{}cccc@{}}
\toprule
$t_k$~[s] & $\|\bm V_{c\perp}^n\|$~[m/s] & $\beta_c$~[deg]
& $V_{cz}^n$~[m/s]\\
\midrule
$\phantom{1}0$ & 0.15 & $\phantom{2}30$ & $\phantom{-}0.02$\\
$\phantom{1}40$ & 0.40 & $100$ & $-0.03$\\
$\phantom{1}80$ & 0.30 & $200$ & $\phantom{-}0.05$\\
$120$ & 0.50 & $300$ & $-0.02$\\
$160$ & 0.20 & $\phantom{1}45$ & $\phantom{-}0.01$\\
\bottomrule
\end{tabular}
\end{adjustbox}
\end{table}

\begin{figure*}[!tbp]
\centering
\includegraphics[width=0.9\textwidth]{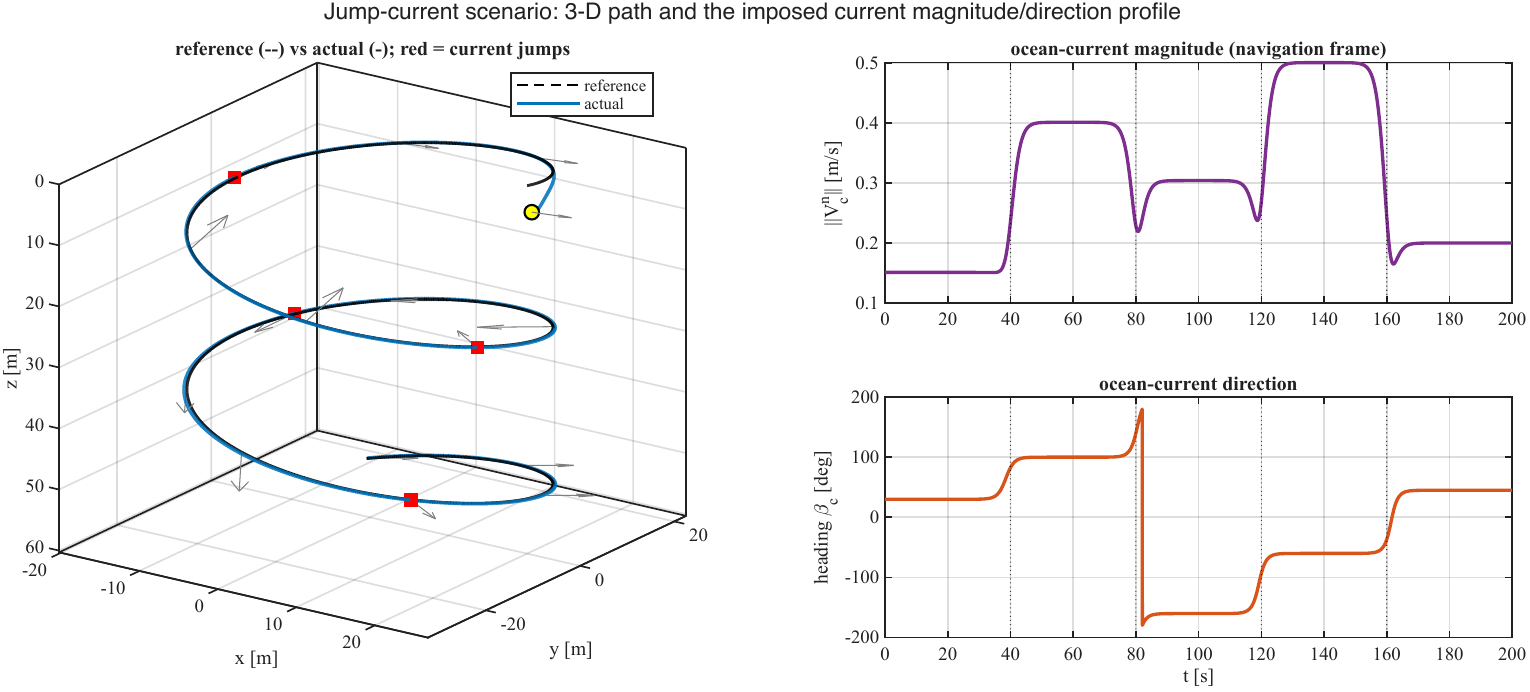}
\caption{Piecewise-current scenario: three-dimensional reference (dashed)
versus actual (solid) path with the body-referenced current quiver and
the current-jump instants marked (left), and the imposed navigation-frame
current magnitude and heading profiles (right).}
\label{fig:jump_traj}
\end{figure*}

\begin{figure*}[!tbp]
\centering
\includegraphics[width=0.9\textwidth]{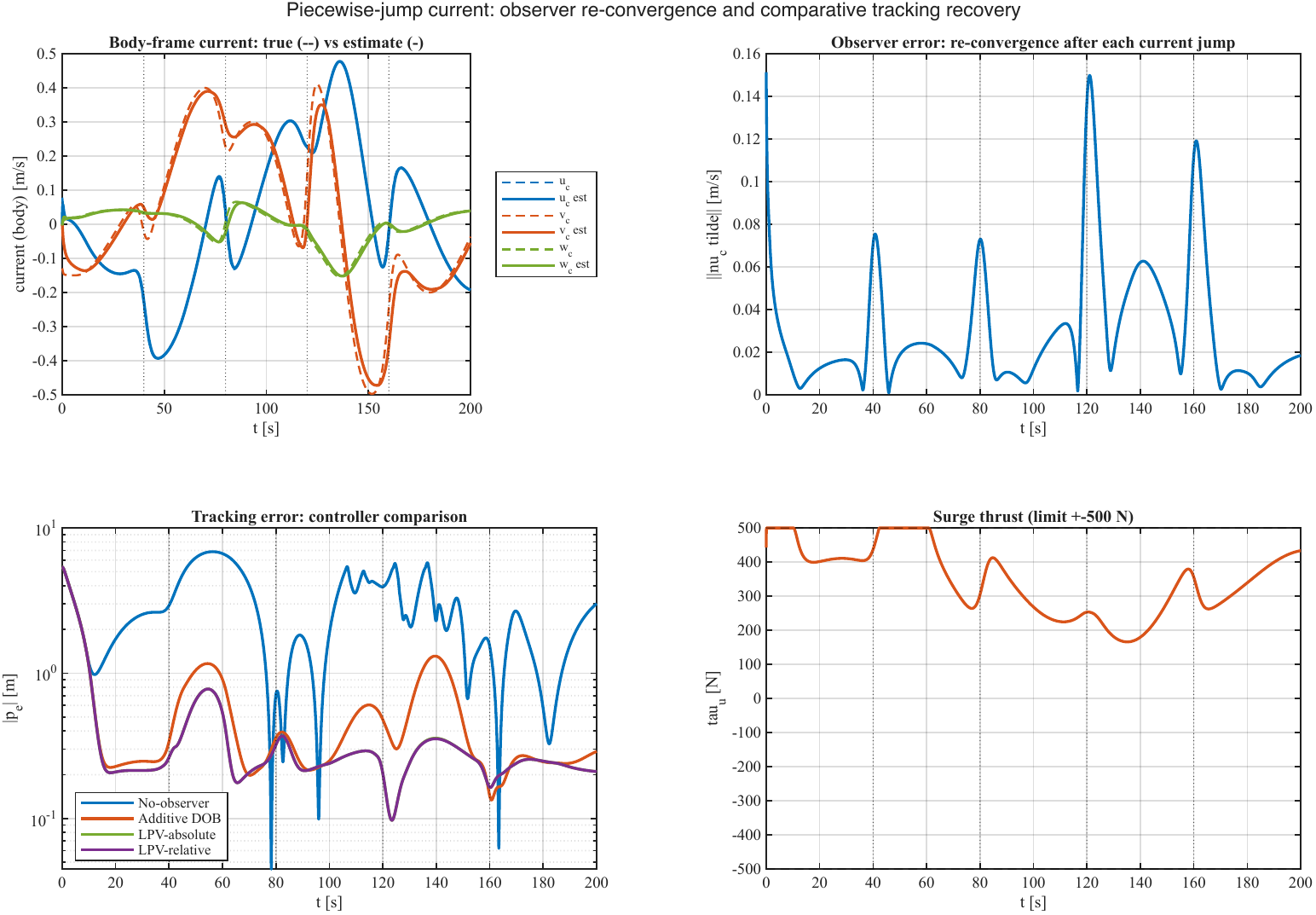}
\caption{Piecewise-current scenario of Table~\ref{tab:piecewise_def} with
smoothing~\eqref{eq:tanh_smoothing}. (a) Body-frame current components
true (dashed) vs.\ estimated (solid) with markers at
$t_k\in\{40,80,120,160\}$~s; (b) estimation-error norm
$\|\tilde{\bm\nu}_c(t)\|$; (c) tracking error $\|p_e(t)\|$ for four
controllers (no-observer, additive DOB, LPV-absolute, proposed
LPV-relative); (d) surge-saturation flag. Post-transition recovery
intervals are shaded.}
\label{fig:piecewise}
\end{figure*}

\paragraph{Recovery performance.}
Panel~(a) of Fig.~\ref{fig:piecewise} shows that the observer estimate
follows each body-frame current component through the transitions, and
panel~(b) that the estimation-error norm $\|\tilde{\bm\nu}_c\|$ rises at
every jump and then re-converges within an interval consistent with the
slow observer time scale $1/\lambda_c$; the per-segment observer recovery
times $t_{\rm rec}^{\bm\nu_c}$ in Table~\ref{tab:jump_recovery} lie in the
$0.05$--$3.75$~s range. Panel~(c) compares the four controllers: the
proposed LPV-relative controller keeps the tracking error bounded and
returns it to its pre-transition level after each regime change, the
additive-DOB baseline shows an intermediate degradation, and the
no-observer controller exhibits the largest excursions because its surge
feedforward compensates the damping at the wrong velocity argument.
Panel~(d) shows that the surge channel reaches its bound only briefly
around the largest-magnitude transitions. Quantitatively
(Table~\ref{tab:jump_recovery}), the peak post-transition error is largest
at the first transition ($0.778$~m, with $t_{\rm rec}^{p_e}>20$~s as the
vehicle is still leaving its initial-convergence regime), while every
later jump recovers within $\le4$~s; the full-horizon metrics are
RMS-$\|p_e\|=0.309$~m, CER$=90.3\%$, HRR$=98.2\%$, and a surge-saturation
duty cycle of only $10.3\%$. Relative to the no-observer baseline the
proposed method substantially reduces both the peak post-transition error
and the recovery time, and relative to the additive-DOB baseline of
Sec.~\ref{subsec:comparison} it yields a smaller hydrodynamic residual and
lower post-transition error, since the observer estimates the
relative-velocity argument of the nonlinear hydrodynamic model rather than
lumping the current into an additive disturbance.

\begin{table}[!htbp]
\centering
\caption{Recovery performance under the piecewise-varying current of
Table~\ref{tab:piecewise_def}. Per-segment metrics use window~W3
($\Delta T_{\rm rec}\!=\!20$~s); thresholds and an entry of $>20$ are
defined in Sec.~\ref{subsec:metrics}.}
\label{tab:jump_recovery}
\begin{adjustbox}{max width=\columnwidth}
\setlength{\tabcolsep}{3pt}\small
\begin{tabular}{@{}cccccc@{}}
\toprule
$t_k$ [s] & $\|\Delta\bm V_c^n\|$ [m/s] &
Peak $\|p_e\|$ [m] & Pre-jump $\|p_e\|$ [m] &
$t_{\rm rec}^{\bm\nu_c}$ [s] & $t_{\rm rec}^{p_e}$ [s]\\
\midrule
40  & 0.404 & 0.778 & 0.223 & 2.60 & $>20$\\
80  & 0.726 & 0.369 & 0.243 & 0.05 & 4.00\\
120 & 0.327 & 0.353 & 0.278 & 3.75 & 0.05\\
160 & 0.831 & 0.255 & 0.240 & 2.85 & 0.05\\
\midrule
\multicolumn{2}{l}{Overall RMS $\|p_e\|$ (W1)} &
\multicolumn{4}{c}{$0.309$~m} \\
\multicolumn{2}{l}{Overall CER (W1)} &
\multicolumn{4}{c}{$90.3\%$}\\
\multicolumn{2}{l}{Overall HRR (W1)} &
\multicolumn{4}{c}{$98.2\%$}\\
\multicolumn{2}{l}{Surge-saturation duty cycle} &
\multicolumn{4}{c}{$10.3\%$}\\
\bottomrule
\end{tabular}
\end{adjustbox}
\end{table}

\paragraph{Observer, tracking, and actuator diagnostics.}
Per-channel diagnostics under the same scenario are reported in the
Supplementary Material: the three-axis current estimation (Fig.~S10)
shows the surge component recovered almost immediately through the
dynamic surge residual while the transverse components re-lock after a
brief lag through the directional-Gramian innovation, and the per-axis
tracking decomposition (Fig.~S11) shows the transients dominated by the
cross-track and depth channels while the along-track error stays small
owing to the surge feedforward. The actuator response is shown in
Fig.~\ref{fig:jump_ctrl}: the pitch and yaw moments remain well within
their $\pm40$~N\,m bounds throughout, and only the surge thrust
momentarily reaches its $U_{\max}$ limit around the largest-magnitude
transitions, confirming that surge-thrust authority---not the observer
or the LPV correction---is the binding practical limitation under
abrupt currents.

\begin{figure*}[t]\centering
\includegraphics[width=0.9\columnwidth]{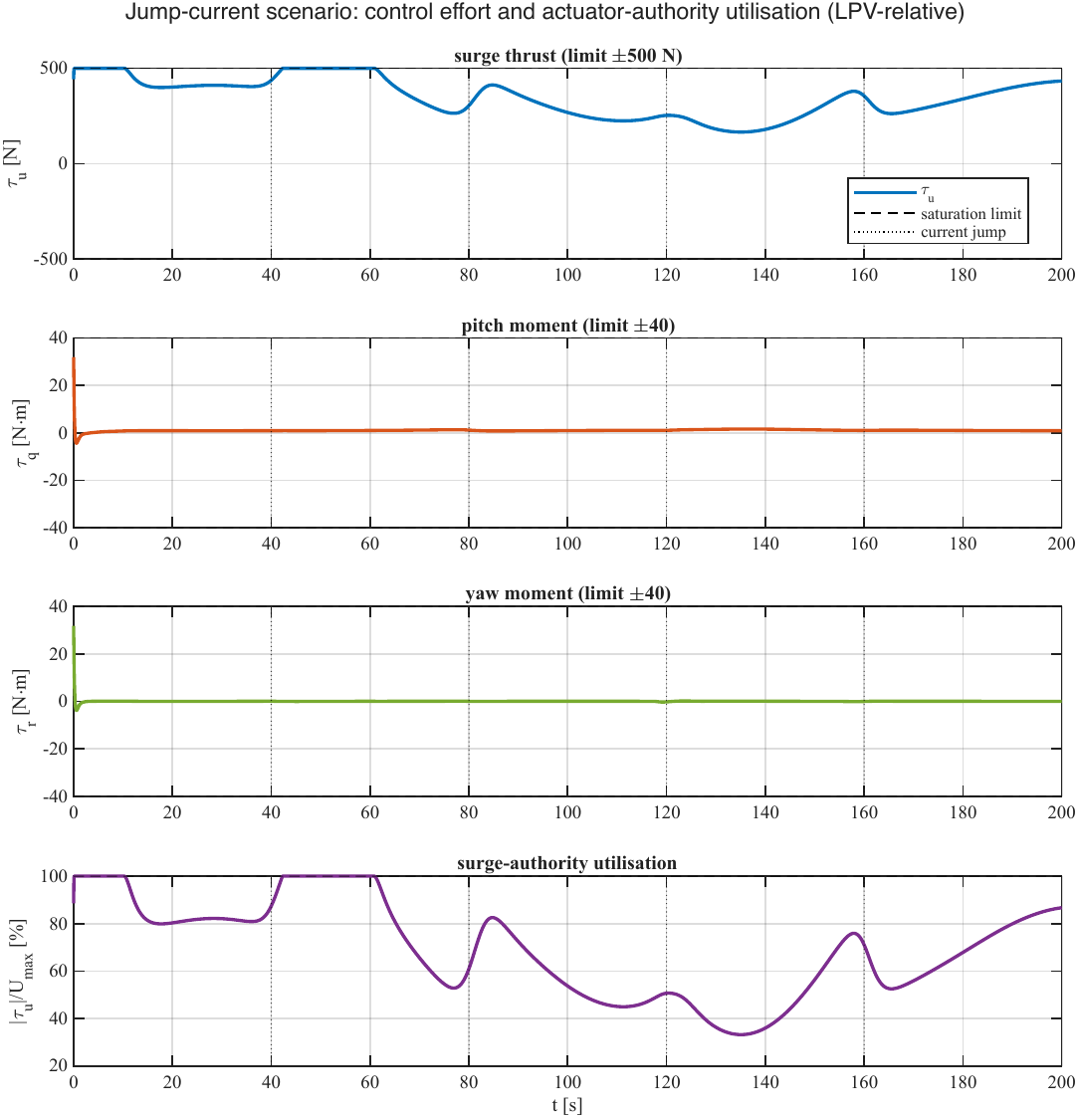}
\caption{Piecewise-current scenario: surge thrust $\tau_u$, pitch and yaw
moments $\tau_q,\tau_r$ with saturation limits, and the surge-authority
utilisation $|\tau_u|/U_{\max}$ (bottom).}
\label{fig:jump_ctrl}
\end{figure*}

\paragraph{Direction-only current jumps.}
To isolate the pure effect of a current-\emph{direction} change on the
observer, a companion experiment holds the navigation-frame magnitude
constant at $0.35$~m/s and steps only the heading through
$30^\circ\!\to\!120^\circ\!\to\!210^\circ\!\to\!300^\circ\!\to\!60^\circ$
(Fig.~\ref{fig:dirjump}). Because the magnitude is fixed, each jump
rotates the \emph{body-frame} current without changing its norm, so the
transverse channels $v_c,w_c$ must be re-estimated while the surge
channel is only mildly perturbed---exactly the directional recovery
governed by the Gramian condition~\eqref{eq:E2}. The top panel of
Fig.~\ref{fig:dirjump} confirms the constant magnitude against the
stepped heading; the current-estimate panels show the observer
re-aligning the rotated components within a few seconds of each step,
and the error-norm panel marks the corresponding recovery instants. The
bottom panel repeats the four-controller comparison: the proposed
LPV-relative law exhibits the smallest and shortest tracking transient,
the additive-DOB baseline lags, and the no-observer controller cannot
compensate the rotation at all. This confirms that it is the observer's
\emph{directional} re-tracking, and not a magnitude-scaling effect, that
underlies the performance under non-stationary currents.

\begin{figure*}[htbp]\centering
\includegraphics[width=0.9\textwidth]{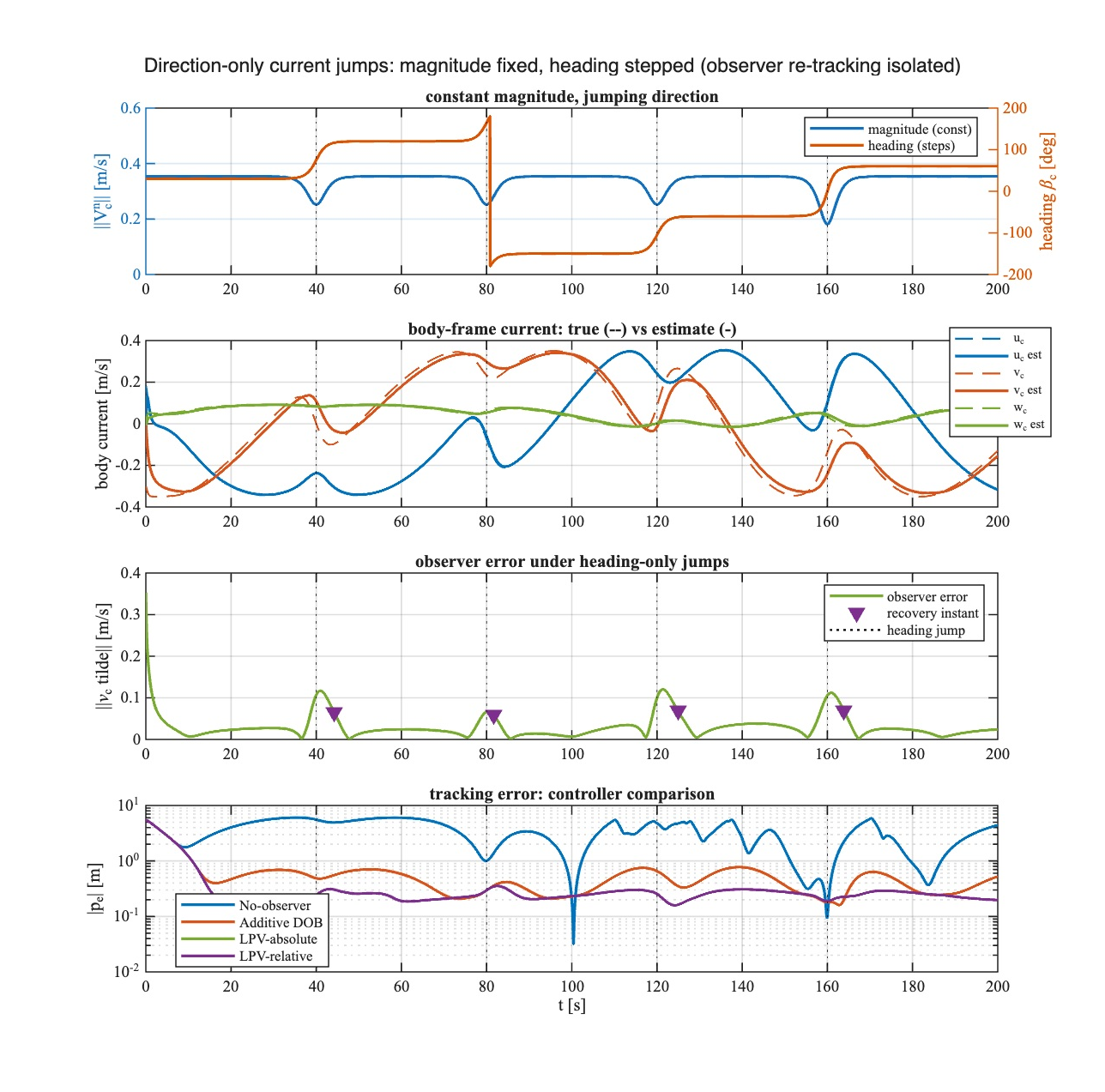}
\caption{Direction-only current jumps (constant magnitude
$\|\bm V_c^n\|=0.35$~m/s, stepped heading
$30^\circ\!\to\!120^\circ\!\to\!210^\circ\!\to\!300^\circ\!\to\!60^\circ$):
imposed magnitude and heading (top); three-axis body-frame current,
true (dashed) vs.\ observer estimate (solid); observer error norm
$\|\tilde{\bm\nu}_c\|$ with recovery instants; and the tracking-error
comparison across the no-observer, additive-DOB, LPV-absolute and
proposed LPV-relative controllers (bottom, log scale).}
\label{fig:dirjump}
\end{figure*}

\paragraph{Connection to the formal certificate.}
On each post-transition interval the directional-excitation
condition~\eqref{eq:E2} is satisfied by the helix angular rates
$(q,r)\neq0$, so the observer certificate of Theorem~\ref{thm:obs} applies
per interval. The current change $\Delta\bm V_c^n$ enters the closed loop
through the augmented disturbance $\bm w_{\rm aug}$ of
Theorem~\ref{thm:cl}, and the ultimate-bound expression~\eqref{eq:uub}
predicts a per-interval practical UUB consistent with the observed recovery
of $\|p_e\|$. The scenario is therefore a per-interval validation of
Theorems~\ref{thm:obs} and~\ref{thm:cl}. As in Sec.~\ref{subsec:robust},
the LPV-\textsc{absolute} variant retains the observer-assisted
relative-velocity feedforward and differs only in the LPV scheduling map,
so the absolute-vs.-relative comparison isolates the scheduling-side
contribution of the current estimate under abrupt environmental changes.

\subsection{Robustness verification under various uncertainties}
\label{subsec:robust}

\subsubsection{Break-even and conditional-benefit validation}
The conditional-benefit prediction of Lemma~\ref{lem:relsched} and
Theorem~\ref{thm:cl}, introduced in
Section~\ref{subsec:conditional_sched_benefit}, is validated by
repeating the ablation under uniform feedback-bandwidth scaling
$g_{\rm f}\in\{1.0,0.4,0.2\}$ on scenario~S2 (Table~\ref{tab:stress};
within-row comparisons only, since the stress window is not comparable
to the full-horizon values of Table~\ref{tab:ablation}). The three
predicted regimes are summarised in
Table~\ref{tab:regimes}.

\begin{table}[!htbp]
\centering
\caption{Conditional-benefit regimes under feedback-bandwidth
scaling $g_{\rm f}$ on scenario~S2.}
\label{tab:regimes}
\begin{adjustbox}{max width=\columnwidth}
\setlength{\tabcolsep}{4pt}\small
\begin{tabular}{@{}cllc@{}}
\toprule
$g_{\rm f}$ & \textbf{Regime} & \textbf{Dominant term in~\eqref{eq:eps_bar}} & \textbf{rel-vs-abs}\\
\midrule
$1.0$ & Masked-benefit & Separation-induced floor & $\approx 0.0\%$\\
$0.4$ & Transition     & Disturbance term emerging & $+2.6\%$\\
$0.2$ & Weak feedback  & Disturbance dominant ($|\tilde u_c|/|u_c|\!\approx\!0.06$) & $+9.5\%$\\
\bottomrule
\end{tabular}
\end{adjustbox}
\end{table}

At $g_{\rm f}=0.2$ the rel-vs-0curr-sched.\ envelope coincides
numerically with the $+9.5\%$ rel-vs-abs figure, confirming that the
improvement stems from the current estimate entering the LPV
scheduling map.
As established in Section~\ref{subsec:conditional_sched_benefit}
(Figs.~\ref{fig:breakeven}--\ref{fig:condbenefit}), the empirical
break-even occurs near $\alpha\approx0.75$, far beyond the conservative
analytical threshold $\chi=0.0237$ (failure mode~(F1) of
Proposition~\ref{prop:failure} beyond it); the rel-vs-abs and
rel-vs-zero-current envelopes cross into beneficial territory at
$g_{\rm f}\approx0.5$, the empirical transition predicted by Theorem~\ref{thm:cl}.

\begin{table}[H]
\centering
\caption{Stress-regime ablation: feedback-gain sweep $g_{\rm f}$
under S2. The ``zero-current sched.''\ column collapses LPV
scheduling to absolute while keeping the observer-assisted
feedforward active (see Remark~\ref{rem:nominal_vs_stress}).}
\label{tab:stress}
\begin{adjustbox}{max width=\columnwidth}
\setlength{\tabcolsep}{3pt}\small
\begin{tabular}{@{}lcccccc@{}}
\toprule
$g_{\rm f}$ & abs & rel & ideal & 0-curr.\ sched.\ & rel-vs-abs & rel-vs-0curr\\
& [m] & [m] & [m] & [m] & (\%) & (\%)\\
\midrule
1.0 (nominal) & 0.2407 & 0.2407 & 0.2407 & 0.2407 & $\phantom{+}0.0$ & $\phantom{+}0.0$\\
0.4 (stress)  & 0.6383 & 0.6217 & 0.6221 & 0.6383 & $+2.6$ & $+2.6$\\
0.2 (weak)    & 1.5844 & 1.4342 & 1.4606 & 1.5844 & $+9.5$ & $+9.5$\\
\bottomrule
\end{tabular}
\end{adjustbox}
\end{table}
\begin{figure}[t]
\centering
\begin{adjustbox}{max width=\columnwidth}
\includegraphics[width=1\columnwidth]{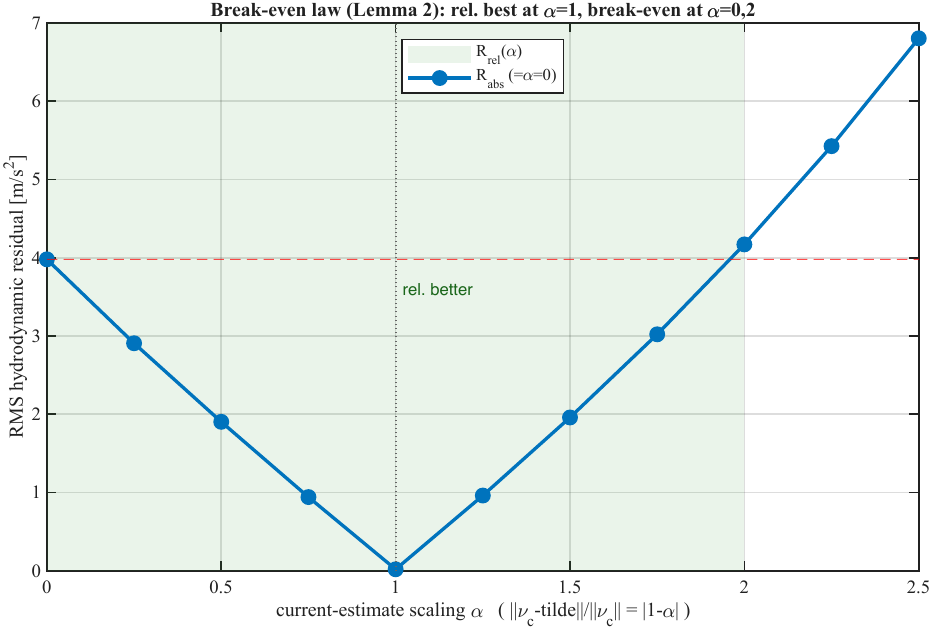}
\end{adjustbox}
\caption{Residual ratio $R_{\rm rel}(\alpha)$ of Lemma~\ref{lem:relsched}
plotted against the observer-quality ratio
$\alpha=\|\tilde{\bm\nu}_c\|/\|\bm\nu_c\|$ for $\alpha\in[0,2.5]$.
The conservative analytical constant is $\chi=0.0237$; the empirical
break-even occurs near $\alpha\approx 0.75$.}
\label{fig:breakeven}
\end{figure}

\begin{figure}[!htbp]
\centering
\begin{adjustbox}{max width=\columnwidth}
\includegraphics[width=1\columnwidth]{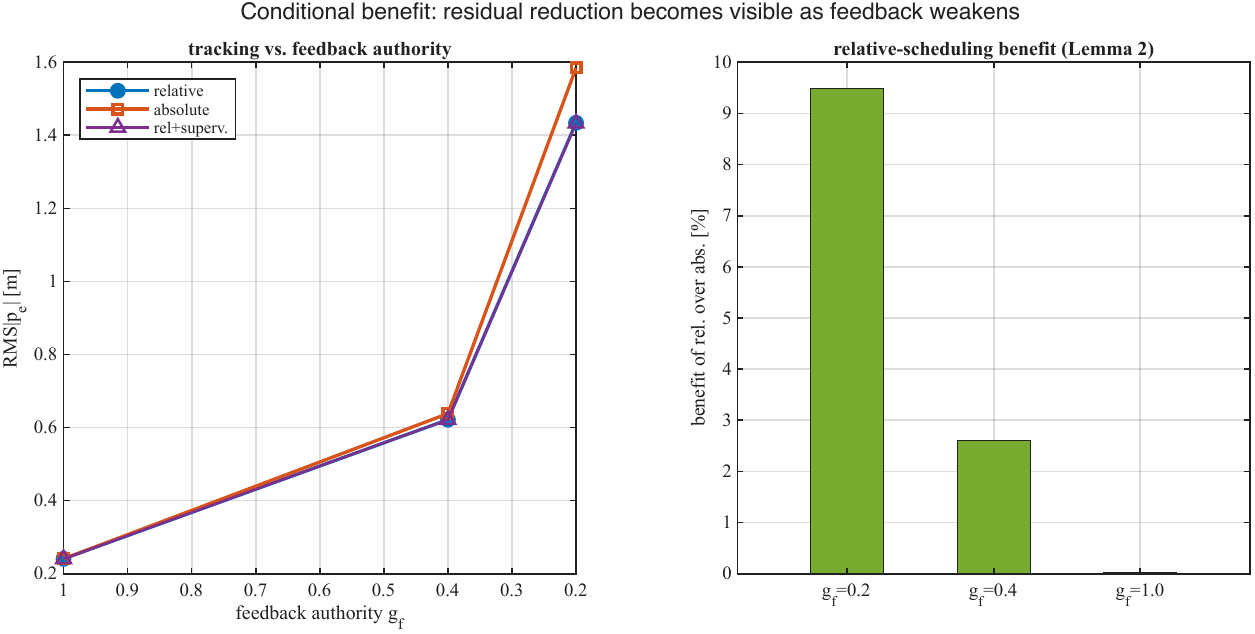}
\end{adjustbox}
\caption{Conditional benefit under feedback-gain sweep $g_{\rm f}$:
rel-vs-abs RMS-$p_e$ improvement (\%) and observer-gain envelope (\%)
for $g_{\rm f}\in[0.1,1]$.}
\label{fig:condbenefit}
\end{figure}

\begin{remark}[Nominal and stress-regime ablations]
\label{rem:nominal_vs_stress}
In Table~\ref{tab:ablation}, the ``no observer'' row sets
$\hat{\bm\nu}_c=\bm 0$ \emph{everywhere}, including the feedforward:
the surge-damping cancellation in~\eqref{eq:tau_ff} is then performed
at $\bm\nu$ instead of $\bm\nu_r$, breaking the dominant compensation
route and driving RMS-$p_e$ to $4.038$~m. In Table~\ref{tab:stress},
``zero-current sched.'' sets $\hat{\bm\nu}_c=\bm 0$ \emph{only inside
the LPV scheduling map}, while keeping the observer-assisted
$\hat{\bm\nu}_r$ active in the feedforward. Hence this column is
equivalent to the ``absolute'' scheduling column for
isolating the scheduling-side contribution. The ``rel-vs-0curr''
column therefore measures the benefit of feeding the current estimate
into the scheduling map, given that the feedforward already uses it.
The $g_{\rm f}=1.0$ row of Table~\ref{tab:stress} matches the
relative row of Table~\ref{tab:ablation} ($0.2407$~m).
\end{remark}
\begin{remark}[Stress sweep versus feedforward--scheduling ablation]
\label{rem:stress_vs_ffsched}
Tables~\ref{tab:stress} and~\ref{tab:ffsched} answer different
questions and are not row-comparable. Table~\ref{tab:stress} scales
feedback authority by $g_{\rm f}$ while keeping the observer-assisted
relative-velocity feedforward active, so the ``abs''/``rel'' labels
switch only the LPV scheduling argument; Table~\ref{tab:ffsched}
switches feedforward and scheduling arguments independently in a
$2\times2$ architectural ablation. At $g_{\rm f}=0.2$,
Table~\ref{tab:stress} exposes a $+9.5\%$ scheduling-side benefit
because weakened feedback increases closed-loop sensitivity to the
residual, while Table~\ref{tab:ffsched} keeps REL--REL and REL--ABS
within $\approx 0.03\%$ and identifies the feedforward as the
dominant mechanism. The two views are complementary and consistent
with the conditional-benefit prediction of Theorem~\ref{thm:cl}.
\end{remark}

\subsubsection{Feedforward versus scheduling: source of the benefit}
\label{subsec:ffsched_disentanglement}
To separate the two architectural routes, Table~\ref{tab:ffsched}
reports a $2\times2$ ablation in which the feedforward and LPV
scheduling arguments are independently set to absolute (ABS) or
relative (REL), on the helix segment of S2 at two
feedback-authority levels.

\begin{table}[!t]
\centering
\caption{Feedforward $\times$ scheduling disentanglement on the
helix segment of S2 under scheduled $\Hcal_\infty$ correction;
feedforward and scheduling arguments independently set to absolute
(ABS) or relative (REL). RMS values use the scheduled-$\Hcal_\infty$
additive realisation over the helix segment and therefore differ from
the full-horizon cascade RMS of Table~\ref{tab:metrics}.}
\label{tab:ffsched}
\begin{adjustbox}{max width=\columnwidth}
\setlength{\tabcolsep}{4pt}\small
\begin{tabular}{@{}lccc@{}}
\toprule
\textbf{FF--SCHED} & RMS $|p_e|$ [m] & HRR [\%] & CER [\%]\\
\midrule
\multicolumn{4}{l}{\textit{Feedback authority $g_{\rm f}=1.00$}}\\
ABS--ABS                       & 4.0473 & 98.3 & 93.8\\
REL--ABS                       & 0.2247 & 99.5 & 93.5\\
ABS--REL                       & 4.0467 & 94.7 & 92.1\\
REL--REL (proposed)            & 0.2247 & 99.5 & 93.5\\
\midrule
\multicolumn{4}{l}{\textit{Feedback authority $g_{\rm f}=0.20$}}\\
ABS--ABS                       & 14.7489 & 92.9 & 91.3\\
REL--ABS                       & \phantom{0}0.3644 & 99.5 & 93.5\\
ABS--REL                       & 14.5815 & 92.5 & 91.3\\
REL--REL (proposed)            & \phantom{0}0.3643 & 99.5 & 93.5\\
\bottomrule
\end{tabular}
\end{adjustbox}
\end{table}

The dominant performance gain comes from the observer-assisted
relative-velocity feedforward, not from the LPV scheduling argument.
Whenever the feedforward uses the absolute velocity (ABS--ABS and
ABS--REL), the RMS tracking error remains near the no-observer level
($\approx 4$~m at $g_{\rm f}=1.0$, $\approx 15$~m at $g_{\rm f}=0.2$);
switching to the relative feedforward (REL--ABS, REL--REL) drops the
RMS to $0.2247$~m and $\approx 0.364$~m respectively. Switching only
the scheduling argument while keeping the relative feedforward has a
secondary effect: REL--ABS and REL--REL are indistinguishable at
$g_{\rm f}=1.0$ and differ by $\approx 0.03\%$ at $g_{\rm f}=0.2$.
This is consistent with Table~\ref{tab:stress} once the two
ablations are read in the sense of Remark~\ref{rem:stress_vs_ffsched}.
The proposed architecture is therefore best read as an
observer-assisted relative-velocity feedforward controller with an
LPV-$\Hcal_\infty$ robust correction layer: the feedforward provides
the dominant current rejection, the LPV layer provides certified
attenuation, and relative scheduling adds a conditional robustness
margin under reduced feedback authority. Figure~\ref{fig:ffsched}
presents the same $2\times2$ ablation as grouped bars, making the
dominance of the relative feedforward and the secondary role of the
scheduling argument visually explicit.

\begin{figure}[!htbp]
\centering
\begin{adjustbox}{max width=\columnwidth}
\includegraphics[width=1\columnwidth]{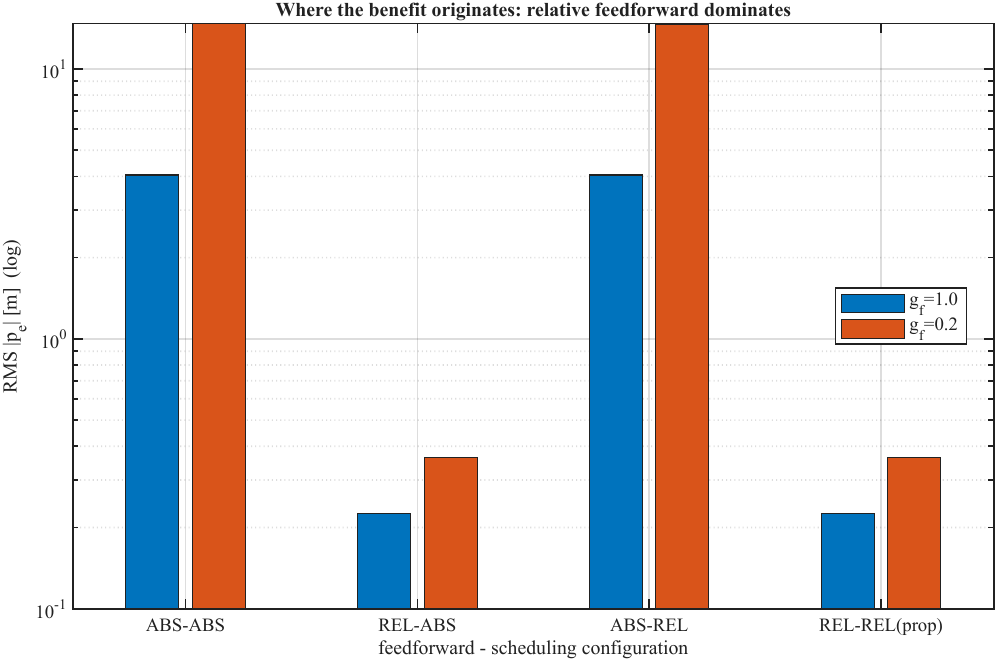}
\end{adjustbox}
\caption{Feedforward $\times$ scheduling disentanglement
(cf.\ Table~\ref{tab:ffsched}): grouped bars for ABS--ABS, REL--ABS,
ABS--REL, REL--REL at $g_{\rm f}\!\in\!\{1.0,0.2\}$. The dominant
improvement is from relative-velocity feedforward; scheduling
contributes a conditional robustness margin.}
\label{fig:ffsched}
\end{figure}

\subsubsection{Persistent-excitation diagnostic for Lemma~\ref{lem:obs_exc}}
The directional-Gramian excitation condition~\eqref{eq:E2} of
Lemma~\ref{lem:obs_exc} is verified numerically in the Supplementary
Material (Sec.~S1, Fig.~S1, Table~S1): the manoeuvring helix segments
yield $\alpha_\Gamma\sim 3$--$5\times10^{-4}$, sufficient for transverse
current recovery, whereas the straight-line cruise $t\in[0,90]$~s has
$(q,r)\approx0$ and $\lambda_{\min}(\Gamma)\approx0$, so the observer
certificate of Theorem~\ref{thm:obs} applies only on the manoeuvring
sub-interval; the near-singular Gramian there serves as an online
diagnostic for failure mode~(F1) of Proposition~\ref{prop:failure}.

\subsubsection{Observer time-scale, sensor noise, and Monte Carlo}
Sweeping the observer time scale $\epsilon\in\{0.02,0.05,0.1,0.2\}$
under calibrated noise leaves RMS tracking nearly unchanged
($\epsilon=0.05$ selected); a six-case Monte Carlo study
($N=100$ per case, Table~S2 of the Supplementary Material) shows
graceful degradation, with $0\%$ failures up to $\pm20\%$
hydrodynamic uncertainty and isolated failures only at $\pm30\%$
(case M4, $8\%$) and the combined case (M6, $12\%$), consistent with
the local practical-UUB interpretation of Theorem~\ref{thm:cl}.

\subsubsection{Empirical region-of-attraction sweep}
A radial sweep of the initial offset $|p_e(0)|$ (Supplementary
Material, Sec.~S3) finds the controller admissible up to
$|p_e(0)|\approx10$~m with RMS-$|p_e|\le0.8$~m, while offsets
$\ge20$~m fail the admissibility test. The main-study offset
$|p_e(0)|\approx5.4$~m lies well inside this empirical estimate of the
invariant sublevel set $\Omega_{c_0}$ of Theorem~\ref{thm:cl}.

\subsubsection{Actuator-authority stress test}
\label{subsec:saturation_stress}

Theorem~\ref{thm:cl} assumes the unsaturated cone of
Assumption~\ref{ass:operset}~(A4b). In S2, the surge channel is
saturated for $17.2\%$ of the horizon even at the nominal
$U_{\max}=500$~N; this case therefore lies outside the formal
certificate, although the nonlinear simulation remains practically
stable with RMS-$p_e=0.241$~m. To quantify the degradation induced by
limited actuator authority, $U_{\max}$ is swept from $500$ to
$300$~N, as reported in Table~\ref{tab:saturation}. Reducing the
limit to $400$~N raises RMS-$p_e$ to $2.724$~m, while $300$~N drives
the surge channel into complete saturation
($\mathrm{Sat}=100\%$) and increases RMS-$p_e$ to $7.484$~m. Fin
saturation remains zero throughout the sweep, showing that the
degradation is caused by insufficient surge-thrust authority rather
than by pitch/yaw moment limits. These results quantify the
engineering robustness of the saturated nonlinear implementation
beyond the local unsaturated certificate. An anti-windup extension is
therefore the natural follow-up discussed in
Section~\ref{sec:conclusion}.

\begin{table}[!htbp]
\centering
\caption{Actuator-authority stress test on the helix S2 segment:
surge-thrust limit $U_{\max}$ swept. ``Surge sat.'' denotes the
clipped-horizon fraction.}
\label{tab:saturation}
\begin{adjustbox}{max width=\columnwidth}
\setlength{\tabcolsep}{4pt}\small
\begin{tabular}{@{}cccc@{}}
\toprule
$U_{\max}$ [N] & RMS $|p_e|$ [m] & Surge sat.\ [\%] & Fin sat.\ [\%]\\
\midrule
500 (nominal) & 0.241 &  17.2 & 0.0\\
400           & 2.724 &  61.6 & 0.0\\
300           & 7.484 & 100.0 & 0.0\\
\bottomrule
\end{tabular}
\end{adjustbox}
\end{table}

\subsubsection{Operating envelope: current magnitude and direction sweep}
Sweeping the current magnitude $\|\bm V_c^n\|\in[0,0.8]$~m/s and
heading $\beta_c\in[0^\circ,360^\circ)$ on the descending helix
(Supplementary Material, Sec.~S4, Fig.~S2, Table~S3) reveals three
regimes: accurate tracking with negligible saturation for
$\|\bm V_c^n\|\lesssim0.3$~m/s; heading-dependent saturation onset
for $0.4$--$0.5$~m/s; and an authority-limited regime
($\mathrm{RMS}|p_e|=2.0$--$4.9$~m) for $\gtrsim0.6$~m/s. The practical
performance limit is set by surge-thrust authority rather than by
observer convergence or LPV-correction performance.

\subsubsection{Direct LMI realisation versus correction-layer realisation}
A four-way realisation comparison on the post-transient helix segment
of S2 (Supplementary Material, Sec.~S5, Table~S4) confirms that the
LMI gain $K(\bm\rho)$ is most effective as an \emph{additive
correction} around the pre-stabilised cascade of~\eqref{eq:tau_exec}:
cascade-with-correction lowers post-transient RMS-$|p_e|$ from
$0.241$~m to $0.216$--$0.225$~m, whereas a direct LMI realisation
without cascade is markedly more conservative ($1.894$~m). The LMI
gain is thus not a standalone controller but a robust correction layer.

\subsubsection{$\Hcal_\infty$ dissipation diagnostic on the certified LPV model}
The bounded-real certificate of Theorem~\ref{thm:lmi} is verified
along the simulated embedded LPV trajectory in the Supplementary
Material (Sec.~S6): the dissipation residual
$D(t)=\dot V_e+\bm z^{\!\top}\bm z-\gamma_{\rm LMI}^2\bm w_{\rm aug}^{\!\top}\bm w_{\rm aug}$
satisfies $D(t)\le0$ for $100\%$ of samples
($\max_t D(t)=-1.93\times10^{-2}$) on the certified linear model,
while the saturated nonlinear trajectories provide engineering evidence
beyond the embedded-model certificate of Theorem~\ref{thm:cl}.

\subsection{Independent baseline comparison}\label{subsec:comparison}

The proposed architecture is compared with three independent
baselines under matched plant, reference, actuator limits, sensor
noise, and current scenarios. The no-observer geometric backbone is
reported as an \emph{EMO-cascade baseline inspired by}
\citet{li2023trajectory} rather than a direct reproduction, since
the present simulations use the REMUS hydrodynamic model with
harmonised saturation limits. The baseline corresponds to the
no-observer row of Table~\ref{tab:ablation}
(RMS-$p_e=4.038$~m on S2).

Four controllers are compared:
\textbf{(BB)} an EMO-cascade baseline inspired by
\citet{li2023trajectory} without a current observer
($\hat{\bm\nu}_c=\bm 0$);
\textbf{(DOB)} the same cascade with an additive disturbance observer
for the lumped current-induced surge force, not fed into the LPV
schedule;
\textbf{(LPV-abs)} the proposed observer-assisted feedforward and
LPV-$\Hcal_\infty$ correction with the schedule at $\bm\nu$; and
\textbf{(LPV-rel)} the full proposed architecture with the schedule at
$\hat{\bm\nu}_r$.

Table~\ref{tab:comparison} separates three effects. \emph{(i)
Combined observer+LPV contribution.} Relative to (BB), (LPV-rel)
reduces RMS tracking error from $4.038$~m to $0.241$~m on S2 and
from $5.514$~m to $1.060$~m on S4 ($\approx\!17\times$ and
$\approx\!5\times$); the bulk reflects introducing current
estimation at all. \emph{(ii) Marginal LPV contribution.} Against
(DOB), the LPV correction reduces S4 RMS from $1.484$~m to
$1.060$~m ($\approx\!28\%$), quantifying the benefit of robust
correction beyond lumped disturbance compensation.
\emph{(iii) Relative vs.\ absolute scheduling.} (LPV-abs) and
(LPV-rel) differ by $<10^{-3}$~m in the nominal regime
(cf.\ Section~\ref{subsec:tracking}); separation appears only under
reduced feedback authority (Section~\ref{subsec:robust}). These trends
are summarised in Fig.~\ref{fig:comparison}, which contrasts the RMS
tracking error and the hydrodynamic-residual reduction of the four
controllers on the nominal (S2) and adverse (S4) scenarios.

\begin{table*}[!t]
\centering
\caption{Independent baseline comparison on the helix trajectory
(harmonised plant, reference, actuator limits, sensor noise, and
current scenarios); the two LPV rows share the relative-velocity
feedforward and differ only in the scheduling argument. RMS on
window~W1.}
\label{tab:comparison}
\begin{adjustbox}{max width=\textwidth}
\setlength{\tabcolsep}{2.5pt}\small
\begin{tabular}{@{}llccccc@{}}
\toprule
\textbf{Sc.} & \textbf{Controller} & RMS $|p_e|$ [m] &
Max $|p_e|$ [m] & HRR [\%] & Sat.\ [\%] & $J_\tau$ [$10^7$]\\
\midrule
\multirow{4}{*}{S2}
& EMO-cascade (BB)                  & 4.038 & 6.257 & 0.0  & 29.5 & 2.858\\
& Additive DOB                      & 0.523 & 0.846 & 99.5 & 6.4  & 2.555\\
& LPV abs-sched (rel FF)            & 0.241 & 0.311 & 99.5 & 17.2 & 2.595\\
& LPV rel-sched (proposed, rel FF)  & 0.241 & 0.310 & 99.5 & 17.2 & 2.595\\
\midrule
\multirow{4}{*}{S4}
& EMO-cascade (BB)                  & 5.514 & 8.705 & 0.0  & 36.5 & 3.151\\
& Additive DOB                      & 1.484 & 2.896 & 99.5 & 35.3 & 2.670\\
& LPV abs-sched (rel FF)            & 1.061 & 2.404 & 99.5 & 39.0 & 2.686\\
& LPV rel-sched (proposed, rel FF)  & 1.060 & 2.402 & 99.5 & 39.0 & 2.685\\
\bottomrule
\multicolumn{7}{l}{\footnotesize $J_\tau=\int\|\bm\tau\|^2dt$ in $10^7$, summed across actuators (mixed N$^2$s and N$^2$m$^2$s units).}\\
\end{tabular}
\end{adjustbox}
\end{table*}

\begin{figure}[!htbp]
\centering
\includegraphics[width=1\columnwidth]{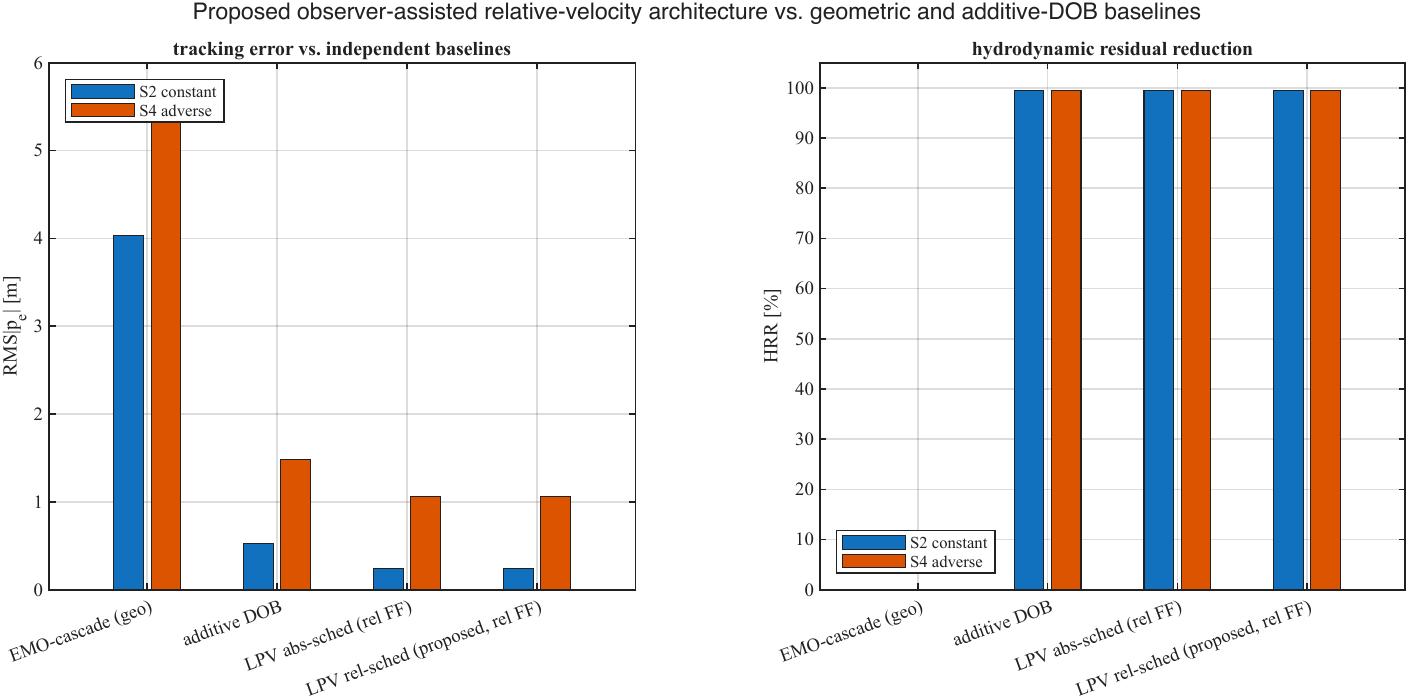}

\caption{Baseline comparison on S2 and S4: EMO-cascade,
additive-DOB, LPV-absolute, and proposed LPV-relative.}
\label{fig:comparison}
\end{figure}
\subsection{Synthesis of findings and certificate scope}
\label{subsec:results_discussion}
The numerical study leads to four observations. \emph{(i)~The
observer-assisted relative-velocity feedforward is the dominant
performance mechanism.} The five-way ablation
(Table~\ref{tab:ablation}) and the FF--scheduling disentanglement
(Table~\ref{tab:ffsched}) show that removing the observer or
replacing the relative-velocity feedforward by an absolute-velocity
one drives the tracking error to the no-observer order, confirming
that the main current rejection occurs through hydrodynamic
compensation at $\hat{\bm\nu}_r$ rather than through LPV scheduling.

\emph{(ii)~The LPV-$\Hcal_\infty$ correction is a certified
robustness mechanism, not the main nominal performance source.}
In the well-tuned regime, relative and absolute scheduling are
indistinguishable at the tracking level because the closed-loop
disturbance-to-error gain is small and the residual-level benefit
is masked; under reduced feedback authority the scheduling-side
benefit becomes visible, reaching $9.5\%$ RMS improvement at
one-fifth nominal feedback gain, consistent with
Theorem~\ref{thm:cl}.

\emph{(iii)~The formal certificate and the nonlinear implementation
have different scopes.} Theorems~\ref{thm:lmi}
and~\ref{thm:cl} certify the embedded LPV model under the
unsaturated-cone, embedding-margin, and observer-excitation
conditions; the nonlinear REMUS simulations additionally include
actuator saturation, sensor noise, current variation, and model
uncertainty. The observed bounded response is therefore engineering
validation beyond the local certificate, not a proof of global
stability for the saturated nonlinear AUV.

\emph{(iv)~The practical limit is actuator authority.} The operating
envelope (Fig.~S2 of the Supplementary Material,
Table~S3 of the Supplementary Material) shows accurate tracking for
moderate currents and degradation for larger currents driven by
surge-thrust saturation rather than by observer or LPV-correction
failure; an anti-windup or sector-bounded saturation extension is
the most direct path to a saturated formal guarantee. The grid-certified
embedding residual on the tested polytope grid is essentially zero,
but the conservative Lipschitz-corrected uniform bound does not
satisfy the analytical embedding margin; a tighter continuous-polytope
verification via sum-of-squares or branch-and-bound certification,
or a multiplier-based formulation using Petersen's lemma, would
strengthen the certificate without changing the implemented
architecture.

%==============================================================
\section{Conclusion}\label{sec:conclusion}
%==============================================================
This paper developed an observer-assisted relative-velocity control
architecture with an LPV-$\Hcal_\infty$ robust correction layer for
3D trajectory tracking of underactuated non-minimum-phase AUVs under
unknown ocean currents. The method combines a spherical-coordinate
EMO-based geometric backbone, a three-stage state--current observer,
a nonlinear feedforward at the estimated relative velocity, and a
scheduled LPV-$\Hcal_\infty$ correction layer synthesised on the
pre-stabilised embedded LPV error model. The main theoretical
results are a directional-Gramian excitation condition for transverse
current recovery; a residual-level break-even law showing that the
effective surge disturbance scales with current-estimation error
rather than the current itself; a convex LPV-$\Hcal_\infty$ synthesis
enabled by the constant input matrix from feedback-linearising
cancellation; and a local practical-UUB certificate under the stated
excitation, embedding-margin, small-gain, and unsaturated-cone
assumptions.

REMUS simulations show that the observer reduces the
current-estimation residual by $89$--$96\%$ and the relative-velocity
compensation reduces the translational hydrodynamic residual by
about $99\%$. On the constant-current helix, removing the observer
raises RMS tracking error from $0.241$~m to $4.038$~m, confirming
that the observer-assisted relative-velocity feedforward is the
dominant performance mechanism. The LPV-$\Hcal_\infty$ layer provides
the certified robustness mechanism and reveals a scheduling-side
benefit under reduced feedback authority, reaching $9.5\%$ RMS
improvement at one-fifth nominal feedback gain. Against a harmonised
EMO-cascade baseline and an additive disturbance-observer
controller, the proposed architecture reduces RMS error from
$4.038$~m to $0.241$~m on the constant-current case and from
$5.514$~m to $1.060$~m on the adverse-current case.

The formal guarantees are local and conditional, applying to the
embedded LPV model under the stated observer-excitation,
embedding-margin, small-gain, and unsaturated-cone conditions; the
nonlinear saturated simulations are engineering validation beyond
the formal certificate, not a global stability proof. Future work
will address anti-windup certification, tighter continuous-polytope
embedding verification (sum-of-squares or branch-and-bound), and
hardware-in-the-loop or in-water validation under realistic sensor
and actuator limitations.

\section*{CRediT authorship contribution statement}
\textbf{Mohammad Sabouri:} Conceptualization, Methodology, Software,
Formal analysis, Investigation, Validation, Visualization,
Writing -- original draft, Writing -- review \& editing.

\section*{Declaration of competing interest}
The author declares that he has no known competing financial interests
or personal relationships that could have appeared to influence the
work reported in this paper.

\section*{Funding}
This research did not receive any specific grant from funding agencies
in the public, commercial, or not-for-profit sectors.

\section*{Acknowledgements}
The author acknowledges the use of the REMUS AUV hydrodynamic model
parameters reported in \citep{prestero2001verification}.

\section*{Data availability}
The MATLAB scripts used to generate the simulation results are available
from the corresponding author upon reasonable request.

\section*{Declaration of generative AI and AI-assisted technologies in
the manuscript preparation process}
During the preparation of this work the author used a large language
model (LLM)-based assistant to help with drafting, condensing, and
reorganizing the manuscript text, including figure and table
descriptions, and QuillBot for grammar and readability editing. No AI
tool was used to generate research findings, mathematical derivations or
proofs, simulation code, or numerical results, all of which are the
author's own work. After using these tools, the author reviewed and
edited all content and takes full responsibility for the content of the
published article.

\bibliography{references}

@book{fossen2011handbook,
  author    = {Fossen, Thor I.},
  title     = {Handbook of Marine Craft Hydrodynamics and Motion Control},
  publisher = {John Wiley \& Sons},
  year      = {2011}
}

@article{wynn2014autonomous,
  author  = {Wynn, Russell B. and Huvenne, Veerle A. I. and Le Bas, Timothy P. and others},
  title   = {Autonomous underwater vehicles ({AUVs}): Their past, present and future contributions to the advancement of marine geoscience},
  journal = {Marine Geology},
  volume  = {352},
  pages   = {451--468},
  year    = {2014}
}

@mastersthesis{prestero2001verification,
  author = {Prestero, Timothy},
  title  = {Verification of a Six-Degree of Freedom Simulation Model for the {REMUS} {AUV}},
  school = {Massachusetts Institute of Technology},
  year   = {2001}
}

@inproceedings{godhavn1996nonlinear,
  author    = {Godhavn, Jan-Morten},
  title     = {Nonlinear tracking of underactuated surface vessels},
  booktitle = {Proceedings of the 35th IEEE Conference on Decision and Control (CDC)},
  pages     = {975--980},
  year      = {1996}
}

@article{li2023trajectory,
  author  = {Li, Ji-Hong},
  title   = {{3D} trajectory tracking of underactuated non-minimum phase underwater vehicles},
  journal = {Automatica},
  volume  = {155},
  pages   = {111149},
  year    = {2023}
}

@article{do2002underactuated,
  author  = {Do, Khac Duc and Jiang, Zhong-Ping and Pan, Jie},
  title   = {Underactuated ship global tracking under relaxed conditions},
  journal = {IEEE Transactions on Automatic Control},
  volume  = {47},
  number  = {9},
  pages   = {1529--1536},
  year    = {2002}
}

@article{jiang2002global,
  author  = {Jiang, Zhong-Ping},
  title   = {Global tracking control of underactuated ships by {Lyapunov's} direct method},
  journal = {Automatica},
  volume  = {38},
  number  = {2},
  pages   = {301--309},
  year    = {2002}
}

@article{ashrafiuon2008sliding,
  author  = {Ashrafiuon, Hashem and Muske, Kenneth R. and McNinch, Lucas C. and Soltan, Reza A.},
  title   = {Sliding-mode tracking control of surface vessels},
  journal = {IEEE Transactions on Industrial Electronics},
  volume  = {55},
  number  = {11},
  pages   = {4004--4012},
  year    = {2008}
}

@article{ahmed2023survey,
  author  = {Ahmed, Faheem and Xiang, Xianbo and Jiang, Chaicheng and Xiang, Gong and Yang, Shaolong},
  title   = {Survey on traditional and {AI}-based estimation techniques for hydrodynamic coefficients of autonomous underwater vehicle},
  journal = {Ocean Engineering},
  volume  = {268},
  pages   = {113300},
  year    = {2023}
}

@article{wang2019command,
  author  = {Wang, Jinqiang and Wang, Cong and Wei, Yingjie and Zhang, Chengjun},
  title   = {Command-filter-based adaptive neural trajectory tracking control of an underactuated underwater vehicle in three-dimensional space},
  journal = {Ocean Engineering},
  volume  = {180},
  pages   = {175--186},
  year    = {2019}
}

@inproceedings{li2020neural,
  author    = {Li, Ji-Hong and Lee, Mun-Jik and Kang, Jin and Kim, Min-Gyu and Cho, Gun-Rae},
  title     = {Neural-net based robust adaptive control for {3D} path following of torpedo-type {AUVs}},
  booktitle = {Proceedings of the 59th IEEE Conference on Decision and Control (CDC)},
  pages     = {5261--5266},
  year      = {2020}
}

@article{yu2019globally,
  author  = {Yu, Haomiao and Guo, Chen and Yan, Zheping},
  title   = {Globally finite-time stable three-dimensional trajectory-tracking control of underactuated {UUVs}},
  journal = {Ocean Engineering},
  volume  = {189},
  pages   = {106329},
  year    = {2019}
}

@article{sun2024prescribed,
  author  = {Sun, Yushan and Liu, Minghao and Qin, Hongde and Wang, Hongjian and Ding, Zhiyuan},
  title   = {Full prescribed performance trajectory tracking control strategy of autonomous underwater vehicle with disturbance observer},
  journal = {ISA Transactions},
  volume  = {151},
  pages   = {117--130},
  year    = {2024}
}

@article{chen2024prescribed,
  author  = {Chen, Guangshu and Dong, Jiuxiang},
  title   = {Approximate optimal adaptive prescribed-performance fault-tolerant control for autonomous underwater vehicle based on self-organizing neural networks},
  journal = {IEEE Transactions on Vehicular Technology},
  volume  = {73},
  number  = {7},
  pages   = {9776--9785},
  year    = {2024}
}

@article{guerrero2020adaptive,
  author  = {Guerrero, Jesus and Torres, Jorge and Creuze, Vincent and Chemori, Ahmed},
  title   = {Adaptive disturbance observer for trajectory tracking control of underwater vehicles},
  journal = {Ocean Engineering},
  volume  = {200},
  pages   = {107080},
  year    = {2020}
}

@article{he2024nonlinear,
  author  = {He, Long and Zhang, Yun and Fan, Guangwei and Liu, Yong and Wang, Xin and Yuan, Zhao},
  title   = {Three-dimensional path following control of underactuated {AUV} based on nonlinear disturbance observer and adaptive line-of-sight guidance},
  journal = {IEEE Access},
  volume  = {12},
  pages   = {83911--83924},
  year    = {2024}
}

@article{lei2025constrained,
  author  = {Lei, Qian and Ding, Shuai and others},
  title   = {Constrained control of autonomous underwater gliders based on disturbance estimation and tracking back calculation},
  journal = {International Journal of Robust and Nonlinear Control},
  volume  = {35},
  year    = {2025}
}

@article{du2023improved,
  author  = {Du, Peng and Yang, Wenlong and Wang, Yuhang and Hu, Rui and Chen, Ying and Huang, Shuling H.},
  title   = {Improved indirect adaptive line-of-sight guidance law for path following of under-actuated {AUV} subject to big ocean currents},
  journal = {Ocean Engineering},
  volume  = {282},
  pages   = {114907},
  year    = {2023}
}

@article{song2024cascaded,
  author  = {Song, Shuai and Liu, Zhe and Yuan, Shuai and Wang, Zhuo},
  title   = {Cascaded extended state observers-based fixed-time line-of-sight path following control for unmanned surface vessels with disturbances and saturation},
  journal = {IEEE Transactions on Vehicular Technology},
  volume  = {73},
  pages   = {7733--7747},
  year    = {2024}
}

@article{du2025improved,
  author  = {Du, Xin and Han, Chao and Wang, Yuhang and Chen, Ying and Huang, Shuling H.},
  title   = {An improved {ESO}-based line-of-sight guidance law for path following of underactuated autonomous underwater helicopter with nonlinear tracking differentiator and anti-saturation controller},
  journal = {Ocean Engineering},
  volume  = {320},
  pages   = {120226},
  year    = {2025}
}

@article{yuan2025geometric,
  author  = {Yuan, Cheng and Shuai, Changgeng and Zhang, Zhen and Li, Bo and Cheng, Yi and Ma, Jian},
  title   = {Geometric line-of-sight guidance law with exponential switching sliding mode control for marine vehicles' path following},
  journal = {Frontiers in Robotics and AI},
  volume  = {12},
  pages   = {1598982},
  year    = {2025}
}

@article{jimoh2024tube,
  author  = {Jimoh, Isah A. and Yue, Hong and Grimble, Michael J.},
  title   = {Tube-based model predictive control of an autonomous underwater vehicle},
  journal = {Ocean Engineering},
  volume  = {307},
  pages   = {118130},
  year    = {2024}
}

@article{ning2024event,
  author  = {Ning, Tong and Wang, Yuanhui and Liu, Lei and Li, Tieshan},
  title   = {Disturbance-observer-based adaptive heading control for unmanned marine vehicles with event-triggered and input quantization},
  journal = {International Journal of Robust and Nonlinear Control},
  volume  = {34},
  number  = {17},
  pages   = {11469--11486},
  year    = {2024}
}

@article{kim2018current,
  author  = {Kim, Euntai and Fan, Shuangshuang and Bose, Neil},
  title   = {Estimating water current velocities by using a model-based high-gain observer for an {AUV}},
  journal = {IEEE Access},
  volume  = {6},
  pages   = {70259--70271},
  year    = {2018}
}

@article{kim2020current,
  author  = {Kim, Euntai and Fan, Shuangshuang and Bose, Neil},
  title   = {Current estimation and path following for an {AUV} by using a high-gain observer based on an {AUV} dynamic model},
  journal = {International Journal of Control, Automation and Systems},
  volume  = {18},
  pages   = {478--489},
  year    = {2020}
}

@inproceedings{refsnes2007model,
  author    = {Refsnes, Jon E. and S{\o}rensen, Asgeir J. and Pettersen, Kristin Y.},
  title     = {A model-based ocean current observer for {6DOF} underwater vehicles},
  booktitle = {Proceedings of the IFAC Conference on Control Applications in Marine Systems},
  pages     = {47--52},
  year      = {2007}
}

@article{shamma1990analysis,
  author  = {Shamma, Jeff S. and Athans, Michael},
  title   = {Analysis of gain scheduled control for nonlinear plants},
  journal = {IEEE Transactions on Automatic Control},
  volume  = {35},
  number  = {8},
  pages   = {898--907},
  year    = {1990}
}

@article{apkarian1995self,
  author  = {Apkarian, Pierre and Gahinet, Pascal and Becker, Greg},
  title   = {Self-scheduled {$\mathcal{H}_\infty$} control of linear parameter-varying systems: A design example},
  journal = {Automatica},
  volume  = {31},
  number  = {9},
  pages   = {1251--1261},
  year    = {1995}
}

@article{wu1995induced,
  author  = {Wu, Fen and Yang, Xin Hua and Packard, Andrew and Becker, Greg},
  title   = {Induced {$\mathcal{L}_2$}-norm control for {LPV} systems with bounded parameter variation rates},
  journal = {International Journal of Robust and Nonlinear Control},
  volume  = {6},
  number  = {9-10},
  pages   = {983--998},
  year    = {1996}
}

@techreport{scherer2000linear,
  author      = {Scherer, Carsten and Weiland, Siep},
  title       = {Linear Matrix Inequalities in Control},
  institution = {Dutch Institute of Systems and Control (DISC)},
  type        = {Lecture Notes},
  year        = {2000}
}

@article{rober2020lpv,
  author  = {Rober, Niklas and Johansen, Martin H. and Johansen, Tor A.},
  title   = {{LPV} model predictive control for marine surface vessels},
  journal = {Ocean Engineering},
  volume  = {209},
  pages   = {107382},
  year    = {2020}
}

@article{silvestre2007depth,
  author  = {Silvestre, Carlos and Pascoal, Ant{\'o}nio},
  title   = {Depth control of the {INFANTE} {AUV} using gain-scheduled reduced order output feedback},
  journal = {Control Engineering Practice},
  volume  = {15},
  number  = {7},
  pages   = {883--895},
  year    = {2007}
}

@article{borkowski2012lpv,
  author  = {Borkowski, Piotr},
  title   = {{LPV} control of ship course keeping},
  journal = {Applied Mathematics and Computer Science},
  volume  = {22},
  number  = {2},
  pages   = {445--456},
  year    = {2012}
}

@book{khalil2002nonlinear,
  author    = {Khalil, Hassan K.},
  title     = {Nonlinear Systems},
  edition   = {3rd},
  publisher = {Prentice Hall},
  year      = {2002}
}

@book{boyd1994lmi,
  author    = {Boyd, Stephen and El Ghaoui, Laurent and Feron, Eric and Balakrishnan, Venkataramanan},
  title     = {Linear Matrix Inequalities in System and Control Theory},
  publisher = {SIAM},
  year      = {1994}
}

@article{petersen1987stabilization,
  author  = {Petersen, Ian R.},
  title   = {A stabilization algorithm for a class of uncertain linear systems},
  journal = {Systems \& Control Letters},
  volume  = {8},
  number  = {4},
  pages   = {351--357},
  year    = {1987}
}

@inproceedings{lofberg2004yalmip,
  author    = {L{\"o}fberg, Johan},
  title     = {{YALMIP}: A toolbox for modeling and optimization in {MATLAB}},
  booktitle = {Proceedings of the CACSD Conference},
  pages     = {284--289},
  year      = {2004}
}

@inproceedings{sabouri2026kinematic,
  author       = {Sabouri, Mohammad and Indiveri, Giovanni},
  title        = {Kinematic {3D} Path Following Controller for an Underactuated Underwater Robot Subject to a Constant Unknown Current},
  booktitle    = {Proceedings of the 34th Mediterranean Conference on Control and Automation (MED)},
  pages        = {914--919},
  year         = {2026},
  organization = {IEEE}
}
\end{document}